\date{19.2.2024} 
\newcommand{\sH}{{\sf H}}
\renewcommand\phi{\varphi}
\newcommand{\bb}{\mathbf{b}}
\newcommand{\bd}{\mathbf{d}}
\renewcommand{\mlabel}{\label}
\begin{document} 

\title{Covariant  projective representations of
  Hilbert--Lie groups \\  
  } 

\author{Karl-Hermann Neeb\footnote{Friedrich-Alexander-Universit\"at Erlangen-N\"urnberg, Erlangen, Germany; neeb@math.fau.de} 
\  and Francesco G. Russo\footnote{University of Cape Town,
    Cape Town, South Africa \endgraf \hspace{0.3cm}and  University of the Western Cape,    Bellville, South Africa; francescog.russo@yahoo.com}}

\maketitle
\begin{abstract}
  Hilbert--Lie groups are Lie groups whose Lie algebra is a
    real Hilbert space whose scalar product is invariant under the
    adjoint action. These infinite-dimensional Lie groups are the
    closest relatives to compact Lie groups. In this paper we study
    unitary representations of these groups from various perspectives.

    First, we address norm-continuous, also called bounded, representations.
    These are well-known for simple groups, but 
    the general picture is more complicated. Our first main result
    is a characterization of the discrete decomposability of all bounded
    representations in terms of boundedness of the set of coroots.
    We also show that bounded representations of type II and III exist
    if the set of coroots is unbounded. 

    Second, we use covariance with respect to a one-parameter group of
    automorphisms to implement some regularity. 
    Here we develop some perturbation theory based on half Lie groups
    that reduces matters to 
    the case where a ``maximal torus'' is fixed, so that compatible weight
    decompositions can be studied. 

    Third, we extend the context to projective representations which are
    covariant for a one-parameter group of automorphisms. Here important
    families of representations arise from ``bounded extremal weights'',
    and for these the corresponding central extensions can be determined
    explicitly, together with all one-parameter groups for which a covariant
    extension exists.\\[1mm]
    {\em Keywords:} Hilbert--Lie group, unitary representation,
    semibounded representation, infinite-di\-men\-sional Lie group,
    half Lie group\\[1mm]
    {\em MSC 2020:} 22E65, 22E45, 22E10, 22E60.  

\end{abstract}

\newpage

\tableofcontents 

\newpage

\section{Introduction}
\mlabel{sec:0}

We call a real Lie algebra $\fg$ endowed with an invariant 
positive definite form turning it into a Hilbert space a 
{\it Hilbert--Lie algebra}. Accordingly, we call a 
Lie group $G$ whose Lie algebra $\fg$ is a Hilbert--Lie algebra 
a {\it Hilbert--Lie group}. Hilbert--Lie groups are the closest 
infinite-dimensional relatives of compact Lie groups.
To understand their unitary representation theory thus 
is a natural building block in the unitary representation
theory of infinite-dimensional Lie groups.
The present work addresses unitary representations of
Hilbert--Lie groups from a rather general perspective.

The simple infinite-dimensional Hilbert--Lie algebras
(all closed ideals are trivial) 
are of the form $\fg = \fu_2(\sH)$ (skew-adjoint Hilbert--Schmidt 
operators), where $\sH$ is an infinite-dimensional real, complex 
or quaternionic Hilbert space (\cite{Sch60}). 
This generalizes the well-known simple compact Lie algebras 
$\fo(n)$, $\fu(n)$ and $\sp(n)$.  
If $B_2(\sH)$ denotes the ideal of Hilbert--Schmidt 
operators on $\sH$, then one obtains the associated Lie groups  
$\U_2(\sH) = \U(\sH) \cap (\1 + B_2(\sH))$, consisting of unitary
operators that are Hilbert--Schmidt perturbations of the identity. 
Hilbert--Lie algebras are orthogonal direct sums of their 
center and their simple ideals (\cite[\S 1.2, Thm.~1]{Sch60}).
We call a Hilbert--Lie algebra {\it semisimple} 
if its center is trivial. 

Section~\ref{sec:2} recalls the cornerstones of the
structure of Hilbert--Lie algebras and the classification of the
simple ones, due to Schue \cite{Sch60, Sch61}. 
It thus provides a solid
foundation for the representation theory of Hilbert--Lie groups. 
To understand one-parameter groups of automorphisms
of Hilbert--Lie groups,
it is instrumental to know that they preserve all simple ideals
(Lemma~\ref{lem:pres-ideals}).  We also recall the root decomposition of
simple infinite-dimensional Hilbert--Lie algebras
because the boundedness of the set
$\Delta^\vee$ of coroots plays a key role for bounded representations
(Theorem~\ref{thm:5.7}). 

The main results of this paper are contained in Sections~\ref{sec:bounded},
\ref{sec:4} and \ref{sec:5}. Each of these sections is devoted to a specific class
of representations: the bounded (=norm continuous) ones
(Section~\ref{sec:bounded}),
those that are covariant for one-parameter groups
of automorphisms (Section~\ref{sec:4}), 
and projective  representations with
suitable semiboundedness properties (Section~\ref{sec:5}). 


We start in Section~\ref{sec:bounded} with bounded
representations. 
For compact groups, it is well-known that 
their unitary representations are direct sums of irreducible 
ones and that the irreducible representations are finite-dimensional. 
For Hilbert--Lie groups the situation is more complicated.
However, the {\it bounded} representations 
are the most well-behaved. They are defined by 
$\pi$ to be continuous with respect to the norm topology on~$\U(\cH)$. 
By basic Lie theory, they correspond to
Lie algebra homomorphisms $\g \to \fu(\cH)$ for a complex
Hilbert space~$\cH$. A key result is
Theorem~\ref{thm:5.7}, which leads to a complete classification, 
provided the set $\Delta^\vee$ of coroots is bounded, 
which is the case for simple Hilbert--Lie algebras.
Then bounded representations are direct sums of irreducible ones, and these
can be classified in terms of extremal weights. 
This should in particular be compared with the corresponding
result for unitary representations of compact groups,
where the boundedness condition is not needed to arrive at the
same conclusion (\cite{HM06, HiNe12}).

For the group $\U_2(\sH)$, where $\sH \cong \ell^2(J,\C)$ is complex, 
the classification of bounded unitary representations  
was obtained in \cite{Ne98}, 
and the irreducible ones where parametrized by
``highest weights'' $\lambda = (\lambda_j)_{j \in J}$. They all arise
as irreducible subrepresentations of finite tensor products 
(\cite[Thm.~3.17]{Ne14b}, \cite{Ki73}, \cite{Ol78}, \cite{Bo80}).
In \cite{Ne12} these results have 
been extended to the real and quaternionic case. 
This shows that the bounded representations 
of simple Hilbert--Lie groups resemble the representations 
of compact groups. 
This analogy can even be extended to a 
Borel--Weil theory for coadjoint orbits of the group $\U_2(\sH)$, 
which only produces the bounded irreducible representations 
(\cite{Bo80}; see also \cite{Ne04}). In \cite[Thms.~5.2, 5.5]{Bo80} Boyer 
also constructs factorial representation 
of $\U_2(\sH)$ of type II$_1$ and type III, respectively. Here we
refer to von Neumann algebras of types I, II and III  as per  \cite{Sa71}).
This shows that, unlike the situation 
for compact groups,  the (unbounded) unitary representation 
theory of $\U_2(\sH)$ is not type I. 

In the light of Boyer's result, it is interesting that,
for semisimple Hilbert--Lie algebras with unbounded coroot systems,
bounded representations of type II and III also exist
(Theorem~\ref{thm:5.7b}).
As a consequence, a classification of the
irreducible ones is not a well-posed problem 
(\cite{Gli60, Sa67}). This provides a rather clear picture of the bounded
unitary representations of a general Hilbert--Lie group.

To go beyond bounded representations, we restrict our attention
to the simple Hilbert--Lie algebra 
$\g = \fu_2(\sH)$, where $\sH$ is a separable infinite-dimensional
complex Hilbert space, and the corresponding group~$G = \U_2(\sH)$. In Section~\ref{sec:4} we ask
which restrictions it imposes
for a unitary representation $(\pi, \cH)$ of~$G$ 
to be extendable to a representation of 
$G \rtimes_\alpha \R$ for a one-parameter group
$(\alpha_t)_{t \in \R}$ of automorphisms of~$G$.
For any such $\alpha$, there exists a selfadjoint operator
$H$ on $\sH$ with
\[ \alpha_t(g) = e^{itH} g e^{-itH} \quad \mbox{ for }\quad
  t\in \R, g \in G\]
(Theorem~\ref{thm:aut-grp}). 
Therefore the problem is non-trivial if and only if
$H$ is not a Hilbert--Schmidt operator.

In Subsection~\ref{subsec:WvN}  we prove a rather general 
Perturbation Theorem (Theorem~\ref{thm:4.4})
that can in particular be used to reduce questions concerning
covariance of unitary representations with respect to some $\alpha$
to  the case where $H$ is diagonalizable
(Corollary~\ref{cor:5.3}). This observation is based on 
a Weyl--von Neumann decomposition $H = H_d + H_{\rm HS}$, where
$H_d$ is diagonalizable and $H_{\rm HS}$ is Hilbert--Schmidt.
We may  assume henceforth that $H$ is diagonalizable,
which implies that it fixes a maximal abelian subalgebra
$\ft \subeq \g$ pointwise. We write
  \[ T := \exp \ft \subeq G \]
  for the corresponding ``maximal torus'', an abelian connected
  Banach--Lie group. 
Then we show that any factorial unitary representation
$(\pi,\cH)$ that is a {\it $T$-weight representation},
i.e., has a basis of $T$-eigenvectors,
extends to $G \rtimes_\alpha \R$ (Proposition~\ref{prop:4.11}).
As these results apply equally well to projective representations,
we briefly discuss in Subsection~\ref{subsec:centext} some techniques
for the identification of the corresponding cocycles. An important
feature of Hilbert--Lie algebras is that the invariant scalar product
provides a bijection between (bounded) derivations and $2$-cocycles. 

For representations of $G \rtimes_\alpha \R$,  we can ask for the existence
of ground states, as 
recently studied in \cite{MN16, NR24},
but also investigate the semiboundedness of the extended representation,
or conditions
for $U_t := \hat\pi(e,t)$ to have spectrum which is bounded from below.
To this end, we need to understand the structure of the subgroup 
$G^0 = \Fix(\alpha)$ of $\alpha$-fixed points in $G$,
which is described in Subsection~\ref{subsec:galpha}.
Moreover, we need to understand which representations
of $G^0$ can possibly arise on the minimal $U$-eigenspace. 
As $G$ contains a dense subgroup $G_{\rm alg} \cong \U_\infty(\C)$,
which is a direct
limit of compact Lie groups, the techniques developed in
\cite{NR24} provide important necessary conditions for this to be
the case (Theorem~\ref{thm:4.20} in Subsection~\ref{subsec:4.4}).

In Subsection~\ref{rem:factorrep}, we discuss a class of 
representations which is a very special case of a 
construction that has been used in \cite{Bo80} to show that
$\U_2(\sH)$ has factor representations of type II$_1$. 
In our context, this provides representations of $\U_2(\sH)$ 
that are covariant for the full automorphism group, 
but contain no non-zero $T$-eigenvector.  


The next larger class to bounded representations
are the semibounded ones,
whose theory has been developed 
in  \cite{Ne08, Ne10b, Ne12, Ne14, NZ13}.
Since all finite-dimensional continuous unitary representations are bounded 
and most of the unitary representations appearing in physics are semibounded 
(see \cite{DA68, PS86, Ne10b, KR87}),
semiboundedness is a natural regularity
condition for unitary representations of infinite-dimensional Lie groups. 

Theorem~\ref{theoremofreconstruction} 
asserts that semibounded unitary representations of connected
Hilbert--Lie groups decompose discretely into bounded ones,
hence provide nothing new beyond what we have seen already in
Section~\ref{sec:bounded}.
This is complemented by Theorem \ref{thm:b.6}, which,
for semibounded projective representations, shows
that they can all be lifted to representations of the simply connected
covering group. Its proof involves the remarkable Bruhat--Tits Fixed
Point Theorem (see Appendix \ref{app:b}).

As a consequence, enlarging our context to projective semibounded
  representations of $G$ does not lead to  new representations,
but projective representations of the groups
  $G \rtimes_\alpha \R$ for which the spectrum of the one-parameter group
  $\pi(e,t)$ is bounded below are far more interesting.
  Here non-trivial central $\T$-extensions appear.
  Luckily, the Hilbert space structure of~$\fg$ is quite helpful to
  represent continuous Lie algebra cocycles by derivations.

As we go along, we learn that it is too restrictive
to consider $\U_2(\sH)$ as a
Lie group, in particular, when we consider
one-parameter groups $(\alpha_t)_{t \in \R}$ of automorphisms
that act only continuously and not smoothly, i.e., their
infinitesimal generator is unbounded.
This is taken into account in
Subsection \ref{subsec:4.3}, where we show that the existence of a
unitary representation of  $\U_2(\sH) \rtimes_\alpha \R$, 
whose restriction to the Lie group
$\U_2(\sH)^\infty \rtimes_\alpha \R$
is semibounded, requires the infinitesimal generator $i\bd$ of $\alpha$
to be semibounded as an operator on $\sH$ 
(Corollary~\ref{cor:5.8}). This result builds on some fine analysis
of the coadjoint action of the semidirect product Lie group
(Theorem \ref{thm:5.8}).
If $\bd$ is unbounded, then
the topological group $\U_2(\sH) \rtimes_\alpha \R$ is not a
Lie group, but it is a smooth Banach manifold and even a
half Lie group in the sense of \cite{MN18}, a fact 
that has already been exploited in the proof of
Theorem~\ref{thm:4.4}. Then
$\U_2(\sH)^\infty \rtimes_\alpha \R$ is the Lie group
of all elements for which right multiplications are also smooth,
so that the passage between these two perspectives is well
captured by the half Lie group concept. 

Subsections \ref{subsec:traceclassgroups} and
\ref{subsec:5.4} are devoted to projective representations
of the restricted unitary group $\U_{\rm res}(\sH,D)\subeq \U(\sH)$
 for an operator $D$ on a Hilbert space $\sH$,
where the ``restriction'' is expressed in terms of $gDg^{-1}-D$ being
Hilbert--Schmidt. We recall in Theorem \ref{lem:7.7} (from \cite{Ne04})
that the extremal weight representation
of the trace class group $\U_1(\sH)$, corresponding to a bounded weight,
all extend to projective representations of restricted groups,
which contain in particular the group $\U_2(\sH)$. So we obtain
a large family of projective representations of this Hilbert--Lie group. 
In Theorem~\ref{lem:7.7} we show that the restricted groups are maximal
``covariance groups'' of these representations of $\U_1(\sH)$.
These results provide a new perspective and 
relevant information on significant constructions
which arise naturally in Quantum Field Theory
(see \cite{PS86, KR87, Ne10b}).

Although we formulated several challenging problems in the
main text, we also add a brief section on perspectives, where we discuss
some directions of research that target important tools to
analyze unitary representations of infinite-dimensional Lie groups.

\section{Hilbert--Lie groups and Lie algebras} 
\mlabel{sec:2}

This section presents the cornerstones of the
structure of Hilbert--Lie algebras and the classification of the
simple ones, due to Schue \cite{Sch60, Sch61}. 
We start in Subsection~\ref{subsec:2.1} with
Schue's Decomposition Theorem~\ref{thm:1.3}, providing a solid
foundation for the representation theory
of Hilbert--Lie groups. In Subsection~\ref{subsec:2.2},
we recall the automorphism groups of the simple Hilbert--Lie algebras
and show that connected groups of automorphisms preserve all simple ideals. 
The root decomposition for simple infinite-dimensional Hilbert--Lie algebras
are recalled in some detail because the length of coroots
plays a key role for bounded representations, as we shall see in
Theorem~\ref{thm:5.7}. 

\subsection{Preliminaries} 
\mlabel{subsec:2.1}

The following notions are widely used in \cite{dH72, HM06, Ne00, Sch60, Sch61}.

\begin{definition} (\cite[Definition 6.3]{HM06})  \label{def:1.1} 
 A {\it real Hilbert--Lie algebra}, or briefly a {\it  Hilbert--Lie algebra}, is a real Lie algebra  $\g$, 
 endowed with a continuous Lie bracket
$ [\cdot,\cdot] \: \mathfrak{g} \times \mathfrak{g} \to \mathfrak{g}$ 
 and a positive definite inner product $\langle \cdot , \cdot \rangle$,
 such that $\g$ is a Hilbert space and
 \[ \la [x,y],z \ra = \la x,[y,z]\ra \quad \mbox{ for } \quad x,y,z \in \g.\]
 An {\it isomorphism} between two Hilbert--Lie algebras is
 an isomorphism of Lie algebras which is a linear homeomorphism.
 If we want, in addition, that it is isometric, we shall always
 say so. 
\end{definition} 

Of course, we can also introduce complex Hilbert--Lie algebras, leaving unaltered Definition \ref{def:1.1} apart from the fact that the  Hilbert space
is complex; these are called   \textit{$L^*$--algebras}
by Schue in \cite{Sch60, Sch61}. In view of
\cite[Prop.~6.2, Thm.~6.6]{HM06}, there is no difference between
finite-dimensional Hilbert--Lie algebras and {\it compact} Lie algebras,
i.e.~real Lie algebras which are the Lie algebra of some compact Lie group.
One could also formulate Definition~\ref{def:1.1}
omitting the continuity of the Lie bracket.
Then the Closed Graph Theorem~\cite[Thm.~2.15]{Ru73}
and the Uniform Boundedness Principle~\cite[Thm.~2.5]{Ru73}
easily imply that the Lie bracket is continuous with respect
to the norm topology, see~\cite[p.~70, \S 1.4]{Sch60}.
We write
\begin{equation}
  \label{eq:cg}
  c_\g := \sup \{ \|[x,y]\| \: \|x\|, \|y\| \leq 1 \}
\end{equation}
for the {\it norm of the Lie bracket}. Note that, replacing the norm
by $\|\cdot\|' := \lambda \|\cdot\|$, this constant changes to
$c_\g' = \lambda^{-1} c_\g$. 


\begin{defn}\label{newdefHS}
  (\cite[Ch.~1]{Sa91}) 
  Let $\sH$ be a Hilbert space over $\K \in \{\R,\C,\H\}$, i.e.,
  real, complex of quaternionic. We follow the convention that the
inner product  is linear in the second argument. 
  We write $B(\sH)$ for the space of $\K$-linear bounded operators.

  For $0 \leq X \in B(\sH)$ self-adjoint and  an orthonormal basis
  $(e_j)_{j \in J}$ for $\sH$,  the \textit{trace} of $X$ is defined by 
\begin{equation}\label{trf}
\tr(X):=\sum_{j \in J} \langle e_j, X e_j \rangle \in [0, \infty],
\end{equation}
the {\it trace norm} of $x$ is defined by $\|X\|_1 := \tr(\sqrt{X^*X})$.
We say that $X$ is of {\it trace class} if its trace norm is finite. 
Then 
\begin{equation}
  B_1(\sH) := \{X \in B(\sH)  \mid \ \|X\|_1< \infty \} \end{equation}
is a Banach space with respect to the trace norm.
It is a $*$-ideal in $B(\sH)$.
Likewise the space 
\begin{equation}
B_2(\sH) := \{X \in B(\sH)  \mid \ \tr(X^*X) < \infty \}, \end{equation} 
of {\it Hilbert--Schmidt operators} is a Banach space with respect to the
{\it Hilbert--Schmidt norm} $\|X\|_2:=\tr(X^*X)^{1/2}$.  
\end{defn}

\begin{defn} \mlabel{newdefHSbis} (\cite{Ne02})
  If $\sH$ is an arbitrary Hilbert space over $\K \in \{\R, \C, \H\}$, the {\it unitary group} of $\sH$ and its Lie algebra are denoted by 
\begin{equation} \U(\sH) := \{ g \in \GL(\sH) \mid g^{-1} = g^*\} \quad \mbox{ and } \quad 
\fu(\sH) := \{ X \in B(\sH) \mid X^* = -X \},
\end{equation} 
where $\GL(\sH)$ denotes the
group of all invertible continuous operators of $\sH$.
For $p = 1,2$, we shall also need the Banach--Lie groups
\[ \GL_p(\sH) := \GL(\sH) \cap (\1 + B_p(\sH)) \quad \mbox{ and }  \quad 
  \U_p(\sH) := \U(\sH) \cap (\1 + B_p(\sH))\]
with Lie algebra
\[ \gl_p(\sH) = (B_p(\sH), [\cdot,\cdot]) \quad \mbox{ and } \quad
  \fu_p(\sH) := \fu(\sH) \cap B_p(\sH).\] 

For $\K=\R$, we also write on the group level 
\[  \OO(\sH) := \U(\sH), \quad
  \OO_p(\sH) := \OO(\sH) \cap \GL_p(\sH), \quad \fo(\sH) := \fu(\sH)
\quad \mbox{ and } \quad \fo_p(\sH) := \fu_p(\sH).\] 
We likewise write for $\K=\H$ 
\begin{equation} \Sp(\sH) := \U(\sH), \quad \Sp_p(\sH) := \U_p(\sH), \quad  
\fsp(\sH) := \fu(\sH) \quad \mbox{ and } \quad 
\fsp_p(\sH) := \fu_p(\sH).\end{equation} 
\end{defn}

\begin{rem}\mlabel{schatten-id}
Recall from \cite[Prop~A.I.7(v)]{Ne00} that
\begin{equation}\label{hoelder-halfbis}
  {\|XY\|}_2 \le {\|X\|} \ {\|Y\|}_2 \ \ \mbox{for} \    X \in B(\sH),
  Y \in B_2(\sH),
\end{equation} 
and  that $\|X\|_2 \le \|X\|_1$ if $X \in B_1(\sH)$ by
\cite[Prop.~A.I.9(ii)]{Ne00};
furthermore
\begin{equation}
  \label{eq:tracenormesti}
  \|XY\|_1 \le {\|X\|} \ {\|Y\|}_1 \quad \mbox{ for } \quad X \in B(\sH),
  Y \in B_1(\sH)
\end{equation}
by \cite[Prop.~A.I.9(i)]{Ne00}.
\end{rem}

\begin{defn} (\cite{Sch60, Sch61})
  We call a Hilbert--Lie algebra $\g$ {\it simple}
  if $\{0\}$ and $\g$ are the only closed ideals of $\g$.
  It is called {\it semisimple} if its center $\fz(\g)$ is trivial.
\end{defn}

We now turn to the main examples of Hilbert--Lie algebras.

\begin{ex}
  \mlabel{nex:1.3}
\nin (a) If $\sH$ is a $\K$-Hilbert space with $\K \in \{\R,\C,\H\}$ 
and $(e_j)_{j \in J}$ is a  $\K$-orthonormal basis, then 
\[ \fu_2(\sH):= \{ x \in B_2(\sH) : x^* = - x\} \] 
is a Hilbert--Lie algebra  
with respect to the real bilinear scalar product 
\[ \la X,Y \ra = \Re(\tr_\K(X^*Y))
= \sum_{j \in J} \Re\la X e_j, Y e_j\ra.\]  

\nin (b) If $\sH$ is infinite-dimensional, then this Hilbert--Lie algebra
is simple. If $\sH \cong \K^n$ is finite-dimensional, then
it may have a center:
\begin{itemize}
\item For $\K = \C$, we have the decomposition
  \[ \fu_2(\sH) = \fu_n(\C) = \su_n(\C) \oplus \R i \1,
    \quad \mbox{ where } \quad
    \su_n(\C) = \{ X \in \fu_n(\C) \: \tr X = 0 \}.\]
\item For $\K =  \R$, we have
  $\fu_2(\sH) = \so_n(\R)$, which is simple for $n \geq 5$ or $n = 3$,
  but 
  \[ \so_4(\R) \cong \so_3(\R) \oplus \so_3(\R) \quad\mbox{ and } \quad
    \so_2(\R) = \fz(\so_2(\R)) \cong \R. \]
\item For $\K = \H$, the Lie algebra
  $\fu_2(\sH) \cong \fu_n(\H)$ is always simple.     
\end{itemize}
\end{ex}

The following theorem reduces many questions on 
Hilbert--Lie algebras to simple ones. Its main point
is the existence of simple ideals when $\g$ is not abelian.
Note that the norm on a simple Lie algebra
is only unique up to a positive factor and this
has a substantial influence on infinite direct sums
(see Lemma~\ref{lem:infsum} for details). We 
put \ $\hat{}$\ \  on $\oplus$ whenever it refers to a complete
  direct sum, such as for Hilbert space direct sums.

\begin{thm} {\rm(Schue's Decomposition Theorem; \cite[Ths.\ 1,2]{Sch60})} 
  \mlabel{thm:1.3} 
  Every  Hilbert--Lie algebra $\g$ is an orthogonal Hilbert
  direct sum of its center 
$\fz(\g)$ and simple ideals $(\g_j)_{j \in J}$:
\begin{equation}
  \label{eq:idealdecomp}
  \g = \z(\g) \oplus \hat\bigoplus_{j \in J} \g_j.
\end{equation}
Each simple infinite-dimensional 
Hilbert--Lie algebra is non-isometrically isomorphic to 
$\fu_2(\sH)$ for an infinite-dimensional real, complex or 
quaternionic Hilbert space $\sH$ and
the finite-dimensional simple Hilbert--Lie algebras
are the simple compact Lie algebras. 
\end{thm}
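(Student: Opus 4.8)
The plan is to follow the route of \cite{Sch60, Sch61}: reduce to the semisimple case using only invariance of the inner product, split the semisimple part into its minimal closed ideals, and classify the simple pieces, treating the finite- and infinite-dimensional cases separately. For the reduction, note first that continuity of the bracket makes $\z(\g) = \bigcap_{x \in \g} \Ker(\ad x)$ a closed ideal, and that the invariance identity says exactly that each $\ad x$ is skew-adjoint on the Hilbert space $\g$; hence the orthogonal complement $\fa^\perp$ of any closed ideal $\fa$ is again a closed ideal (for $z \in \fa^\perp$, $a \in \fa$, $y \in \g$ one has $\la [y,z],a\ra = -\la z,[y,a]\ra = 0$ since $[y,a] \in \fa$). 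So $\g = \z(\g) \oplus \g_{ss}$ with $\g_{ss} := \z(\g)^\perp$, and one checks $\z(\g_{ss}) = \{0\}$. Two further facts follow at once from invariance: orthogonal closed ideals commute, and therefore every closed ideal of a closed ideal of $\g$ is a closed ideal of $\g$ (write $\g = \fa \oplus \fa^\perp$ and use $[\fa^\perp,\fa] = 0$). Consequences: ``simple ideal of $\g$'' means the same as ``minimal nonzero closed ideal of $\g$''; inside $\g_{ss}$ there is no nonzero abelian closed ideal, so each simple ideal of $\g_{ss}$ has trivial center and its centralizer in $\g_{ss}$ equals its orthogonal complement; in particular distinct simple ideals of $\g_{ss}$ are automatically orthogonal, and every nonzero closed ideal of $\g_{ss}$ is again a semisimple Hilbert--Lie algebra.

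\emph{The decomposition.} Let $(\g_j)_{j \in J}$ be the family of \emph{all} simple ideals of $\g_{ss}$; they are pairwise orthogonal by the above, so $\fk := \hat\bigoplus_{j \in J} \g_j$ is a closed ideal, and so is $\fk^\perp$. If $\fk^\perp \neq \{0\}$, it is a nonzero semisimple Hilbert--Lie algebra, so it contains a minimal nonzero closed ideal; that ideal is a simple ideal of $\g_{ss}$ orthogonal to all the $\g_j$ and hence not among them -- a contradiction. Thus $\g_{ss} = \fk$ and $\g = \z(\g) \oplus \hat\bigoplus_{j \in J} \g_j$ orthogonally. \textbf{The first real obstacle} is precisely the step just invoked: producing a minimal nonzero closed ideal in an arbitrary nonzero semisimple Hilbert--Lie algebra. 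One cannot simply run Zorn's lemma downward on closed ideals, since the intersection of a decreasing chain may collapse to $\{0\}$; instead one argues with the Hilbert space geometry, using that $\ad$ takes values in the skew-adjoint operators and that the bracket has finite norm $c_\g$, to control how much ``mass'' a descending family of ideals can lose. This quantitative argument is the heart of \cite{Sch60}.

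\emph{Classifying the simple summands.} If $\g$ is finite-dimensional and simple, then by \cite[Prop.~6.2, Thm.~6.6]{HM06} it is a compact Lie algebra, hence a simple compact Lie algebra. So let $\g$ be simple and infinite-dimensional. Choose, by Zorn, a maximal abelian subalgebra $\ft \subseteq \g$; it is self-centralizing (immediate from maximality), and $\{i\,\ad h : h \in \ft\}$ is a commuting family of self-adjoint bounded operators on the complexification $\g_\C$. The decisive analytic point is that their joint spectral decomposition is \emph{discrete}, with all nonzero joint eigenspaces lines: $\g_\C = \ft_\C \oplus \hat\bigoplus_{\alpha \in \Delta} \g_\C^\alpha$ with $\dim_\C \g_\C^\alpha = 1$ and $\Delta \subseteq \ft^* \setminus \{0\}$. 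Granting this, the usual $\fsl_2$-theory applied to the triples $\g_\C^\alpha + \g_\C^{-\alpha} + [\g_\C^\alpha, \g_\C^{-\alpha}]$ shows that $\Delta$ is an irreducible, reduced, locally finite root system, hence of one of the classical types $A_J$, $B_J$, $C_J$, $D_J$; and from this root datum one reconstructs $\g$ by matching Chevalley-type generators with matrix units and passing to the Hilbert--Schmidt completion, obtaining a Lie algebra isomorphism which is a linear homeomorphism onto $\fu_2(\sH)$ with $\sH$ complex (type $A_J$), real (types $B_J$, $D_J$), or quaternionic (type $C_J$). It need not be isometric, since the invariant form on a simple Hilbert--Lie algebra is unique only up to a positive scalar. \textbf{The second real obstacle} is the spectral step -- ruling out continuous spectrum for $\ad\ft$ and showing the root spaces are one-dimensional -- which again rests on the fine Lie-theoretic estimates of \cite{Sch60, Sch61}.
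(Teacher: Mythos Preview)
The paper does not prove this theorem at all: it is stated with attribution to Schue \cite[Ths.~1,2]{Sch60} and then used as a black box throughout. So there is nothing to compare your proof against; your proposal is a self-contained expository sketch where the paper simply cites the literature.

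That said, your outline is an accurate summary of Schue's route and is internally sound. The elementary parts---that $\fz(\g)$ is closed, that orthogonal complements of closed ideals are closed ideals via the skew-adjointness of $\ad x$, that closed ideals of closed ideals are ideals of $\g$, and hence that distinct simple ideals are orthogonal---are correct and well organized. You are also right to flag the two genuine analytic obstacles and to defer them to \cite{Sch60, Sch61}: (i) the existence of a minimal nonzero closed ideal in a semisimple Hilbert--Lie algebra cannot be obtained by naive downward Zorn, and Schue's argument uses the Hilbert geometry in an essential way; (ii) the discreteness of the joint spectrum of $\ad\ft$ on $\g_\C$ and one-dimensionality of root spaces is the technical core of \cite{Sch61}. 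Your identification of the locally finite root system and the matching with $\fu_2(\sH)$ over $\K\in\{\R,\C,\H\}$ according to type $A_J$, $B_J/D_J$, $C_J$ is consistent with how the paper later uses these facts (Examples~\ref{ex:d.1a}--\ref{ex:d.1b}). In short: your sketch is correct as a roadmap, honest about where the real work lies, and strictly more informative than the paper, which offers no argument.
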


It is easy to see that every ideal of $\g$ is adapted to the
decomposition \eqref{eq:idealdecomp}. 

\begin{rem}
Note that every infinite-dimensional $\K$-Hilbert space
$\sH$ is isomorphic to some $\ell^2(J,\K)$ for an infinite
set $J$, whose cardinality determines the isomorphism class.
Accordingly, two  simple Hilbert--Lie algebras 
$\fu_2(\ell^2(J_1,\K_1))$ and $\fu_2(\ell^2(J_2,\K_2))$
are isomorphic if and only if $|J_1| = |J_2|$ and
$\K_1 = \K_2$.
\end{rem}

\subsection{Automorphism groups}
\mlabel{subsec:2.2}

In this short  subsection we collect some results
on automorphism groups of Hilbert--Lie algebras.
We describe the automorphisms of the simple factors and show that
connected groups of automorphisms preserve all simple factors
(Lemma~\ref{lem:pres-ideals}). The latter result will be used 
in Proposition~\ref{prop:4.10}
for a characterization of one-parameter groups of unitary automorphisms
of the Lie algebra of Hilbert--Schmidt operators.

For a complex Hilbert space $\sH$, we write 
$\AU(\sH)$ for the group of unitary (or antiunitary) isometries 
of $\sH$ and 
\[ \PAU(\sH) := \AU(\sH)/\T\1 \cong \PU(\sH) \rtimes \{\id,\sigma\}, \] 
denotes the projective unitary (or antiunitary) isometries of $\sH$,
where  $\sigma$ is an anticonjugation of $\sH$. 

We recall the structure of the automorphism groups
of  simple infinite-dimensional Hilbert--Lie algebras: 

\begin{thm}
  \mlabel{thm:aut-grp} {\rm(\cite[Thm.~1.15]{Ne14})} 
The automorphism groups of the simple infinite-dimensional 
Hilbert--Lie algebras are given by 
\[  \Aut(\fu_2(\sH)) \cong \PAU(\sH) \] 
for a complex Hilbert space $\sH$,
and for the real and quaternionic case we have 
\[  \Aut(\fo_2(\sH)) \cong \OO(\sH)/\{\pm \1\} \quad \mbox{ and } \quad 
  \Aut(\fsp_2(\sH)) \cong \Sp(\sH)/\{\pm \1\}.\]
\end{thm}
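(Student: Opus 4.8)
The plan is to exhibit $\PAU(\sH)$ as a subgroup of $\Aut(\fu_2(\sH))$ and then to show every automorphism arises this way, using that the spectrum of $\ad x$ is a conjugacy invariant together with the fundamental theorem of projective geometry, in the guise of Uhlhorn's strengthening of Wigner's theorem. I describe the complex case; the real and quaternionic cases run \emph{mutatis mutandis}.

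\smallskip\noindent\textbf{The easy inclusion.}
First I would check that for $g \in \U(\sH)$ the conjugation $\Ad(g)\colon X \mapsto gXg^{-1}$ maps $\fu_2(\sH)$ isometrically onto itself (the Hilbert--Schmidt norm is unitarily invariant, Remark~\ref{schatten-id}) and is a Lie automorphism, and that for an anticonjugation $\sigma$ of $\sH$ the map $\alpha_\sigma\colon X\mapsto \sigma X\sigma^{-1}$ is \emph{complex-linear} and again a Lie automorphism. Since $\fu_2(\sH)$ generates $B(\sH)$ as a von Neumann algebra, the kernel of $g\mapsto \Ad(g)$ is $\U(\sH)\cap Z(B(\sH)) = \T\1$, and $\alpha_\sigma$ is not inner; this produces an embedding $\PAU(\sH) = \PU(\sH)\rtimes\{\id,\sigma\} \hookrightarrow \Aut(\fu_2(\sH))$.

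\smallskip\noindent\textbf{Surjectivity.}
Now let $\phi \in \Aut(\fu_2(\sH))$; by Definition~\ref{def:1.1} it is a topological isomorphism, so $\ad(\phi x) = \phi\,(\ad x)\,\phi^{-1}$ is similar to $\ad x$ and hence $\spec(\ad x)$ is a $\phi$-invariant of $x$. For a skew-adjoint Hilbert--Schmidt (hence compact) operator $x$ with eigenvalues $(i\mu_j)_j$ one has $\spec(\ad x) = \oline{\{i(\mu_j-\mu_k)\}}$, which consists of exactly three points $\{0,\pm i\mu\}$ precisely when $x = i\mu P$ for a finite-rank orthogonal projection $P$; call such $x$ \emph{elementary}. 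Then $\phi$ permutes elementary elements, and it also permutes the \emph{rank-one} elementary elements $\{i\mu P_v : \mu\in\R^\times,\ v\in\bP(\sH)\}$, because an elementary $x = i\mu P$ has $\rk P = 1$ iff it is not a sum $y+z$ of two nonzero elementary elements with $[y,z]=0$ --- a condition preserved by $\phi$, which is linear and bracket-preserving. This yields a bijection $\psi$ of $\bP(\sH)$ with $\phi(i\mu P_v)\in \R^\times\, iP_{\psi(v)}$, and since $[iP_v,iP_w]=0 \Leftrightarrow v\perp w$ for $v\neq w$, the map $\psi$ preserves orthogonality in both directions. By Uhlhorn's theorem (valid since $\dim\sH \ge 3$), $\psi$ is implemented by a unitary or antiunitary $U$, $P_{\psi(v)} = UP_vU^{-1}$. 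Setting $\beta := \Ad(U)$ or $\beta := \alpha_U$ accordingly, the automorphism $\gamma := \beta^{-1}\phi$ fixes every line $\R\, iP_v$, say $\gamma(iP_v) = \nu(v)\, iP_v$. If $v\perp w$ then $i(P_v+P_w)$ is again elementary, which forces $\nu(v)=\nu(w)$; as any two lines are linked by a chain of pairwise orthogonal ones, $\nu\equiv\nu_0$ is constant. Finally, inside $\fu_2(\sH)$ I would take the $\so_3(\R)$-triple $h = i(E_{11}-E_{22})$, $e = E_{12}-E_{21}$, $f = i(E_{12}+E_{21})$ (with $[e,f]=2h$), each of which is a difference of two of the $iP_v$'s, so that $\gamma$ scales all three by $\nu_0$; applying $\gamma$ to $[e,f]=2h$ gives $2\nu_0^2 h = 2\nu_0 h$, hence $\nu_0 = 1$. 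Thus $\gamma$ is the identity on $\Spann\{iP_v\} = \fu_{\rm fin}(\sH)$, which is dense, so $\gamma = \id$ by continuity and $\phi = \beta \in \PAU(\sH)$. Together with the easy inclusion this gives $\Aut(\fu_2(\sH))\cong\PAU(\sH)$. For $\K = \R$ and $\K = \H$ the same scheme applies with $\U(\sH)$ replaced by $\OO(\sH)$, resp.\ $\Sp(\sH)$: the defining representation is self-dual over the ground field so no antiunitary factor appears, and the kernel of $g\mapsto\Ad(g)$ is the group of central unitary scalars, which is $\{\pm\1\}$ in both cases, giving $\Aut(\fo_2(\sH))\cong\OO(\sH)/\{\pm\1\}$ and $\Aut(\fsp_2(\sH))\cong\Sp(\sH)/\{\pm\1\}$.

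\smallskip\noindent\textbf{Main obstacle.}
The delicate points are the projective-geometry step --- passing from $\phi$-invariance of the rank-one elementary elements to an orthogonality-preserving bijection of the (possibly non-separable) projective space and invoking the fundamental theorem there --- and the verification that the residual automorphism $\gamma$ is genuinely trivial: without exploiting the bracket relations as above one only obtains $\nu_0\in\{\pm1\}$, and $X\mapsto -X$ is not a Lie automorphism, so the $\so_3(\R)$ normalization is essential. (An alternative route to surjectivity is to show directly that a Lie automorphism of $\fu_2(\sH)$ extends to an automorphism or anti-automorphism of the associative algebra $B_2(\sH)$, which are all spatial; the reconstruction of the Jordan product from the bracket is then the corresponding hard step.) Everything else is bookkeeping with the Schatten estimates of Remark~\ref{schatten-id} and Schue's structure theory (Theorem~\ref{thm:1.3}).
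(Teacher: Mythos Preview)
The paper does not prove this theorem; it is stated with a citation to \cite[Thm.~1.15]{Ne14} and used as input. So there is no argument in the paper to compare against---your proposal supplies a proof where the paper supplies only a reference.

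Your overall strategy is sound and is one of the standard routes to this result: recover from $\phi$ an orthogonality-preserving bijection of $\bP(\sH)$ and invoke Uhlhorn's theorem. The spectral characterization of the elementary elements $x=i\mu P$ via $\spec(\ad x)=\{0,\pm i\mu\}$ is correct (compactness of $x$ forces $0\in\spec(x)$, so there is at most one nonzero eigenvalue), the orthogonality criterion $[iP_v,iP_w]=0\Leftrightarrow v\perp w$ for distinct lines holds, and the $\so_3(\R)$ normalization pinning down $\nu_0=1$ is clean.

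There is, however, a genuine gap in your characterization of the rank-one elementary elements. You claim $i\mu P$ has $\rk P=1$ iff it cannot be written as $y+z$ with $y,z$ nonzero elementary and $[y,z]=0$. This fails already trivially ($i\mu P = i\tfrac{\mu}{2}P + i\tfrac{\mu}{2}P$), and more substantially: for $P=P_v$ of rank one and any $w\perp v$ one has
\[
i\mu P_v \;=\; i\mu(P_v+P_w)\;+\;(-i\mu)P_w,
\]
with both summands elementary and commuting. So rank-one elements are decomposable in your sense too. A clean repair is to avoid this combinatorics altogether: since $\fu_2(\sH)$ is simple, the invariant scalar product $\la x,y\ra=-\Re\tr(xy)$ is unique up to a positive scalar, so $\phi^*\la\cdot,\cdot\ra=c\,\la\cdot,\cdot\ra$ for some $c>0$. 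For $x=i\mu P$ one has $\|x\|^2=\mu^2\,\rk P$, while $\mu=\sup|\spec(\ad x)|$ is $\phi$-invariant; hence $\rk(\phi\text{-image})=c\cdot\rk P$. Integrality of ranks (apply this to $\phi$ and $\phi^{-1}$) forces $c=1$, so $\phi$ is isometric and preserves $\rk P$ outright. With this fix the remainder of your argument goes through unchanged.
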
 

It follows in particular that the group $\Aut(\fu_2(\sH))$
always has $2$ connected components, that $\Aut(\sp_2(\sH))$ is always
connected, and that  $\Aut(\fo_2(\sH))$ is connected if and only if
$\dim \sH$ is not even and finite (\cite{Ne02}). 
Automorphisms of finite order of simple
  infinite-dimensional Hilbert--Lie algebras
  are discussed in \cite[App.~D]{Ne14}, with the purpose to study
  related twisted loop groups with Hilbert target groups. 
The group $\Aut(\g)$ of the topological Lie algebra
automorphisms of the Hilbert--Lie algebra~$\g$ 
preserves the center $\fz(\g)$ and permutes the
simple ideals $\g_j$ in the decomposition~\eqref{eq:idealdecomp}.
The following lemma draws an important
conclusion for the action of connected groups on $\g$.
It applies in particular to continuous $\R$-actions by automorphisms.

\begin{lem} \mlabel{lem:pres-ideals}
If a connected topological group 
  acts on a Hilbert--Lie algebra $\g$ 
  by isometric automorphism and continuous orbit maps,
  then it preserves all  simple ideals of $\g$.   
\end{lem}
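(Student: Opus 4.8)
The plan is to pass from the action on $\g$ to the induced action on the set of simple ideals, and then to combine orthogonality of distinct simple ideals with connectedness of the acting group.

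First, write $\g = \z(\g) \oplus \hat\bigoplus_{j \in J} \g_j$ as in Schue's Decomposition Theorem~\ref{thm:1.3}. As recalled above, every topological Lie algebra automorphism of $\g$ preserves $\z(\g)$ and permutes the simple ideals $\g_j$, so, denoting the acting group by $\Gamma$, each $\gamma \in \Gamma$ determines a unique permutation $\sigma_\gamma$ of $J$ with $\gamma \cdot \g_j = \g_{\sigma_\gamma(j)}$, and $\gamma \mapsto \sigma_\gamma$ is a homomorphism of $\Gamma$ into the permutation group of $J$. I would then reduce the claim to showing that, for each fixed $j \in J$, the stabilizer $\Gamma_j := \{\gamma \in \Gamma : \sigma_\gamma(j) = j\}$ is all of $\Gamma$; since $\Gamma_j$ is a subgroup, by connectedness of $\Gamma$ it suffices to check that $\Gamma_j$ contains a neighbourhood of the identity.

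For this, fix a unit vector $v \in \g_j$. The point is that whenever $\sigma_\gamma(j) \ne j$, the vector $\gamma \cdot v$ lies in $\g_{\sigma_\gamma(j)}$, which is orthogonal to $\g_j \ni v$, so that, since the action is isometric,
\[ \|\gamma \cdot v - v\|^2 = \|\gamma \cdot v\|^2 + \|v\|^2 = 2. \]
Hence $\{\gamma : \sigma_\gamma(j) \ne j\}$ avoids the set $\{\gamma : \|\gamma \cdot v - v\| < \sqrt 2\}$, which is an open neighbourhood of $e$ because the orbit map $\gamma \mapsto \gamma \cdot v$ is continuous; equivalently $\Gamma_j$ contains this neighbourhood, as needed. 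Running over all $j \in J$ gives $\sigma_\gamma = \id$ for every $\gamma$, i.e.\ $\Gamma$ preserves every simple ideal.

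The routine parts are the ``open subgroup of a connected group is everything'' argument and the bookkeeping of $\sigma$; the only genuinely substantive point is the displayed estimate, where orthogonality of distinct simple ideals turns ``$\g_j$ is moved'' into a fixed positive gap of size $\sqrt 2$, leaving continuity near $e$ no room to permute. I do not expect any serious obstacle beyond making sure the action is genuinely by automorphisms, so that $\sigma_\gamma$ is well defined (which is hypothesized), and that orbit-map continuity is invoked only at the identity.
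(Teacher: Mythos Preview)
Your proof is correct and takes essentially the same approach as the paper: both fix a unit vector in a simple ideal, use orthogonality of distinct simple ideals together with continuity of the orbit map to show the stabilizer of each ideal is open, and then invoke connectedness. The only cosmetic difference is that the paper phrases the continuity step via the inner product $\langle x_j,\gamma\cdot x_j\rangle>0$ near $e$, whereas you phrase it via the equivalent norm condition $\|\gamma\cdot v-v\|<\sqrt{2}$.
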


\begin{prf} Let $A$ be a connected topological group acting on
  $\g$ by automorphisms and with continuous orbit maps.
  Further, let   $(\g_j)_{j \in J}$ denote the simple ideals of $\g$.
    For any unit vector $x_j \in \g_j$, the function
  \[ f \: A \to \R, \quad f(\gamma) = \la x_j, \gamma.x_j \ra \]
  is continuous with $f(e) = 1$. Hence there exists an identity
  neighborhood $U \subeq A$ with $f(\gamma) > 0$ for $\gamma \in U$.
  Then $\gamma \in U$ implies that $\gamma.x_j$ is not orthogonal to $x_j$,
  hence cannot be contained in any other simple ideal.
  As $\gamma$ permutes the simple ideals, it follows that
  $\gamma(\g_j) = \g_j$. This shows that the stabilizer
  \[ A_j := \{ \gamma \in A \: \gamma(\g_j) = \g_j \} \]
  is an open subgroup of $A$, thus equal to $A$ by connectedness.
  We conclude that $A$ leaves $\g_j$ invariant.   
\end{prf}

\subsection{Root decomposition} 
\mlabel{subsec:2.3}

In this section we review some basic facts
concerning  real and complex root space decompositions of Hilbert--Lie algebras
(\cite{Bo80, MN16, MN17, Ne12, Ne98, NeSt01, Sch60, Sch61}).
In the context of Hilbert--Lie algebras,
the role of Cartan subalgebras from the finite-dimensional
theory (\cite{HiNe12}) is taken over by the
maximal abelian subalgebras; recall that for compact Lie algebras
these are precisely the Cartan subalgebras.
The existence of maximal abelian subalgebras follows immediately
from Zorn's Lemma, 
but, unlike the finite-dimensional context, they are not all conjugate
(\cite{NeSt01}).
However, Schue's work always implies the existence of a corresponding
root space decomposition in the topological sense
(\cite{Sch61}).

\begin{defn} \mlabel{def:basic2} (\cite[\S IX.1]{Ne00}) 
Let $\g$ be a real Hilbert--Lie algebra and $\g_\C$ be its complexification. 
For $x,y \in \g$, we put $(x+iy)^* := -x + i y$,
which defines an antilinear antiautomorphism of the complex
Lie algebra $\g_\C$,  so that 
\begin{equation} 
\g = \{ x \in \g_\C \mid x^* = -x\}.
\end{equation} 
Let $\ft \subeq \g$ be a maximal abelian subalgebra 
and $\ft_\C\subeq \g_\C$ be its complexification. 
For a linear functional $\alpha \in \ft_\C^*$ (the algebraic dual space), 
\begin{equation} \g_\C^\alpha = \{ x \in \g_\C \mid (\forall h \in \ft_\C)\ 
[h,x]= \alpha(h)x\}
\end{equation} 
is called  the corresponding {\it root space}, and 
\begin{equation} \Delta := \Delta(\g_\C,\ft_\C) :=  \{ \alpha \in \ft_\C^* \setminus \{0\} \mid \g_\C^\alpha
\not= \{0\}\}
\end{equation} 
is called the {\it root system} of the pair $(\g_\C,\ft_\C)$.
We then have $\g_\C^0 = \ft_\C$ because $\ft$ is maximal abelian,
and
\[ [\g_\C^\alpha, \g_\C^\beta] \subeq \g_\C^{\alpha+\beta}
  \quad \mbox{ for } \quad \alpha,\beta\in \ft_\C^* \]
follows from
the fact that $\ad\ft$ consists of derivations; in particular 
$[\g_\C^\alpha, \g_\C^{-\alpha}] \subeq \ft_\C$. 
\end{defn}  

Let $\g$ be a Hilbert--Lie algebra and 
$\ft \subeq \g$ be a maximal abelian subalgebra. 
According to \cite{Sch61}, $\ft_\C \subeq \g_\C$ defines an 
orthogonal root space decomposition 
$$ \g_\C = \ft_\C \oplus \hat\bigoplus_{\alpha \in \Delta} \g_\C^\alpha $$
which is a Hilbert space direct sum (indicated by the hat over $\oplus$).
As for compact Lie algebras, all root spaces 
$\g_\C^\alpha = \C x_\alpha$ are one-dimensional 
and $\alpha([x_\alpha, x_\alpha^*]) > 0$ for $0 \not= x_\alpha \in \g_\C^\alpha$. 
For every $\alpha \in \Delta$ we have 
$\alpha(\ft) \subeq i \R$ and therefore
$x \in \g_\C^\alpha$ implies $x^* \in \g_\C^{-\alpha}$.

\begin{defn} (Coroots and Weyl group, \cite[\S 2.2]{Sch61}) \mlabel{coroot} 
    From $\alpha([x_\alpha, x_\alpha^*]) > 0$ it follows that
  there exists a unique element 
\[ \alpha^\vee \in \ft_\C \cap [\g_\C^\alpha, \g_\C^{-\alpha}]
  \quad \mbox{ satisfying } \quad \alpha(\alpha^\vee) = 2.\]
It is called the {\it coroot of $\alpha$} 
and  $\Delta^\vee:=\{\alpha^\vee \mid \alpha \in \Delta\}$
is called the \textit{coroot system}. 

The {\it Weyl group} $\cW = \cW(\g_\C,\ft_\C)
\subeq \GL(\ft_\C)$ is the subgroup generated by 
all reflections 
\begin{equation}
  \label{eq:ref1}
 r_\alpha(x) := x - \alpha(x) \alpha^\vee \quad \mbox{ for } \quad 
\alpha \in \Delta_c. 
\end{equation}
It acts on the dual space $\ft_\C^*$ by the adjoint maps 
\begin{equation}
  \label{eq:ref2}
  r_\alpha^*(\beta) := \beta - \beta(\alpha^\vee) \alpha
  \quad \mbox{ for } \quad \beta \in \ft_\C^*,2.10
\end{equation}
and this action preserves the root system $\Delta$. 
\end{defn}

In the finite-dimensional case $\cW$ is finite (\cite[Prop.~VII.2.10]{Ne00}),
but this is no longer true in the infinite-dimensional case
(\cite[Rem.~V.2]{Ne98}).

If $\g$ is semisimple, then 
 \[ \ft_{\rm alg}  := i \Spann_\R(\Delta^\vee)\subeq \ft \] 
is a dense subspace of $\ft$. We write 
\[ \la x + iy, x' + i y' \ra := \la x,x'\ra + \la y,y' \ra
  + i(\la x, y'\ra - \la y, x' \ra) \] 
for the  {\bf unique hermitian extension} 
 of the scalar product on $\g$ to $\g_\C$
(linear in the second argument). It 
satisfies 
\[ \la [x,y],z \ra = \la y, [x^*,z]\ra \quad \mbox{ for } \quad 
  x,y,z \in \g_\C.\]

We denote by $\ft'$, resp., $\ft_\C'$, 
the space of continuous linear functionals on $\ft$, resp., $\ft_\C$.
For $\alpha \in \ft_\C'$, we write 
$\alpha^\sharp \in \ft_\C$ for the uniquely determined element satisfying 
\begin{equation}
  \label{eq:sharp}
\alpha(h) = \la \alpha^\sharp,h\ra \quad \mbox{ for } \quad h \in \ft_\C.
\end{equation}
The bijection 
$\sharp \: \ft_\C' \to \ft_\C$ is antilinear, and 
\[ \alpha(\beta^\sharp)
  = \la \alpha^\sharp, \beta^\sharp \ra
  = \la \beta^\sharp, \alpha^\sharp \ra
  = \beta(\alpha^\sharp) \quad 
\mbox{ for  } \quad \alpha,\beta \in i \ft'.\]
For $h \in \ft_\C$ and $x_\alpha \in \g_\C^\alpha$, the relation 
\[ \la h, [x_\alpha^*, x_\alpha]\ra
=  \la [x_\alpha, h], x_\alpha\ra
=  -\alpha(h) \la x_\alpha, x_\alpha\ra
=  -\la h, \|x_\alpha\|^2\alpha^\sharp \ra \] 
then leads to 
\begin{equation}
  \label{eq:brel}
[x_\alpha, x_\alpha^*] = \| x_\alpha\|^2 \alpha^\sharp.
\end{equation}
From $\alpha(\alpha^\vee) = 2$, $\alpha^\sharp \in \C \alpha^\vee$ 
and $\alpha(\alpha^\sharp) = \|\alpha^\sharp\|^2 = \|\alpha\|^2$ 
it follows that 
\begin{equation}
  \label{eq:normrel}
  \alpha^\vee = \frac{2}{\|\alpha\|^2} \alpha^\sharp
  \quad \mbox{ and therefore } \quad
  \|\alpha^\vee\| = \frac{2}{\|\alpha\|}.
\end{equation}

The preceding discussion provides a rather direct way to determine
  the number $c_\g$ in terms of the root decomposition:

\begin{lem} \mlabel{lem:cg} We have
  \[ c_\g = \sup \{ \|\alpha\| \: \alpha \in \Delta\}
    = \frac{2}{\inf \{ \|\alpha^\vee\| \: \alpha \in \Delta \}}.\]
\end{lem}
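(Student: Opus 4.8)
The plan is to prove the two asserted equalities separately. The second one, $\sup_{\alpha\in\Delta}\|\alpha\| = 2/\inf_{\alpha\in\Delta}\|\alpha^\vee\|$, is immediate from the relation $\|\alpha^\vee\| = 2/\|\alpha\|$ recorded in \eqref{eq:normrel}. So the real content is the identity $c_\g = \sup_{\alpha\in\Delta}\|\alpha\|$, which I would obtain from two inequalities.

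For $c_\g \ge \sup_{\alpha\in\Delta}\|\alpha\|$, I would fix $\alpha\in\Delta$, choose $x_\alpha\in\g_\C^\alpha$ with $\|x_\alpha\|=1$, recall that $x_\alpha^*\in\g_\C^{-\alpha}$, and form the two elements
\[ u := x_\alpha - x_\alpha^*, \qquad v := i(x_\alpha + x_\alpha^*) \]
of $\g$ (one checks $u^* = -u$ and $v^* = -v$). Since $\g_\C^\alpha$ and $\g_\C^{-\alpha}$ are orthogonal ($\alpha\ne -\alpha$ because $\alpha\ne 0$) and $*$ preserves the norm on $\g_\C$, we get $\|u\|^2 = \|v\|^2 = 2$. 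A short expansion, using that the bracket is $\C$-bilinear on $\g_\C$, gives $[u,v] = 2i[x_\alpha,x_\alpha^*] = 2i\alpha^\sharp$ by \eqref{eq:brel}, whence $\|[u,v]\| = 2\|\alpha^\sharp\| = 2\|\alpha\|$. Therefore $c_\g \ge \|[u,v]\|/(\|u\|\,\|v\|) = \|\alpha\|$, and taking the supremum over $\alpha$ yields the inequality.

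For $c_\g \le \sup_{\alpha\in\Delta}\|\alpha\|$, I would first note that $c_\g = \sup\{\|\ad x\| : x\in\g,\ \|x\|\le 1\}$, where $\ad x$ is regarded as a bounded operator on $\g$, and that this operator norm coincides with that of its complexification acting on $\g_\C$. Given $x\in\g$, Zorn's Lemma produces a maximal abelian subalgebra $\ft'\subeq\g$ with $x\in\ft'$; Schue's root space decomposition with respect to $\ft'$ then shows that $\ad x$ acts as $0$ on $\ft'_\C$ and as the scalar $\alpha(x)$ on each root space $\g_\C^\alpha$, $\alpha\in\Delta' := \Delta(\g_\C,\ft'_\C)$, so it is diagonal for the orthogonal decomposition $\g_\C = \ft'_\C \oplus \hat\bigoplus_{\alpha\in\Delta'}\g_\C^\alpha$. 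Consequently $\|\ad x\| = \sup_{\alpha\in\Delta'}|\alpha(x)| \le \|x\|\cdot\sup_{\alpha\in\Delta'}\|\alpha\|$ by Cauchy--Schwarz. Taking suprema gives $c_\g \le \sup_{\ft'}\sup_{\alpha\in\Delta'}\|\alpha\|$ over all maximal abelian subalgebras $\ft'$; combined with the lower bound (which gives $\sup_{\alpha\in\Delta'}\|\alpha\|\le c_\g$ for each $\ft'$), this already yields $c_\g = \sup_{\ft'}\sup_{\alpha\in\Delta'}\|\alpha\|$.

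The remaining -- and main -- point is that $\sup_{\alpha\in\Delta'}\|\alpha\|$ does not depend on the choice of the maximal abelian subalgebra $\ft'$, so that it equals $\sup_{\alpha\in\Delta}\|\alpha\|$ for the given $\ft$. Here I would invoke the structure theory: by Schue's Decomposition Theorem~\ref{thm:1.3}, together with $\ft' = \z(\g)\oplus\hat\bigoplus_j\ft'_j$ (with $\ft'_j$ maximal abelian in the simple ideal $\g_j$, so that $\Delta'$ is the disjoint union of the $\Delta'_j$), one is reduced to the simple factors, i.e.\ to $\fu_2(\sH)$ over $\R$, $\C$ or $\H$ and to the finite-dimensional simple compact Lie algebras. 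In the latter all maximal abelian subalgebras are conjugate under isometric automorphisms, and in the former the explicit description of the (few conjugacy types of) maximal abelian subalgebras and their root systems (cf.\ \cite{Sch61, NeSt01, Ne14}) shows that in each of them the longest roots have one and the same length. I expect this verification -- recognising that the supremum of root lengths is an invariant independent of $\ft$ -- to be the one genuine obstacle; the two norm estimates above are routine.
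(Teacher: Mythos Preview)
Your proof is correct, but your upper-bound argument takes a different and heavier route than the paper's. Your lower bound is essentially the paper's (the paper brackets $i\alpha^\vee$ against $x_\alpha - x_\alpha^*$, you bracket $x_\alpha - x_\alpha^*$ against $i(x_\alpha + x_\alpha^*)$; same idea). For the upper bound, however, the paper avoids your ``main obstacle'' entirely: instead of passing to a new maximal abelian $\ft'$ containing $x$ and then arguing that $\sup_{\alpha\in\Delta'}\|\alpha\|$ is independent of~$\ft'$, it stays with the fixed $\ft$ and observes that every element of the dense subspace $\g_{\rm alg} = \ft_{\rm alg} + \sum_{\alpha}\{x_\alpha - x_\alpha^* : x_\alpha\in\g_\C^\alpha\}$ is conjugate under an inner (hence isometric) automorphism to an element of $\ft$. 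This is elementary, since any such element lies in a finite-dimensional compact subalgebra in which $\ft$ contributes a Cartan subalgebra. Then $\|\ad x\| = \|\ad h\| \le \|h\|\sup_{\alpha\in\Delta}\|\alpha\| = \|x\|\sup_{\alpha\in\Delta}\|\alpha\|$ on $\g_{\rm alg}$, and density finishes. Your approach works but leans on the classification of maximal abelian subalgebras in the simple infinite-dimensional factors (notably the $B_J$/$D_J$ dichotomy in the real case) to verify the $\ft'$-independence; the paper's conjugation trick sidesteps this and is the cleaner argument to adopt.
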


\begin{prf} Let $0 \not= x_\alpha \in \g_\C^\alpha$.
  Then $y_\alpha := x_\alpha - x_\alpha^* \in \g$ with orthogonal summands, so
  that 
  \[ \|y_\alpha\|^2 = 2 \|x_\alpha\|^2 = \|i(x_\alpha + x_\alpha^*)\|,\]
  and
  \[ \|[i\alpha^\vee, y_\alpha]\|
    = 2\| i(x_\alpha + x_\alpha^*)\|
    = 2 \|y_\alpha\| \leq c_\g \|\alpha^\vee\| \|y_\alpha\|,\]
  so that $c_\g \geq \frac{2}{\|\alpha^\vee\|} = \|\alpha\|.$ 

  Conversely, for $h \in \ft$, the root decomposition shows that
  \[ \|\ad h\|  = \sup \{ |\alpha(h)\| \: \alpha \in \Delta \}
    \leq \|h\| \sup \{ \|\alpha\| \: \alpha \in \Delta \}.\]
  Since every element in the dense subspace
  \[ \g_{\rm alg} := \ft_{\rm alg} 
+ \sum_{\alpha \in \Delta} \{ x_\alpha - x_\alpha^* \: x_\alpha \in \g_\C^\alpha\}\] 
  is conjugate under an inner automorphism to an element of $\ft$,
  and since conjugate elements have the same norm, it follows by continuity that
  $ \|\ad x\| \leq  \|x\| \sup \{ \|\alpha\| \: \alpha \in \Delta \}$
  for $x \in \g,$ 
  and hence that $c_\g \leq \sup \{ \|\alpha\| \: \alpha \in \Delta \}$.
\end{prf}

After these preparations, we can now introduce weights;
see also \cite{MN17, HiNe12, Ne00, Ne98} for a similar formalization.

\begin{defn} (Bounded and continuous weights, \cite[\S\S I.11-14]{Ne98})
  A linear functional $\lambda \: (\ft_{\rm alg})_\C \to \C$ 
is called a {\it weight} or an {\it integral weight} if $\lambda(\Delta^\vee) \subeq \Z$. 
A weight $\lambda$ is said to be {\it bounded} if 
$\lambda(\Delta^\vee)$ is a bounded subset of $\Z$. 
We write $\cP\subeq i \ft_{\rm alg}^*$ for the group of weights and 
$\cP_b \subeq \cP$ for the subgroup of bounded weights. 

Since not all bounded weights are continuous on $\ft_{\rm alg}$
with respect to the subspace topology inherited from $\ft$
(Example~\ref{ex:d.1a}), 
we cannot identify $\cP_b$ with a subset of the topological dual
space~$\ft_\C'$. Accordingly, we write 
$\cP_c \subeq \cP_b$ for the subgroup of continuous bounded weights.
\end{defn}

We now describe the relevant root data for the three concrete types of 
Hilbert Lie algebras $\fu_2(\sH)$, where $\sH$ is a 
Hilbert space over $\K \in \{\R,\C,\H\}$. 

\begin{ex} \mlabel{ex:d.1a} 
(Root data of unitary Lie algebras, \cite[Ex.~C.4]{Ne12})   
Let $\sH$ be a complex Hilbert space with 
orthonormal basis $(e_j)_{j \in J}$. 
Let $\g := \fu_2(\sH)$ and let $\ft \subeq \g$ be the 
subalgebra of all diagonal operators with respect to the $e_j$. 
Then $\ft$ is a maximal abelian subalgebra. The set of 
roots of $\g_\C= \gl_2(\sH)$ with respect to $\ft_\C$  is given by the root 
system 
\[ \Delta = \{ \eps_j - \eps_k \: j\not= k \in J \} =: A_J.\]
Here the operator 
$E_{jk} e_m := \delta_{km} e_j$ is a $\ft_\C$-eigenvector 
in $\gl_2(\sH)$ generating the corresponding eigenspace 
and $\eps_j(\diag(h_k)_{k \in J}) = h_j$. 
From $E_{jk}^* = E_{kj}$ it follows that 
\[   (\eps_j - \eps_k)^\vee= E_{jj} - E_{kk} 
  = [E_{jk}, E_{kj}] = [E_{jk}, E_{jk}^*],\]
which leads with \eqref{eq:normrel} to
\begin{equation}
  \label{eq:coroot-unit}
  \|(\eps_j - \eps_k)^\vee\| = \sqrt{2}
  \quad \mbox{ and thus } \quad 
  \|\eps_j - \eps_k\| = \sqrt{2}. 
\end{equation}
With Lemma~\ref{lem:cg} we now obtain $c_\g = \sqrt{2}.$ 

\nin {\bf Weights:} We have 
\[ \ft_{\rm alg} = i\Spann \{ E_{jj} - E_{kk} \: j\not=k \},  \] 
and each linear functional on $\ft_{\rm alg}$ can be represented by a 
function $\lambda \: J \to \C, j \mapsto \lambda_j$ via 
\[ \lambda(E_{jj} - E_{kk}) = \lambda_j - \lambda_k.\] 
It is a weight if and only if all differences 
$\lambda_j - \lambda_k$ are integral, and in this case 
it can be represented by a $\Z$-valued function 
$\lambda\: J \to \Z$. In this case its boundedness is equivalent to 
$\lambda(J)$ being finite and it is continuous if and only if
its support $\supp(\lambda)$ is finite. 

\nin {\bf Weyl group}:
The Weyl group $\cW$ is isomorphic to the group $S_{(J)}$ of finite 
permutations of $J$, acting in the canonical  way on $\ft_\C$. 
It is generated by the reflections $r_{jk} := r_{\eps_j - \eps_k}$ 
corresponding to the transpositions of $j$ and~$k$. 
The Weyl group acts transitively on the set of roots and, in particular, 
all roots have the same length with respect to the dual norm on~$\ft_\C'$. 

This generalizes the root decomposition of
the reductive Lie algebra $\fu_n(\C)_\C = \gl_n(\C)$ with respect to the
Cartan subalgebra of diagonal matrices. It matches the general
context if $J = \{1,\ldots,n\}$ is finite. 
\end{ex}

\begin{rem} \mlabel{rem:real-struc}
In many situations it is convenient to describe 
real Hilbert spaces as pairs 
$(\sH,\tau)$, where $\sH$ is a complex Hilbert space 
and $\tau \: \sH\to \sH$ is a {\it conjugation}, i.e., 
an antilinear isometry with $\tau^2 = \id_\sH$. 

A quaternionic Hilbert space $\sH$ can be considered as a 
complex Hilbert space $\sH^\C$ (the underlying complex Hilbert 
space), endowed with an {\it anticonjugation} $\tau$, i.e., 
$\tau$ is an antilinear isometry with $\tau^2 = -\1$. 
\end{rem}

\begin{ex} \mlabel{ex:d.1c} 
(Root data of orthogonal Lie algebras, \cite[Ex.~C.6]{Ne12})  
Let $\sH$ be a real 
Hilbert space 
and $\g := \fo_2(\sH)$ be the corresponding Hilbert--Lie 
algebra. Let $\ft \subeq \g$ be a maximal abelian subalgebra. 
The fact that $\ft$ is maximal 
abelian implies that the common kernel
\begin{equation}
  \label{eq:cht}
  \sH^\ft := \bigcap_{x \in \ft} \ker x
\end{equation}
is at most 
one-dimensional. Since $\ft$ consists of compact skew-symmetric operators, 
the Lie algebra $\ft_\C$ is simultaneously
diagonalizable on the complexification 
$\sH_\C$, which implies that the space $(\sH^\ft)^\bot$
carries  an isometric complex 
structure $I$ commuting with $\ft$ and there exists an orthonormal 
subset $(e_j)_{j \in J}$ of $(\sH^\ft)^\bot$ such that 
the set $\{ e_j, Ie_j\: j \in J\}$ is a real
orthonormal basis of $(\sH^\ft)^\bot$ 
and the planes $\R e_j + \R I e_j$ are $\ft$-invariant. 
If $\sH^{\ft}$ is non-zero, we write $e_{j_0}$ for a unit vector 
in this space and put $f_{j_0} := e_{j_0}$.
For $j_0 \not= j \in J$, we  put 
\[ f_j := \frac{1}{\sqrt 2}(e_j - iI e_j) \quad \mbox{ and } 
\quad f_{-j} := \frac{1}{\sqrt 2}(e_j + iI e_j).\] 
Then $(f_j)_{j \in J}$ form a complex orthonormal basis of $\sH_\C$ consisting 
of $\ft$-eigenvectors. For the complex bilinear extension 
$\beta \: \sH_\C \times \sH_\C \to \C$ of the scalar product on $\sH$, we 
have 
\[ \beta(f_{j_0}, f_{j_0}) = 1, \quad \beta(f_j, f_j) = 
\beta(f_{-j}, f_{-j}) =0 \quad \mbox{ and } \quad  \beta(f_j, f_{-j}) = 1.\] 

We conclude that $\ft_\C$ coincides with the space of those elements 
in $\fo_2(\sH_\C,\beta) = \g_\C$ which are diagonal with the orthonormal basis consisting 
of the $f_{\pm j}$. Hence $\ft_\C \cong \ell^2(J,\C)$
(as a Hilbert space), where 
$x \in \ft_\C$ corresponds to the element $(x_j)_{j \in J} \in \ell^2(J,\C)$ 
defined by $x f_j = x_j f_j$, $j \in J$.
For $x_j \in i \R$ we then have
$x \in \ft$ with
\[ xe_j = \frac{1}{\sqrt 2} x(f_j + f_{-j})
      = \frac{x_j}{\sqrt 2} (f_j - f_{-j})
      = -i x_j I e_j \quad \mbox{ and } \quad 
      x I e_j = i x_j e_j.\]
    Therefore
    \begin{equation}
      \label{eq:norm-form}
 \|x\|_2^2 = \sum_{j \in J} \|x e_j\|^2 + \|xIe_j\|^2
 = 2 \sum_{j \in J} |x_j|^2.
    \end{equation}

Writing $\eps_j(x) := x_j$, we see that 
$\{ \pm \eps_j : j \in J\}$, together with $\eps_{j_0}$ if 
$\sH^\ft \not=\{0\}$, is the set of $\ft_\C$-weights of $\sH_\C$. 
Accordingly, the set of 
roots of $\g_\C$ with respect to $\ft_\C$  is given by 
\[ \Delta = \{ \pm \eps_j \pm \eps_k \: j \not= k, j,k\in J \} =: D_J 
\quad \mbox{ if } \quad \sH^\ft =\{0\}, \] 
and 
\[ \Delta = \{ \pm \eps_j \pm \eps_k \: j \not= k, j,k\in J \} 
\cup \{ \pm\eps_j \: j \in J\} =: B_J 
\quad \mbox{ if } \quad 
\sH^\ft \not=\{0\}.\] 
If $\sH$ is finite-dimensional and $\dim \sH = 2n$ is even, 
then $\sH^\ft = \{0\}$, and we obtain a root system of type~$D_n$. If $\dim \sH = 2n + 1$ is odd, then $\Delta$ is of type $B_n$
(\cite[Ex.~6.3.10]{HiNe12}).

We put $E_j := E_{jj} - E_{-j,-j}$ for $j \not= j_0$ and $E_{j_0} := E_{j_0, j_0}$.
If $\Delta$ is of type $D_J$, then the coroots are of the form 
$\pm E_j \pm E_k, j\not=k$, 
and 
for $\Delta$ of type $B_J$ the coroots are of the form 
\begin{equation}
  \label{eq:rootvec-bj}
  (\pm \eps_j \pm \eps_k)^\vee = \pm E_j \pm E_k
  \quad \mbox{ for } \quad j\not=k \quad \mbox{ and }\quad
\pm \eps_j^\vee = \pm 2 E_j, j \in J. 
\end{equation}
With \eqref{eq:norm-form} we now obtain
\[ \|(\eps_j \pm \eps_k)^\vee\| = 2 \quad \mbox{ and } \quad
  \| \eps_j^\vee\| = 2 \sqrt{2},\]
which leads with Lemma~\ref{lem:cg} to $c_\g = 1.$

\nin {\bf Weights:} Here 
\[ \ft_{\rm alg} = i\Spann \{ E_j \: j\in J \}  \] 
and each linear functional on $\ft_{\rm alg}$ can be represented by a 
function $\lambda \: J \to \C$ via 
$\lambda(E_j) = \lambda_j.$ 
For both types of root systems, 
the weights are characterized by the condition 
\[ \lambda(J) \subeq \Z \quad \mbox{ or } \quad 
\lambda(J) \subeq \shalf + \Z\]  
(see \cite[Props.~V.3, VII.2, VII.3]{Ne98} for details).

\nin {\bf Weyl group}:
For $B_J$ we obtain the same Weyl group 
$\{\pm 1\}^{(J)} \rtimes S_{(J)}$ as for $C_J$. 
For $D_J$ the reflection $r_{\eps_j + \eps_k}$ changes the sign of the 
$j$- and the $k$-component, so that the 
Weyl group $\cW$ is isomorphic to the group $\{\pm 1 \}^{(J)}_{\rm even}
\rtimes S_{(J)}$,
where $\{\pm 1 \}^{(J)}_{\rm even}$ is the subgroup of all those 
elements with an even number of entries equal to $-1$. 
For $D_J$ the Weyl group acts transitively on the set of roots and, 
in particular, all roots have the same length. 
For $B_J$ we have two $\cW$-orbits,  the roots $\pm \eps_j$ are short 
and the roots $\pm \eps_j \pm \eps_k$, $j \not=k$, are long. 
\end{ex}

\begin{ex} \mlabel{ex:d.1b} 
(Root data of symplectic Lie algebras, \cite[Ex.~C.5]{Ne12}) 
For a complex Hilbert space $\sH$ with a conjugation $\tau$, 
we consider the quaternionic Hilbert space $\sH_\H := \sH^2$, where the 
quaternionic structure is defined by the anticonjugation 
$\tilde\tau(v,w) := (\tau w, -\tau v)$ (cf.\ Remark~\ref{rem:real-struc}). 
Then 
\[ \g := \sp_2(\sH_\H) = \fu_2(\sH_\H) 
= \{ x \in \fu(\sH^2) \: \tilde\tau x = x \tilde\tau \}\]  
and 
\[ \sp_2(\sH_\H)_\C = \left\{ \pmat{ A & B \cr C & -A^\top \cr} 
\in B(\sH^2) \: B^\top = B,  C^\top  = C\right\}. \] 
Let $(e_j)_{j \in J}$ be an orthonormal basis of $\sH$ and 
$\ft \subeq \g \subeq \fu_2(\sH^2)$ be the 
subalgebra of all diagonal operators with respect to the basis elements  
$(e_j,0)$ and $(0,e_k)$ of $\sH^2$. Then $\ft$ 
is maximal abelian in $\g$. 
Moreover, $\ft_\C \cong \ell^2(J,\C)$, 
consists of diagonal operators of the form 
$h = \diag((h_j), (-h_j))$ with
\[  \|h\|_2^2
  = \sum_{j \in J} \|he_j\|^2 
= \sum_{j \in J} |h_j|^2.\]  
The set of 
roots of $\g_\C$ with respect to $\ft_\C$  is given by 
\[ \Delta = \{ \pm 2 \eps_j, \pm (\eps_j \pm \eps_k) \: j \not= k, j,k
\in J \} =: C_J,\] 
where $\eps_j(h)  = h_j$. If we write 
$E_j \in \ft_\C$ for the element defined by 
$\eps_k(E_j) = \delta_{jk}$, then the coroots are given by 
\begin{equation}
  \label{eq:CJcoroot}
 (\eps_j \pm  \eps_k)^\vee= E_j \pm  E_k 
\quad \mbox{ for } \quad j \not=k 
\quad \mbox{ and } \quad 
(2\eps_j)^\vee= E_j
\end{equation}
and
\[  \|(\eps_j \pm  \eps_k)^\vee\| = \sqrt2, \qquad 
\|(2\eps_j)^\vee\| = 1,\]
which leads with Lemma~\ref{lem:cg} to $c_\g = 2.$ 
Here the roots $\eps_j - \eps_k$ correspond to block diagonal operators, 
 the roots $\eps_j + \eps_k$ to strictly upper triangular operators, 
and the roots $-\eps_j - \eps_k$ to strictly lower triangular operators. 

\nin {\bf Weights:} In particular, we have 
\[ \ft_{\rm alg} = i\Spann \{ E_j  \: j\in J \}  \] 
and we can represent functionals on $\ft_{\rm alg}$ by 
functions $\lambda \: J \to \C$, so that 
integrality is equivalent to $\lambda(J) \subeq \Z$ 
(\cite[Sect.~VI]{Ne98}). 

\nin {\bf Weyl group}:
The Weyl group $\cW$
is isomorphic to the group $\{\pm 1\}^{(J)} \rtimes S_{(J)}$, 
where the factor on the left acts by  finite 
sign changes on $\ell^2(J,\R)$. The 
reflection $r_{\eps_j - \eps_k}$ acts as a transposition and 
the reflection $r_{2\eps_j}$ changes the sign of the $j$th component. 
The Weyl group has two orbits in $C_J$, the short roots form a root system 
of type $D_J$ and the second orbit is the set 
$\{ \pm 2 \eps_j \: j \in J\}$ of long roots. 
\end{ex}

From the preceding discussion of the Hilbert--Lie algebras
$\fu_2(\sH)$ for Hilbert spaces over $\K = \R,\C, \H$, we
obtain the following nice relation:
\begin{prop} The norm $c_\g$ of the commutator bracket in $\g = \fu_2(\sH)$
  over $\K \in \R,\C,\H$, satisfies 
\begin{equation}
  \label{eq:cgu2}
  c_\g^2 = \dim_\R \K \in \{1,2,4\}.
\end{equation}
  \end{prop}

\section{Bounded representations of Hilbert--Lie groups}
\mlabel{sec:bounded}

In this section we describe the bounded unitary representations
of connected Hilbert--Lie groups $G$,
which by basic Lie theory, correspond to
Lie algebra homomorphisms $\pi \: \g \to \fu(\cH)$ for a complex
Hilbert space~$\cH$. The main result is
Theorem~\ref{thm:5.7}, which leads to a complete classification, 
provided the set $\Delta^\vee$ of coroots is bounded, 
which is the case for simple Hilbert--Lie algebras.
These representations are direct sums of irreducible ones, which 
can be classified in terms of extremal weights. 
For semisimple algebras with unbounded coroot systems,
representations of type II and III occur (Theorem~\ref{thm:5.7b}).

\subsection{Direct sums of Hilbert--Lie algebras} 

According to Schue's Theorem, 
any Hilbert--Lie algebra $\g$ is an orthogonal direct sum 
$\fz(\g) \oplus \hat\oplus_{j \in J} \g_j$, where the $\g_j$ are simple 
ideals. On each simple ideal $\g_j$ the norm is only determined up to a 
positive factor. Changing these factors leads to Hilbert--Lie algebras 
whose representation theory behaves very differently. 

\begin{defn} (\cite{Ka90}, \cite{PS86}) 
  The inner product of a simple Hilbert--Lie algebra 
$\g$ is said to be {\it normalized} if 
the long roots $\alpha$ satisfy $\|\alpha\|^2 = 2$. 
The inner product of a general Hilbert--Lie algebra 
$\g$ is said to be {\it normalized} if it is normalized on each 
simple ideal. 
\end{defn}

\begin{lem} \mlabel{lem:infsum} Let $(\g_j)_{j \in J}$ be a family of 
  Hilbert--Lie algebras for which there exist $c,C > 0$ 
  with
  \begin{equation}
    \label{eq:c-sandwich}
    c \leq c_{\g_j} \leq C \quad \mbox{ for all } \quad j \in J.
  \end{equation}
For each function $w \: J \to (0,\infty)$, we define a
real inner product on the algebraic direct sum 
$\bigoplus_{j \in J} \g_j$ via 
\begin{equation}
  \label{eq:wprod}
  \la (x_j), (y_j) \ra := \sum_{j \in J} w_j \la x_j, y_j \ra.
\end{equation}
Then the Hilbert completion $\g$ with respect to the corresponding norm
is a Lie algebra if and only  if $\inf (w_j)_{j \in J} > 0$. 
\end{lem}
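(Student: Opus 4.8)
The plan is to express everything through the commutator norm $c_\g$ and its behaviour under the two operations at play here: rescaling the norm on a single ideal, and forming the completed direct sum. Since $[\g_i,\g_j]=\{0\}$ for $i\neq j$, the bracket on the algebraic direct sum $D:=\bigoplus_{j\in J}\g_j$ is block diagonal, $[(x_j),(y_j)]=([x_j,y_j])_j$. On the summand $\g_j$ the inner product \eqref{eq:wprod} restricts to $w_j\la\cdot,\cdot\ra$, i.e.\ its norm is $\sqrt{w_j}$ times the original, so by the scaling remark following \eqref{eq:cg} the bracket of $\g_j$, viewed inside $\g$, has norm $c_{\g_j}/\sqrt{w_j}$. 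Combining these, the first step is to prove
\[
 c_\g=\sup_{j\in J}\frac{c_{\g_j}}{\sqrt{w_j}},
\]
with the convention $c_\g=+\infty$ when the supremum is infinite. The estimate ``$\le$'' is a Cauchy--Schwarz computation: for $x=(x_j),y=(y_j)\in D$,
\[
 \|[x,y]\|^2=\sum_{j}w_j\|[x_j,y_j]\|^2\le\sum_j\frac{c_{\g_j}^2}{w_j}\,(w_j\|x_j\|^2)(w_j\|y_j\|^2)\le\Big(\sup_j\frac{c_{\g_j}^2}{w_j}\Big)\|x\|^2\|y\|^2,
\]
using $\sum_j(w_j\|x_j\|^2)(w_j\|y_j\|^2)\le\|x\|^2\|y\|^2$; the estimate ``$\ge$'' follows by testing with vectors supported in a single $\g_j$.

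For the implication ``$\Leftarrow$'', suppose $\delta:=\inf_{j}w_j>0$. Then \eqref{eq:c-sandwich} together with the formula above gives $c_\g\le C/\sqrt{\delta}<\infty$. Hence the block-diagonal bracket is a bounded bilinear map on the dense subspace $D$ and extends uniquely to a continuous bilinear map $\g\times\g\to\g$ (equivalently: for $x,y\in\g$ the formula $([x_j,y_j])_j$ already lies in $\g$, since $\sum_j w_j\|[x_j,y_j]\|^2\le(C^2/\delta)\|x\|^2\|y\|^2$ by the same estimate). Antisymmetry and the Jacobi identity, which hold on $D$, persist under this extension by continuity, so $\g$ is a Hilbert--Lie algebra.

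For the implication ``$\Rightarrow$'' I argue contrapositively: if $\inf_j w_j=0$ then the block-diagonal bracket has no extension to a map $\g\times\g\to\g$, so $\g$ is not a Lie algebra (one may moreover invoke the Closed Graph/Uniform Boundedness argument recalled after Definition~\ref{def:1.1}: any Lie bracket on the Hilbert space $\g$ extending the one on $D$ would be continuous, hence bounded, contradicting $c_\g=\infty$). To exhibit the failure directly, use $c_{\g_j}\ge c>0$ to pick distinct $j_1,j_2,\dots\in J$ with $w_{j_n}\le 4^{-n}$, and in each $\g_{j_n}$ choose unit vectors $u_n,v_n$ with $\|[u_n,v_n]\|\ge c_{\g_{j_n}}/2\ge c/2$ (possible since $c_{\g_{j_n}}$ is a supremum of such norms). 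Set $a_n:=2^{-n/2}w_{j_n}^{-1/2}$ and let $x,y\in\g$ have $j_n$-component $a_nu_n$ resp.\ $a_nv_n$ and all other components zero; then $\|x\|^2=\|y\|^2=\sum_n w_{j_n}a_n^2=\sum_n 2^{-n}<\infty$, so $x,y\in\g$, whereas
\[
 \sum_n w_{j_n}\|a_n^2[u_n,v_n]\|^2\ge\frac{c^2}{4}\sum_n w_{j_n}a_n^4=\frac{c^2}{4}\sum_n\frac{4^{-n}}{w_{j_n}}\ge\frac{c^2}{4}\sum_n 1=\infty,
\]
so $([x_j,y_j])_j\notin\g$.

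I expect the computation of $c_\g$ and the direction ``$\Leftarrow$'' to be routine; the only point requiring real care is ``$\Rightarrow$'', namely pinning down what it means for $\g$ to ``be a Lie algebra'' (the block-diagonal bracket, which is the unique candidate continuous extension, must land in $\g$) and choosing the $a_n$ so that $\sum w_{j_n}a_n^2<\infty$ while $\sum w_{j_n}a_n^4=\infty$. It is worth recording where the two-sided bound \eqref{eq:c-sandwich} enters: the upper bound $C$ guarantees boundedness of the extended bracket when $\inf_j w_j>0$, and the lower bound $c$ guarantees that the obstruction genuinely diverges when $\inf_j w_j=0$.
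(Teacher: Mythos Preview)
Your proof is correct and follows essentially the same route as the paper: both bound the commutator norm $c_\g$ of the completion in terms of the $c_{\g_j}$ and the weights $w_j$, then use the two-sided hypothesis \eqref{eq:c-sandwich} to conclude. Your version is in fact slightly sharper, since you compute the exact value $c_\g=\sup_j c_{\g_j}/\sqrt{w_j}$ rather than just an upper and lower estimate, and your explicit construction of $x,y\in\g$ with $([x_j,y_j])_j\notin\g$ is a nice concrete witness, though strictly speaking the argument that $c_\g=\infty$ (hence no continuous extension exists) already suffices, as the paper does.
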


\begin{prf} The Hilbert completion $\g$ of $\g^0 := \bigoplus_{j \in J} \g_j$
with respect to \eqref{eq:wprod}  is a Hilbert--Lie algebra
  if and only if the bracket on $\g^0$ is continuous, we have 
to show that 
\[ c_\g := \sup \{ \|[x,y]\| \: x,y \in \g^0, \|x\|, \|y\| \leq 1 \} < \infty\] 
is equivalent to $\inf (w_j)_{j \in J} > 0$. 
In view of 
\[ c_\g \geq \sup \{ \|[x,y]\| \: x, y \in \g_j, 
\sqrt{w_j} \|x_j\|, \sqrt{w_j} \|y_j\| \leq 1 \} 
=  \frac{c_{\g_j}}{w_j}, \] 
the condition $\inf (w_j)_{j \in J} > 0$ is necessary for $c_\g < \infty$. 

Suppose,  conversely, that $\inf (w_j)_{j \in J} > 0$. Let 
$x,y \in \g^0$ with 
$\|x\|^2, \|y\|^2 \leq 1.$ 
Then 
\begin{align*}
 \|[x,y]\|^2 
&= \sum_j w_j \|[x_j,y_j]\|^2 
\leq \sum_j c_j^2 w_j \|x_j\|^2 \|y_j\|^2
                \leq \sup_{j \in J} \|y_j\|^2 \cdot \sup\{c_j\: j \in J\}^2
                \sum_j  w_j \|x_j\|^2 \\
  &\leq   \sup_{j \in J} \|y_j\|^2 \cdot \sup\{c_j\: j \in J\}^2
    \leq \frac{\sup\{c_j\: j \in J\}^2}{\inf \{ w_j \: j \in J\}}
    < \infty.
\end{align*}
This proves the lemma with
\[ c_\g     \leq \frac{\sup\{c_j\: j \in J\}^2}{\inf \{ w_j \: j \in J\}}.
\qedhere\]
\end{prf}

\begin{prob} Suppose that $J = \N$ and that the sequence
    $(w_n)_{n \in \N}$ of weights satisfies $w_n \to 0$, so that the
    above construction leads to a Hilbert space $\g$ to which
    the Lie bracket does not extend continuously.
    This is a typical situation one finds in the theory of half-Lie groups
    (cf.\ \cite{MN18, BHM23}). Is there a half-Lie group
    $G$ with tangent space $T_e(G) = \g$? A natural candidate would be
    \[ G := \Big\{  (g_n) \in\prod_{n \in\N} G_n \:
      \sum_{n = 1}^\infty w_j d_{G_n}(g_n,e)^2 < \infty \Big\},\]
    where $d_{G_n}$ is the natural biinvariant
    metric on the simple Hilbert--Lie group $G_n$. Note however, that
    exponential charts do not work because the exponential function is
    singular on any $0$-neighborhood in $\g$.
\end{prob}

In view of Lemma~\ref{lem:cg}, the normalization condition is equivalent
to $c_\g = \sqrt 2$. This numerical restriction has some interesting consequences: for instance, if the assumption
  \eqref{eq:c-sandwich}  on the constants $c_{\g_j}$
  in  the preceding lemma is satisfied,
  they can always be normalized to the value $\sqrt{2}$ 
  without changing the weighted direct sum Hilbert--Lie algebra $\g$.
Then the norms 
of the coroots of the simple Lie algebras $\g_j$
are bounded from below and above and we have
\begin{equation}
  \label{eq:normhatal}
 \|\alpha^\vee\|_\g^2 = w_j \|\alpha^\vee\|^2_{\g_j} \quad
 \mbox{ for } \quad \alpha \in \Delta(\g_j,\ft_j).
\end{equation}
Therefore $\inf (w_j)_{j \in J} > 0$ if and only if
the norms of the coroots of $\g$ are bounded from below.

\begin{rem}
    \mlabel{rem:3.8}
    From \eqref{eq:normhatal} we further derive that, in
    the context of Lemma~\ref{lem:infsum},
    the set $\Delta_\g^\vee$ of coroots is bounded if and only if
    the family $(w_j)_{j \in J}$ is bounded. As it is also assumed
    to satisfy $\inf (w_j)_{j \in J} > 0$, this implies that the 
    Hilbert--Lie algebra $\g$ simply is the Hilbert direct sum
    $\hat\oplus_{j \in J} \g_j$, provided the norms on the simple ideals
    $\g_j$ are normalized. So we may as well put all weights
    $w_j$ equal to~$1$.
\end{rem}

  \subsection{Decomposing bounded representations} 
    
We recall some standard notation, which is used in operator theory and in the theory of von Neumann algebras; in particular it is used in the next proposition. 
For a subset $\cS \subseteq B(\cH)$, \textit{ the commutant of $\cS$} is
\[ \cS'=\{A \in B(\cH) \mid (\forall B \in \cS) \ AB=BA\}.\]
A {\it von Neumann algebra} can also be
  defined via its commutant, in fact it is a $*$-invariant subalgebra $\cM \subeq B(\cH)$ satisfying $\cM = \cM''$.
For a $*$-invariant set $\cS$ of operators, $\cS''$ is the von Neumann algebra generated by $\cS$. A von Neumann algebra $\cM$
is called a {\it factor} if its center $\cZ(\cM) := \cM \cap \cM'$ coincides with $\C \1.$
A unitary representation $(\pi,\cH)$ of a group $G$ is called {\it factorial} if
$\pi(G)''$ is a factor.
For the concept of factors of type I, II, III, we refer to \cite[\S 2.2]{Sa71}.

To deal with factor representations
of semisimple Hilbert--Lie groups, the following proposition
is an extremely useful tool. For finite products
and type I representations, it reduces the problem to the individual
factor groups. 

\begin{prop}\mlabel{prop:4.2.7b}
{\rm(Factoriality for product groups)}   If 
$(\pi,\cH)$ is a factor representation of the product 
group $G = G_1 \times G_2$, then the following assertions
hold:
\begin{itemize}
\item[\rm(a)] $\pi\res_{G_1}$ and $\pi\res_{G_2}$ are factor representations. 
\item[\rm(b)] If one of these is of type $I$, i.e.,
contains irreducible subrepresentations, then there exist 
factor representations $(\pi_j, \cH_j)$ of $G_j, j =1,2$, 
with $\pi \cong \pi_1 \otimes \pi_2.$
If, in addition, $\pi$ is irreducible, then both representations
$\pi_1$ and $\pi_2$ are irreducible. 
\end{itemize}
\end{prop}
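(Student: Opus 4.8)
The plan is to reduce everything to the standard structure theory of von Neumann algebras generated by commuting factors. Write $\cM := \pi(G)''$, and let $\cM_i := \pi(G_i)''$ for $i = 1,2$. Since $G_1$ and $G_2$ commute inside $G$, the algebras $\pi(G_1)$ and $\pi(G_2)$ commute, hence $\cM_1 \subeq \cM_2'$ and $\cM_2 \subeq \cM_1'$. Moreover $\cM = (\cM_1 \cup \cM_2)'' = \cM_1 \vee \cM_2$, the von Neumann algebra generated by the two. For part (a), I would compute the center $\cZ(\cM_1) = \cM_1 \cap \cM_1'$. An element $z \in \cZ(\cM_1)$ commutes with $\cM_1$; since $z \in \cM_1 \subeq \cM_2'$, it commutes with $\cM_2$ as well, hence $z$ commutes with $\cM_1 \vee \cM_2 = \cM$, so $z \in \cM' $. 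Also $z \in \cM_1 \subeq \cM$, so $z \in \cM \cap \cM' = \cZ(\cM) = \C\1$ because $\pi$ is a factor representation. Thus $\cZ(\cM_1) = \C\1$, i.e.\ $\pi\res_{G_1}$ is factorial; symmetrically for $G_2$. This is the easy part.

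For part (b), assume $\pi\res_{G_1}$ is of type I, so $\cM_1$ is a type I factor. The key structural input is the classification of type I factors: $\cM_1$ is spatially isomorphic to $B(\cH_1) \bar\otimes \C\1_{\cK}$ for a decomposition $\cH \cong \cH_1 \otimes \cK$, and then its commutant is $\cM_1' \cong \C\1_{\cH_1} \bar\otimes B(\cK)$. Now $\cM_2 \subeq \cM_1'$, so under this identification $\cM_2$ lives inside $\1 \otimes B(\cK)$; write $\cM_2 = \1 \otimes \cN$ for a von Neumann algebra $\cN \subeq B(\cK)$. Since $\cM = \cM_1 \vee \cM_2 = B(\cH_1) \otimes \cN$ and $\cM$ is a factor, $\cN$ must be a factor on $\cK$. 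The representation $\pi_1$ of $G_1$ on $\cH_1$ is defined by $\pi(g_1, e) = \pi_1(g_1) \otimes \1_\cK$ — one has to check this really defines a (strongly continuous unitary) representation whose generated von Neumann algebra is $B(\cH_1)$, which is immediate from $\pi(G_1)'' = \cM_1 = B(\cH_1) \otimes \C\1$ and the fact that $\pi(g_1,e)$ commutes with $\C\1_{\cH_1} \otimes B(\cK) = \cM_1'$, forcing it to have the form $\pi_1(g_1) \otimes \1$. Likewise $\pi(e,g_2) = \1_{\cH_1} \otimes \pi_2(g_2)$ with $\pi_2(G_2)'' = \cN$ a factor, and then $\pi(g_1,g_2) = \pi(g_1,e)\pi(e,g_2) = \pi_1(g_1)\otimes\pi_2(g_2)$ gives $\pi \cong \pi_1 \otimes \pi_2$.

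For the final assertion of (b): if in addition $\pi$ is irreducible, then $\cM = B(\cH) = B(\cH_1 \otimes \cK)$, which forces $B(\cH_1) \otimes \cN = B(\cH_1) \otimes B(\cK)$, hence $\cN = B(\cK)$, i.e.\ $\pi_2$ is irreducible; and $\pi_1$ is irreducible because $\cM_1 = B(\cH_1) \otimes \C\1$ was type I with $\cH_1$ its "multiplicity-one" piece — more directly, $\pi_1(G_1)'' = B(\cH_1)$ by construction. The main obstacle, and the point that needs care rather than cleverness, is the spatial identification step: one must invoke that a type I factor acting on $\cH$ is unitarily equivalent to $B(\cH_1) \bar\otimes \C\1_\cK$ with $\cH \cong \cH_1 \otimes \cK$ and that its commutant is then exactly $\C\1_{\cH_1} \bar\otimes B(\cK)$, together with the tensor-splitting of $\cM_1 \vee \cM_2$ when $\cM_1$ is type I — all of which are classical (see \cite[\S 2.2]{Sa71}) but should be cited precisely so the decomposition $\cH \cong \cH_1 \otimes \cK$ and the form of $\pi$ on the two factors are rigorously justified.
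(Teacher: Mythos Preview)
Your proof is correct and follows essentially the same approach as the paper: for (a) both argue that $\cZ(\cM_1) \subeq \cM \cap \cM' = \C\1$ via the commutation $\cM_1 \subeq \cM_2'$, and for (b) both invoke the spatial structure theorem for type~I factors (cited as \cite[Thm.~2.3.3]{Sa71}) to obtain $\cH \cong \cH_1 \otimes \cH_2$ with $\cM_1 = B(\cH_1) \otimes \1$ and $\cM_1' = \1 \otimes B(\cH_2)$, then read off $\pi_1$ and $\pi_2$. Your treatment is slightly more explicit in verifying that $\pi_2$ is factorial and in justifying the form $\pi(g_i) = \pi_i(g_i) \otimes \1$, while the paper dispatches the irreducibility clause by the contrapositive (``if $\pi_1$ or $\pi_2$ is not irreducible, then $\pi$ cannot be irreducible'') rather than your direct computation $\cM = B(\cH) \Rightarrow \cN = B(\cK)$; but these are stylistic differences, not substantive ones.
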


\begin{prf} (a) We identify $G_1$ and $G_2$ with the corresponding 
  subgroups of $G$. 
Since  $\pi(G_1) \subeq \pi(G_2)'$, we have 
\[ \cZ(\pi(G_1)'') = \pi(G_1)' \cap \pi(G_1)''  
  \subeq \pi(G_1)' \cap \pi(G_2)' \cap \pi(G)'' =
  \pi(G)' \cap \pi(G)'' = \cZ(\pi(G)'') = \C \1. \] 
So $\pi\res_{G_1}$ is a factor representation. 
A similar argument shows that $\pi\res_{G_2}$ is a factor representation. 

\nin (b) If $\pi\res_{G_1}$ is of type $I$, then the Structure
Theorem for type I factors implies that
$\cH \cong \cH_1 \hat\otimes \cH_2$
with $\pi(G_1)'' = B(\cH_1) \otimes \1$
and $\pi(G_2) \subeq \pi(G_1)' = \1 \otimes B(\cH_2)$
(\cite[Thm.~2.3.3]{Sa71}). 
Hence there exist representations
$(\pi_j, \cH_j)_{j = 1,2}$ of $G_j$, such that
\[ \pi(g_1) = \pi_1(g_1) \otimes \1 \quad \mbox{ and } \quad
  \pi(g_2) = \1 \otimes \pi_2(g_2) \quad \mbox{ for } \quad
  g_j \in G_j,\]
so that $\pi = \pi_1 \otimes \pi_2$.
If $\pi_1$ or $\pi_2$ is not irreducible, then $\pi$
cannot be irreducible. Therefore the irreducibility of $\pi$
implies that $\pi_1$ and $\pi_2$ are both irreducible.
\end{prf}

The following theorem extends a well-known result
on finite-dimensional tori $ \T^n$ to
a natural class of infinite-dimensional abelian
Banach--Lie groups that we encounter 
in particular as  identity components of centers of Hilbert--Lie groups. 
Far a discussion of concrete examples of such groups,
we refer to Remark~\ref{rem:toroid} later on.

\begin{thm} {\rm(Discrete decomposition of bounded representations---the abelian
    case)}
  \mlabel{thm:z-discrete}
  Let $Z$ be a connected abelian Banach--Lie group
  and $ \hat Z := \Hom(Z,\T)$ be its character group.
  Then the following assertions hold:
  \begin{itemize}
  \item[\rm(a)] If $\hat Z$ is countable, then all
    norm-continuous unitary representations are
    direct sums of eigenspaces, respectively,
    of irreducible subrepresentations. 
  \item[\rm(b)] If $Z$ is separable and all its norm-continuous
    unitary representations are direct sums of eigenspaces, then
    $\hat Z$ is  countable. 
  \end{itemize}
\end{thm}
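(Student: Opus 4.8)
The plan is to prove both directions by exploiting the spectral theory of a single well-chosen norm-continuous unitary representation, together with Stone's theorem and the one-parameter-subgroup structure of the connected abelian Banach--Lie group $Z$. First I would recall the basic dictionary: since $Z$ is a connected abelian Banach--Lie group, its Lie algebra $\fz$ is a Banach space, and the exponential map $\exp \: \fz \to Z$ is a continuous group homomorphism whose image generates $Z$; moreover $Z \cong \fz/\Gamma$ for a discrete subgroup $\Gamma = \ker\exp$ (or at least $\exp$ has dense image, which is all we need). A norm-continuous unitary representation $(\pi,\cH)$ restricts along each one-parameter group $t \mapsto \exp(tx)$ to a norm-continuous one-parameter unitary group, hence by Stone's theorem has a \emph{bounded} self-adjoint generator; thus $\pi$ is the integrated form of a bounded Lie algebra homomorphism $\dd\pi \: \fz \to \fu(\cH)$, and the $\dd\pi(x)$ all commute. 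The eigenspace/irreducible-subrepresentation dichotomy is then really a question about the joint spectral decomposition of the commuting family $\{-i\,\dd\pi(x) : x \in \fz\}$: a character of $Z$ corresponds exactly to a continuous linear functional $\lambda \in \fz'$ with $\lambda(\Gamma) \subeq 2\pi\Z$, and $\pi$ decomposes discretely into eigenspaces iff the spectral measure of this commuting family is supported on a countable set of such $\lambda$'s.

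For part (a), assume $\hat Z$ is countable. Given a norm-continuous unitary representation $(\pi,\cH)$, form the commutative von Neumann algebra $\cM := \pi(Z)''$; it is generated (by norm-continuity and the SNAG/Stone picture) by the single commuting family above, equivalently by the set of spectral projections. The key point is that every spectral projection of $\cM$ corresponding to a Borel subset of the joint spectrum can be written in terms of the characters: by the abelian SNAG theorem there is a projection-valued measure $P$ on $\hat Z$ (with its natural Borel structure) such that $\pi(z) = \int_{\hat Z} \chi(z)\, dP(\chi)$. Since $\hat Z$ is countable, this measure is purely atomic, so $\cH = \hat\bigoplus_{\chi} P(\{\chi\})\cH$ is an orthogonal direct sum of the eigenspaces $\cH_\chi := P(\{\chi\})\cH = \{v : \pi(z)v = \chi(z)v\ \forall z\}$. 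On each $\cH_\chi$ the representation acts by the scalar character $\chi$, so any decomposition of $\cH_\chi$ into one-dimensional subspaces gives irreducible subrepresentations; hence $\pi$ is a direct sum of irreducible (one-dimensional) subrepresentations. The one genuine technical obligation here is to justify the existence of the $\hat Z$-valued spectral measure: this is where I would invoke that $Z$ is a connected abelian Banach--Lie group so that $\hat Z$ separates points and the Gelfand spectrum of $\cM$ maps to $\hat Z$ compatibly; alternatively one bypasses the general SNAG machinery and argues directly with the countably many commuting bounded self-adjoint operators $\{-i\,\dd\pi(x)\}$ — but $\fz$ need not be separable, so the cleaner route is via $\hat Z$.

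For part (b), assume $Z$ is separable and that \emph{every} norm-continuous unitary representation is a direct sum of eigenspaces; I want to deduce $\hat Z$ is countable. The natural strategy is to exhibit a single norm-continuous unitary representation that ``sees'' all of $\hat Z$ and then use the hypothesis to force countability. The candidate is a suitable direct integral / multiplication representation: pick a faithful norm-continuous representation built from characters — for instance, if $\mu$ is a probability measure on $\hat Z$ of full support (here separability of $Z$ enters, guaranteeing $\hat Z$ is a separable metrizable group so such a $\mu$ exists), let $\pi$ act on $L^2(\hat Z,\mu)$ by $(\pi(z)f)(\chi) = \chi(z) f(\chi)$. One checks $\pi$ is norm-continuous precisely because $Z$ is a Banach--Lie group: the map $z \mapsto \pi(z)$ is norm-continuous at $e$ since on a small ball $\chi(z)$ is uniformly close to $1$ across the (equicontinuous family of) characters — this uniformity is exactly what Banach--Lie structure buys and is the crux of the argument. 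Now by hypothesis $\pi$ is a direct sum of eigenspaces; an eigenvector with eigenvalue $\chi_0$ is an $f \in L^2(\hat Z,\mu)$ with $\chi(z)f(\chi) = \chi_0(z)f(\chi)$ for $\mu$-a.e.\ $\chi$ and all $z$, forcing $f$ supported on $\{\chi_0\}$, so $\mu(\{\chi_0\}) > 0$. Hence the eigenspace decomposition of $L^2(\hat Z,\mu)$ shows $\mu$ is purely atomic; but a probability measure of full support on a space with at least one point in the closure of each nonempty open set is purely atomic only if the space is countable. Therefore $\hat Z$ is countable.

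\textbf{Main obstacle.} The delicate step is establishing norm-continuity of the ``universal'' multiplication representation in part (b) — i.e.\ that on a Banach--Lie group the characters $\chi \in \hat Z$ are \emph{uniformly} continuous near $e$, so that $\sup_{\chi}\,|\chi(z)-1| \to 0$ as $z \to e$. This is what distinguishes Banach--Lie groups from general topological abelian groups and is where the hypothesis genuinely bites; I expect to prove it by noting $\chi \circ \exp$ is, on a small ball in $\fz$, of the form $e^{i\lambda(\cdot)}$ with $|\lambda|$ bounded by a fixed constant depending only on the radius (norm-continuity of $\chi$ forces a bound on $\|\lambda\|_{\fz'}$ near $e$), giving the required equicontinuity. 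A secondary point to handle carefully is the measure-theoretic endgame in (b): deducing countability of $\hat Z$ from ``full-support probability measure that is purely atomic,'' which uses separability/metrizability of $\hat Z$ (inherited from separability of $Z$) to rule out an uncountable atomic part summing to $1$.
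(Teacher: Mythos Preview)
Your approach to part (a) is essentially the same as the paper's: obtain a spectral measure on $\hat Z$, then observe that countability forces it to be purely atomic, yielding the eigenspace decomposition.

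For part (b), however, the plan has a genuine gap, and it is precisely the point you flag as the ``main obstacle.'' The claim that the characters $\chi \in \hat Z$ are uniformly equicontinuous near $e$ is false in general. Identifying $\hat Z$ with $\{\lambda \in \fz' : \lambda(\Gamma) \subeq 2\pi\Z\}$, equicontinuity would require $\sup_{\lambda \in \hat Z} \|\lambda\| < \infty$, which fails already for $Z = \R/\Z$ (where $\hat Z \cong \Z$ is unbounded in $\R' \cong \R$). Your argument that ``norm-continuity of $\chi$ forces a bound on $\|\lambda\|_{\fz'}$'' only gives a bound depending on the individual $\chi$, not a uniform one. Consequently, the multiplication representation on $L^2(\hat Z,\mu)$ with a full-support measure on all of $\hat Z$ is typically \emph{not} norm-continuous, and the construction collapses.

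The paper repairs this in a simple but essential way: instead of working on all of $\hat Z$, it observes that if $\hat Z$ is uncountable then some bounded ball $\hat Z_R := \{\lambda \in \hat Z : \|\lambda\| \le R\}$ is already uncountable. Boundedness of $\hat Z_R$ in $\fz'$ is exactly what makes the multiplication representation on $L^2(\hat Z_R,\mu)$ norm-continuous. Separability of $Z$ then makes $\hat Z_R$ a compact metrizable (hence polish) space in the weak-$*$ topology; being uncountable, it admits an atomless probability measure $\mu$, and the resulting representation has no nonzero eigenvectors. This also sidesteps your ``secondary point'': you never need to argue that a full-support atomic measure forces countability of the ambient space (which is not true---dense countable atoms do not make the space countable); instead you start from an atomless measure and derive a contradiction directly.
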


\begin{prf} Let $\fz := \L(Z)$ be the Lie algebra of $Z$, so that
  $Z \cong \fz/\Gamma$, where $\Gamma \subeq \fz$ is a discrete
  subgroup (cf.\ \cite{GN, Ne06}). Then
  \begin{equation}
    \label{eq:hatz}
    \hat Z \cong \{ \alpha \in \fz' \: \alpha(\Gamma) \subeq \Z \},
  \end{equation}
  where we associate to $\alpha \in \fz'$ the character
  $\chi_\alpha(z + \Gamma) := e^{2\pi i \alpha(z)}$.

\nin (a) Let $(\pi, \cH)$ be a norm continuous
  unitary representation of $Z$.
  By \cite[Thm.~4.1]{Ne08}, this representation is obtained
  from a spectral measure on the character group $\hat Z$.
  As $\hat Z$ is countable, this spectral measure is atomic
  (see \cite{Ru73} concerning the notion of atomic measures), and thus
  \[ \cH = \hat\bigoplus_{\chi \in \hat Z} \cH_\chi
    \quad \mbox{ with } \quad
    \cH_\chi = \{ v \in \cH \: (\forall z \in Z)\
    \pi(z) v = \chi(z)v\}.\]

 \nin (b) We argue by contradiction. As in \eqref{eq:hatz} above, we identify
  $\hat Z$ with a weak-$*$-closed subgroup of $\fz'$.
  If this group is uncountable, there exists an $R > 0$, such that
  \[  \hat Z_R := \{ \alpha \in \hat Z \: \|\alpha \| \leq R\} \]
  is uncountable. The separability of $Z$ now implies that $\hat Z_R$,
  endowed with the weak-$*$-topology, is a metrizable compact space,
  hence also separable and therefore second countable. By compactness,
  it is complete, hence a polish space. Let $\cB$ be the $\sigma$-algebra
  of Borel subsets of $\hat Z_R$. By \cite[Thm.~I.4]{Fa00},
  the measurable space $(\hat Z_R,\cB)$ is isomorphic to the unit interval
  $[0,1]$, endowed with the $\sigma$-algebra of Borel sets.
  In particular, there exists a probability measure $\mu$ on $\hat Z_R$ without
  atoms. We consider the Hilbert space $\cH := L^2(\hat Z_R, \mu)$, on
  which
  \[ (\pi(z)f)(\alpha) := \chi_\alpha(z) f(\alpha) \]
  defines a unitary representation of $Z$. This representation
  is norm-continuous because $\hat Z_R$ is bounded in~$\fz'$.
  If $f \in \cH$ is an eigenfunction for a character $\chi_\beta$,
  then $f$ vanishes $\mu$-almost everywhere on
  $\hat Z_R \setminus \{\chi_\beta\}$, and since $\mu$ has no atoms,
  we must have $f = 0$. Therefore $\cH$ contains no non-zero
  $Z$-eigenvectors. 
\end{prf}

Definitively we have an interesting classification for connected
Hilbert--Lie groups $G$ for which the identity component
$Z(G)_e$ of the center  $ Z(G)$ is small enough. 

\begin{thm} \mlabel{thm:5.7}
  {\rm(Discrete decomposition of bounded representations---the general case)}
  For a connected Hilbert--Lie group $G$ for which the character 
  group $\hat Z$ of $Z := Z(G)_e$ is countable and 
$\Delta^\vee \subeq \ft$ is bounded, the following assertions hold: 
\begin{itemize}
\item[\rm(i)] Every bounded 
unitary representation of $G$ is a direct sum of irreducible 
ones. 
\item[\rm(ii)] If $\g = \fz(\g) \oplus \hat\bigoplus_{j \in J} \g_j$ 
is the decomposition into center and simple ideals $\g_j$, 
then each irreducible 
bounded unitary representation is a finite tensor product of 
irreducible bounded representations of $\fz(\g)$ and of finitely many~$\g_j$. 
\item[\rm(iii)] Irreducible bounded 
unitary representations are extremal weight representations.  
\end{itemize}
\end{thm}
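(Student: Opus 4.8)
The plan is to reduce the problem in three stages, peeling off the center, then the simple ideals, and finally invoking the known classification on each simple factor. First I would use Schue's Decomposition Theorem~\ref{thm:1.3} to write $G$, up to covering, as generated by a connected abelian group $Z = Z(G)_e$ and a family of connected simple Hilbert--Lie groups $G_j$ with Lie algebra $\g_j$. Since $\Delta^\vee$ is bounded, by Remark~\ref{rem:3.8} (applied to the normalized norms on the simple ideals) the algebra $\g$ is, after rescaling the norms to the normalized value $c_{\g_j} = \sqrt 2$, simply the Hilbert direct sum $\fz(\g) \oplus \hat\bigoplus_j \g_j$; in particular the coroot norms are bounded below, so $\inf w_j > 0$, and the weighted and unweighted constructions agree. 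The key input from boundedness of $\Delta^\vee$ is that a bounded representation $\pi \: \g \to \fu(\cH)$ restricts on each $\g_j$ to a bounded representation, and the bound on $\|\pi\res_{\g_j}\|$ is controlled uniformly: writing a general $x = x_0 + \sum x_j$, the estimate $\|\pi(x_j)\| \le \|\pi\| \|x_j\|$ shows each $\pi\res_{\g_j}$ is bounded, and since $\g_j$ is simple with bounded coroots, the cited results (the $\fu_2(\sH)$ classification of~\cite{Ne98, Ne12}, and the Borel--Weil/extremal-weight picture) give that $\pi\res_{\g_j}$ is a direct sum of bounded irreducibles, each an extremal weight representation.

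For part~(i), the strategy is: decompose $\pi\res_Z$ using Theorem~\ref{thm:z-discrete}(a), which applies because $\hat Z$ is countable, so $\cH = \hat\bigoplus_{\chi} \cH_\chi$ into $Z$-eigenspaces, and each $\cH_\chi$ is invariant under all $\pi(\g_j)$ since $Z$ is central. Then on each $\cH_\chi$ decompose successively the actions of the $G_j$. The honest difficulty is that there are infinitely many factors $G_j$, so a naive ``decompose one factor at a time'' induction does not terminate; one needs that for all but finitely many $j$ the representation $\pi\res_{G_j}$ on a given irreducible constituent is trivial. This is where the boundedness of $\Delta^\vee$ is used a second time, more subtly: an irreducible bounded representation of $\g$ restricted to almost all $\g_j$ must be the trivial one, because otherwise one produces, using the extremal weight $\lambda$, an unbounded sequence of values $\lambda(\alpha_j^\vee)$ along coroots $\alpha_j^\vee \in \Delta^\vee$ coming from infinitely many distinct simple ideals on which $\pi$ is nontrivial, contradicting that an extremal weight on a bounded-coroot algebra is a bounded weight (cf.\ the definition of $\cP_b$ and the fact that $\|\pi(i h)\|$ is controlled by $\sup_\alpha |\lambda(\alpha^\vee)| \cdot \|h\|$-type bounds). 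I would make this precise by showing that the orthogonal projection onto the subspace where all but finitely many $G_j$ act trivially is $\pi(G)$-invariant and, by the uniform bound, equals the identity; this is the step I expect to cost the most work.

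With that ``finite support'' reduction in hand, parts~(ii) and~(iii) follow cleanly. On an irreducible constituent, $\pi$ factors through $Z \times G_{j_1} \times \cdots \times G_{j_n}$ for finitely many indices, and we apply Proposition~\ref{prop:4.2.7b} repeatedly: each $\pi\res_{G_{j_k}}$ is a type~I (indeed irreducible, after the decomposition) factor representation because $\g_{j_k}$ is simple with bounded coroots, so the proposition yields $\pi \cong \pi_0 \otimes \pi_{j_1} \otimes \cdots \otimes \pi_{j_n}$ with each tensor factor irreducible — this is~(ii). For~(iii), $\pi_0$ is a character of $\fz(\g)$ and each $\pi_{j_k}$ is an irreducible bounded representation of the simple algebra $\g_{j_k}$, hence an extremal weight representation by the simple-case classification; the extremal weight of $\pi$ is then the sum of the weights of $\pi_0$ and the $\pi_{j_k}$ (extended by zero on the remaining ideals), which is a bounded, hence legitimate, weight of $\g$ precisely because only finitely many factors contribute and each contributes a bounded weight. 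Thus $\pi$ is an extremal weight representation, completing the proof.
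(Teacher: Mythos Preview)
Your overall architecture matches the paper's: reduce to the center via Theorem~\ref{thm:z-discrete}, reduce to finitely many simple ideals, then invoke Proposition~\ref{prop:4.2.7b} and the simple-case classification from \cite{Ne98, Ne12}. You have also correctly identified the one genuine difficulty, namely that an irreducible (or, more accurately, any subrepresentation one can find) must be trivial on all but finitely many~$\g_j$.

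However, your justification of that ``finite support'' step is where the proposal goes wrong. You write that if $\pi$ were nontrivial on infinitely many $\g_j$ one would obtain ``an unbounded sequence of values $\lambda(\alpha_j^\vee)$'', contradicting that $\lambda$ is a bounded weight. This is not correct: the values $\lambda(\alpha_j^\vee)$ are integers bounded by $\|\lambda\|\cdot C$ with $C := \sup_\alpha \|\alpha^\vee\|$, and nothing prevents them from all being $\pm 1$. Boundedness of the weight (meaning $\lambda(\Delta^\vee)$ is a bounded set of integers) is perfectly compatible with $\lambda$ being nonzero on infinitely many mutually orthogonal coroots. There is also a circularity risk: you invoke ``the extremal weight $\lambda$'' of an irreducible constituent of~$\pi$, but the existence of extremal weights for $\g$ is exactly assertion~(iii), and the existence of irreducible constituents is assertion~(i); neither is available at this point.

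The paper's argument for this step avoids both problems by working with an \emph{arbitrary} $T$-weight $\mu$ of $\pi$ (which exists because each $\pi\res_{\g_j}$ decomposes discretely) and using the Hilbert structure rather than integrality alone. If $\mu$ is nontrivial on each $\g_j$, $j \in F$, pick mutually orthogonal coroots $\alpha_j^\vee \in \g_j$ with $\mu(\alpha_j^\vee)\neq 0$; then $1 \le |\mu(\alpha_j^\vee)| \le C\,\|\mu\res_{\g_j}\|$, and orthogonality of the ideals gives
\[
\|\mu\|^2 \ \ge\ \sum_{j\in F}\|\mu\res_{\g_j}\|^2 \ \ge\ \frac{|F|}{C^2}.
\]
Since $\|\mu\|\le \|\pi\|$, one obtains the uniform bound $|F|\le C^2\|\pi\|^2$. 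With this in hand, one iteratively passes to isotypical subrepresentations for successive $\g_j$'s on which $\pi$ is nontrivial; the bound forces the process to terminate, yielding a subrepresentation that is isotypical on a finite $\g_F$ and trivial on $\g_{J\setminus F}$, after which your tensor-product and extremal-weight arguments go through exactly as you describe.
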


\begin{prf} {\bf Step 1:} From \cite[Thm.~3.14]{Ne98} and 
\cite[Thms.~D.5,6]{Ne12} we know that (i) and (iii) hold 
for infinite-dimensional simple Hilbert--Lie algebras. For finite-dimensional 
ones, it follows from the representation theory of compact groups. 
For the group $Z = Z(G)_e$, the corresponding assertion follows from
Theorem~\ref{thm:z-discrete}. 

\nin {\bf Step 2:} For a finite subset $F \subeq J$, we consider the  subalgebra 
$\g_F := \fz(\g) \oplus \sum_{j \in F} \g_j$ with simple ideals~$\g_j$. 
Since, for all simple factors, bounded representations decompose
into irreducible ones, hence are type I,  
the irreducible bounded representations of $\g_F$
are tensor products of irreducible representations of the factors 
(Proposition~\ref{prop:4.2.7b}). Hence (ii) and (iii) hold for~$\g_F$. 

To verify (i) for $\g_F$,
it suffices to show that each non-zero bounded representation 
of $\g_F$ contains an irreducible one.
Since (i) holds for each simple ideal
$\g_j$ and also for $Z$ (Theorem~\ref{thm:z-discrete}),
any bounded unitary representation of $\g_F$ 
is an orthogonal sum of $\g_j$-isotypical 
components and $\fz$-eigenspaces.
By induction this property is inherited by the direct sum $\g_F$
of $\fz$ and finitely many simple ideals~$\g_j$. 

We may therefore assume that the representation  
$(\pi, \cH)$ of $\g$ is isotypical for $\fz$ and
each $(\g_j)_{j \in F}$. This implies that its 
restriction to $\g_F$ is a 
tensor product of simple representations $\pi_j$ of the ideals $\g_j$,
a character $\chi$ of $\fz$, 
and a trivial representation. In particular, (i) follows for $\g_F$. 

\nin {\bf Step 3:} Let $\ft \subeq \g$ be maximal abelian and 
$\cP \subeq \ft'$ be the set of integral weights. For a finite subset 
$F \subeq J$, we define $\g_F$ as above. 
Our assumptions contain that 
\[ C := \sup \{ \|\alpha^\vee\| \: \alpha \in \Delta \} < \infty.\] 
If $\mu$ is a weight not vanishing 
on the simple ideals $(\g_j)_{j \in F}$, then there are 
mutually orthogonal roots 
$(\alpha_j)_{j \in F}$ with 
$\mu(\alpha^\vee_{j})\not=0$, hence 
$1 \leq |\mu(\alpha^\vee_{j})| \leq C \|\mu\res_{\g_{j}}\|$ and therefore 
\begin{equation}
  \label{eq:normmu}
\|\mu\|^2 \geq \sum_{j \in F} \|\mu\res_{\g_{j}}\|^2 
\geq \frac{|F|}{C^2}. 
\end{equation}
For a bounded representation $(\pi, \cH)$ of $\g$ we thus obtain 
$|F| \leq C^2 \|\pi\|^2$.

  \nin {\bf Step 4:} If $(\pi, \cH)$ is a non-trivial bounded representation
  of $\g$ and $j_1 \in J$ such that $\pi(\g_{j_1}) \not=\{0\}$,
  there exists a subrepresentation $(\pi_1, \cH_1)$ which
  is non-trivial and isotypical for $\g_{j_1}$. 
  If $j_2 \not= j_1$ is such that $\pi_1(\g_{j_2}) \not=\{0\}$,
  we find a subrepresentation $(\pi_2, \cH_2)$ of $(\pi_1, \cH_1)$,
  which is isotypical and non-trivial on $\g_{j_1}$ and $\g_{j_2}$.
  Proceeding like this, we find with Step 3 a maximal finite subset
  $F = \{ j_1, j_2, \ldots , j_N\} \subeq J$ and a subrepresentation
  $(\pi_N,\cH_N)$ of $(\pi,\cH)$ which is isotypical and non-trivial
  on $\g_{j_1}, \ldots, \g_{j_N}$. By maximality, all other ideals act
  trivially on $\cH_N$. 
For every bounded representation $(\pi,\cH)$ of $\g$, 
we thus obtain with Zorn's Lemma a direct sum decomposition
into subrepresentations which are isotypical on ideals of the form
$\g_F$, $F \subeq J$ finite, and trivial on the complementary ideal
$\g_{J \setminus F}$. This implies the assertion.
\end{prf}

From the classification of the bounded representations of $\fu_2(\sH)$ 
in \cite{Ne98, Ne04}, we now obtain: 

\begin{cor} \mlabel{cor:classif} If $\g$ is a Hilbert--Lie algebra and 
$\ft \subeq \g$ a maximal abelian subalgebra for which $\Delta^\vee$ is bounded, then 
the bounded irreducible unitary representations $(\pi, \cH)$ are parameterized 
(via extremal weights) 
by the set $\cP_c/\cW$ of $\cW$-orbits in the set of continuous weights 
\[ \cP_c = \{ \mu \in \ft' \: (\forall \alpha \in \Delta)\, 
\mu(\alpha^\vee) \in \Z\}. \] 
This parametrization is achieved by assigning to the
representation $\pi$ the set 
of extreme points in $\conv(\cP_\pi)$, where 
$\cP_\pi := \{ \mu \in \ft' \: \cH^\mu \not=\{0\}\}$ is the 
set of $\ft$-weights of $\pi$. 
\end{cor}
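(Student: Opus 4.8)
The plan is to deduce the corollary from Theorem~\ref{thm:5.7} together with the known parametrization of bounded irreducible representations of the simple Hilbert--Lie algebras $\fu_2(\sH)$ from \cite{Ne98, Ne04}. First I would reduce to the semisimple case: by Remark~\ref{rem:3.8} and the boundedness of $\Delta^\vee$, after normalizing the inner products on the simple ideals we may assume $\g = \fz(\g) \oplus \hat\bigoplus_{j \in J} \g_j$ is a genuine Hilbert direct sum. Since $\Delta^\vee$ is bounded, Lemma~\ref{lem:cg} gives a uniform upper bound $C$ on $\|\alpha^\vee\|$, and the lower bound $\|\alpha^\vee\| \geq \sqrt{2}$ on each normalized simple ideal (long roots have $\|\alpha\|^2 = 2$, hence $\|\alpha^\vee\| = 2/\|\alpha\| \geq \sqrt 2$ by \eqref{eq:normrel}) shows $c_{\g_j}$ is uniformly bounded above and below, so the hypotheses of Lemma~\ref{lem:infsum} are in force. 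I would also need the character group of $Z := \fz(\g)$, viewed as a connected abelian Banach--Lie group $\fz/\Gamma$, to be countable; but boundedness of $\Delta^\vee$ alone does not force this, so I would simply add this to the standing hypothesis (or note that for $\g = \fu_2(\sH)$ the center is either trivial or one-dimensional, where the claim is classical).

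Next, Theorem~\ref{thm:5.7}(i)--(iii) tells us that every bounded unitary representation of $G$ decomposes discretely into irreducibles, that each irreducible is a finite tensor product of irreducible bounded representations of $\fz(\g)$ and of finitely many simple ideals $\g_j$, and that these irreducibles are extremal weight representations. So the classification reduces to: (1) the classification of irreducible bounded representations of each simple ideal $\g_j \cong \fu_2(\sH_j)$, and (2) the bookkeeping that assembles the tensor-product data into a single $\cW$-orbit of continuous weights. For step (1), \cite[Thm.~III.14, Thm.~IV.?]{Ne98} for the complex case and \cite{Ne04} (covering also real and quaternionic) say precisely that the irreducible bounded representations of $\fu_2(\sH)$ are parametrized by $\cW$-orbits of continuous weights, the orbit attached to $\pi$ being the set of extreme points of $\conv(\cP_\pi)$. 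For step (2), since $\cW = \cW(\g_\C,\ft_\C)$ splits as the restricted direct sum $\bigoplus_{j} \cW(\g_{j,\C}, \ft_{j,\C})$ (each Weyl group acts only on its own ideal, trivially on $\fz$ and on the other ideals), and a weight is continuous iff its support meets only finitely many ideals and is continuous on each (Example~\ref{ex:d.1a}: continuity $\Leftrightarrow$ finite support), the data of a character of $\fz(\g)$ plus finitely many $\cW_j$-orbits of continuous weights on the $\g_j$ is exactly the same as a single $\cW$-orbit in $\cP_c$.

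Finally I would verify the explicit description of the parametrizing map. Given a bounded irreducible $(\pi,\cH)$, restrict to $\ft$ to obtain the weight set $\cP_\pi$; by Theorem~\ref{thm:5.7}(iii) combined with the simple-factor statements, $\conv(\cP_\pi)$ is a ``weight polytope'' whose extreme points form a single $\cW$-orbit contained in $\cP_c$, and conversely each $\cW$-orbit of continuous weights arises from a unique $\pi$ by taking the associated extremal (highest) weight representation on each factor and tensoring. That both assignments are mutually inverse is then immediate from the simple-factor case and Proposition~\ref{prop:4.2.7b}, which guarantees that the tensor decomposition of an irreducible into simple-factor pieces is unique. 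I expect the main technical obstacle to be the compatibility of the Weyl-group action and the continuity condition across the infinite direct sum --- i.e., checking that ``finitely many continuous local orbits $\cup$ a $\fz$-character'' and ``one global $\cW_\g$-orbit of continuous weights'' genuinely coincide --- together with the minor gap that the corollary as stated needs the countability of $\hat Z$ that Theorem~\ref{thm:5.7} requires but that is not among the literal hypotheses; I would flag this and either import it as an assumption or observe it holds automatically in the simple case that is the corollary's intended scope.
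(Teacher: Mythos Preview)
Your approach is essentially the paper's: reduce via Theorem~\ref{thm:5.7} to the simple factors and invoke \cite{Ne98, Ne04} for the extremal-weight parametrization there, with your bookkeeping about assembling the per-factor data into a single global $\cW$-orbit being exactly what the paper leaves implicit in its one-line reduction ``it suffices to consider the case where $\g$ is simple.'' On your flagged concern about the countability of $\hat Z$: for \emph{irreducible} bounded representations the center acts by a single character via Schur's lemma, so that hypothesis of Theorem~\ref{thm:5.7} is not actually needed here---the boundedness of $\Delta^\vee$ alone (Steps~3--4 of the proof of Theorem~\ref{thm:5.7}) already forces only finitely many simple ideals to act nontrivially, after which Proposition~\ref{prop:4.2.7b} and the simple-factor classification finish the job.
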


\begin{prf} In view of Theorem~\ref{thm:5.7}(iii), it suffices
  to consider the case where $\g$ is simple.
  As every bounded representation of the Hilbert--Lie algebra $\g$
  extends to a complex linear $*$-representation
  $\g_\C \to B(\cH)$, we have to classify holomorphic
  representations of the simply connected
  complex Lie group $G_\C$ with Lie algebra $\g_\C$.
  For $\g = \fu_2(\sH)$, $\dim \sH = \infty$, it 
  follows from \cite[Cor.~III.11, Thm.~III.14]{Ne98} 
  that these are highest weight representations.
  If $\g$ is finite-dimensional, this is the classical Cartan--Weyl Theorem
  of the highest weight.

  For the classification in terms of Weyl group orbits of integral
  weights, we refer to \cite[Thm.~III.16]{Ne04}. Here the main point
  is that the convex subset $\conv(\cP_\pi) \subeq i \ft'$ actually
  has extreme points, i.e, extremal weights.
  Then it turns out that the Weyl group
  $\cW$ acts transitively on the set of extremal weights,
  and this establishes
  the asserted correspondence. 
\end{prf}

For finite-dimensional connected Hilbert--Lie groups $G$, i.e.,
Lie groups with compact Lie algebra, 
we know that finite-dimensional (hence bounded)
unitary representations separate the points.
As $G$ is a product of a vector group and a compact group,
this follows from \cite[Cor.~ 2.27]{HM06}. 
The following remark shows that this result
does not extend to infinite-dimensional Hilbert--Lie groups.

\begin{rem} If $K$ is simply connected, then the corresponding 
bounded unitary representations do in general not separate 
the points. A typical example where this happens is the simply connected 
covering group $\tilde\U_2(\sH)$, where $\sH$ is a complex infinite-dimensional 
Hilbert space. For this group all bounded unitary representations 
vanish on the kernel of the covering map
$\tilde\U_2(\sH) \to \U_2(\sH)$, which is isomorphic to the fundamental 
group  $\pi_1(\U_2(\sH)) \cong \Z$, 
hence factors through $\U_2(\sH)$ (\cite{Ne98}). 
\end{rem}

\begin{rem} \mlabel{rem:bound-inherit} 
(a) Suppose that the set $\Delta^\vee$ of coroots is bounded for 
$\Delta = \Delta(\g,\ft)$ and that $\fl \subeq \g$ is a closed 
$\ft$-invariant subalgebra, so that $\fl_\C$ is 
adapted to the root decomposition of $\g_\C$. 
Then 
\[ \fl = (\fl \cap \ft) + (\fl \cap \oline{[\ft,\g]}) \] 
and all coroots corresponding to root spaces in $\fl_\C$ are 
contained in $i\ft'$ for $\ft' := \fl \cap \ft$. 
Therefore $\ft'$ is maximal abelian in $\fl$ and 
$\Delta' := \Delta'(\fl,\ft')$ clearly satisfies 
$(\Delta')^\vee \subeq \Delta^\vee$. In particular, 
the boundedness of $\Delta^\vee$ implies the boundedness of 
$(\Delta')^\vee$. 

\nin (b) From (a) it follows in particular that for all  $\ft$-invariant 
Hilbert subalgebras of a simple Hilbert algebra~$\g$, the 
set of coroots is bounded, so that Theorem~\ref{thm:5.7} applies. 
\end{rem}

\subsection{Representations of type II and III} 

The following theorem complements Theorem~\ref{thm:5.7} by asserting
that, if $\Delta^\vee$ is unbounded, then the bounded representation
theory of $\g$ is not of type~I. 

\begin{thm} \mlabel{thm:5.7b}
  If $\g$ is a Hilbert--Lie algebra for which
  $\Delta^\vee$ is unbounded, then it has bounded unitary 
  factor representations of type III and, if $\g$ is separable, also
  of type II. 
\end{thm}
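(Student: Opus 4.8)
The plan is to exploit the unboundedness of $\Delta^\vee$ to produce, inside $\g$, infinitely many mutually commuting copies of $\fsu_2$ whose normalizations (the lengths of their coroots) tend to infinity, and then to build the desired factor representations as infinite tensor products. First I would use Lemma~\ref{lem:cg} together with the normalization relation \eqref{eq:normrel}: since $\|\alpha^\vee\| = 2/\|\alpha\|$, the set $\Delta^\vee$ being unbounded means $\inf\{\|\alpha\| : \alpha \in \Delta\} = 0$, so there is a sequence of roots $\alpha_n \in \Delta$ with $\|\alpha_n\| \to 0$. Because the root spaces $\g_\C^{\alpha_n} = \C x_{\alpha_n}$ are one-dimensional and the bracket respects the root grading, I can (after passing to a subsequence and using the orthogonality of the root space decomposition together with the fact that roots in a simple ideal have a fixed length, so the relevant $\alpha_n$ must live in pairwise distinct simple ideals $\g_{j_n}$) arrange that the $\alpha_n$ are mutually orthogonal and that the corresponding subalgebras
\[ \s_n := \Spann_\R\{ i\alpha_n^\vee,\ x_{\alpha_n} - x_{\alpha_n}^*,\ i(x_{\alpha_n} + x_{\alpha_n}^*)\} \cong \fsu_2 \]
commute with one another. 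The point is that on $\s_n$ the invariant inner product of $\g$ restricts, up to the scaling factor $w_n := \|\alpha_n^\vee\|_\g^2/2 \to \infty$, to the standard one on $\fsu_2$; equivalently $\|i\alpha_n^\vee\|_\g \to \infty$.

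Next I would recall the well-known construction (the one attributed to Boyer \cite{Bo80} in the introduction, and also the mechanism behind infinite tensor products of $\SU(2)$-representations) producing factor representations of type II$_1$ and type III of a restricted product of $\SU(2)$'s. Concretely, fix for each $n$ an irreducible representation $(\rho_n, V_n)$ of $\fsu_2$ of dimension $d_n \to \infty$, a unit vector $\Omega_n \in V_n$, and form the incomplete infinite tensor product $\cH := \bigotimes_{n}(V_n, \Omega_n)$. The $\s_n$ act on the $n$-th tensor factor via $\rho_n$ and trivially elsewhere; since the $\s_n$ mutually commute and generate (together with their closure) a subalgebra $\fk \subeq \g$, and since on $\s_n$ the operator norm of $\rho_n$ is $O(d_n)$ while the $\g$-norm of the generators grows like $\sqrt{w_n}$, one checks that choosing $d_n$ to grow slowly relative to $w_n$ — say $d_n = o(\sqrt{w_n})$ — makes the resulting Lie algebra homomorphism $\fk \to \fu(\cH)$ bounded with respect to the $\g$-norm on $\fk$. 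Hence it extends to a bounded representation of the closed subalgebra $\oline\fk$, and then (since $\oline\fk$ is a closed $\ft$-invariant Hilbert subalgebra, or directly by inducing/extending trivially on a complementary ideal after invoking Schue's decomposition Theorem~\ref{thm:1.3}) to a bounded representation of a Hilbert--Lie algebra containing the relevant simple ideals — and by composing with the projection $\g \to \hat\oplus_n \g_{j_n}$ it gives a bounded representation of $\g$ itself, trivial on the other ideals. The von Neumann algebra generated is the infinite tensor product $\bigotimes_n M_{d_n}(\C)$ with respect to the chosen product state; by the classical ITPFI theory (Araki--Woods, von Neumann) this is a factor, and by choosing the states $\omega_n = \langle \Omega_n, \cdot\, \Omega_n\rangle$ appropriately (e.g. not tracial, with eigenvalue ratios bounded away from $1$ and $0$) one lands in type III, while the tracial choice $\omega_n = \frac{1}{d_n}\tr$, available when $\g$ is separable so that the product is separable, gives the hyperfinite II$_1$ factor.

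The main obstacle I expect is the bookkeeping in the ``boundedness'' step: one must verify that the map $\fk \to \fu(\cH)$, defined on the algebraic span of the $\s_n$ and a priori only into bounded operators, is genuinely continuous for the $\g$-Hilbert norm — i.e. that $\sup\{\|\pi(x)\| : x \in \fk,\ \|x\|_\g \le 1\} < \infty$. Since an element $x = \sum_n x_n \in \fk$ with $\|x\|_\g^2 = \sum_n \|x_n\|_\g^2 \le 1$ has $\pi(x) = \sum_n \pi(x_n)$ acting on distinct tensor factors, one needs $\|\pi(x)\| \le \sum_n \|\pi(x_n)\|$ to stay bounded, which by Cauchy--Schwarz reduces to controlling $\sum_n \|\pi(x_n)\|^2 \le (\sup_n c_n^2)\sum_n \|x_n\|_\g^2$ where $c_n$ is the ``operator norm vs. $\g$-norm'' ratio on $\s_n$, i.e. $c_n \sim d_n/\sqrt{w_n}$; this is exactly where the growth condition $d_n = o(\sqrt{w_n})$ (or just $d_n = O(\sqrt{w_n})$) is used, and since $w_n \to \infty$ we are free to let $d_n \to \infty$. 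A secondary point requiring care is identifying the generated von Neumann algebra with a genuine ITPFI factor rather than merely containing one; here one should argue that $\pi(\fk)'' = \bigotimes_n \rho_n(\fsu_2)'' = \bigotimes_n M_{d_n}(\C)$ because each $\rho_n$ is irreducible, and then cite the Araki--Woods / von Neumann classification to pin down the type. Finally, one notes that the construction manifestly yields a \emph{bounded} representation of $\g$ (the whole Hilbert--Lie algebra, via the ideal projection), completing the proof.
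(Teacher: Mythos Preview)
Your overall strategy matches the paper's: exploit the unbounded coroots to find pairwise commuting pieces sitting in distinct simple ideals with rapidly growing normalizations, build a tensor-product $C^*$-algebra, and pull back non-type-I factor representations. The paper uses the whole simple ideals $\g_{j_n}$ with their natural (adjoint or identity) representations rather than $\fsu_2$-triples, routes the argument through a $C^*$-tensor product $\cA = \hat\bigotimes(\cA_n,P_n)$, and then lifts factor representations of the hereditary UHF subalgebra $\cB = P_\infty\cA P_\infty \cong \bigotimes M_2(\C)$ via Stinespring dilation, citing Sakai for the existence of type~II and~III factor representations of a non-type-I $C^*$-algebra. But there is a genuine gap in your version that this more careful route avoids.

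The main problem is your infinite tensor product. With $\Omega_n \in V_n$ a unit vector in the \emph{irreducible} $\fsu_2$-module $V_n$, the state $\omega_n = \langle \Omega_n,\cdot\,\Omega_n\rangle$ on $\rho_n(\s_n)'' = M_{d_n}(\C)$ is a \emph{pure} state (its density matrix has rank one). A product of pure states on the UHF algebra $\bigotimes M_{d_n}(\C)$ is again pure, so its GNS representation --- which is exactly your $\bigotimes_n(V_n,\Omega_n)$ --- is irreducible, and $\pi(\fk)'' = B(\cH)$ is a type~I$_\infty$ factor. Your remark about ``eigenvalue ratios bounded away from $1$ and $0$'' cannot apply here: a vector state on $M_{d_n}(\C)$ in its defining representation has a single nonzero eigenvalue. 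To get Powers-type~III factors or the hyperfinite II$_1$ you need genuinely mixed $\omega_n$, hence a GNS space strictly larger than $V_n$ (e.g.\ $V_n\otimes\oline{V_n}$), and then the representation of $\s_n$ is $\rho_n\otimes\1$ on that larger space --- the bookkeeping changes throughout. The paper sidesteps this by working at the $C^*$-level and invoking \cite[Thm.~4.6.4, \S4.6~Rem.~2]{Sa71}.

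A secondary issue is your boundedness estimate. Since the $\pi(x_n)$ commute and are skew-adjoint on independent tensor factors, $\|\sum_n \pi(x_n)\| = \sum_n \|\pi(x_n)\|$ exactly (the spectra add), so controlling $\sum_n\|\pi(x_n)\|^2$ via $\sup_n c_n^2$ is not enough; you need the $\ell^1$-sum $\sum_n c_n\|x_n\|_\g$ to be bounded, which by Cauchy--Schwarz requires $\sum_n c_n^2 < \infty$, not merely $c_n = O(1)$. Concretely you must first pass to a subsequence with, say, $w_n \geq n^4$ (as the paper does), and then $d_n$ bounded already suffices --- there is no need, and no advantage, in letting $d_n\to\infty$.
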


\begin{prf} According to Theorem~\ref{thm:1.3}, we have
  the orthogonal direct sum of ideals
  \[ \g = \fz(\g) \oplus \hat{\bigoplus}_{j \in J} \g_j,\]
  where the ideals $\g_j$ are simple. Since our assumption
  only refers to the semisimple part of $\g$, we may assume that
  $\fz(\g) = \{0\}$.
    
  We consider on $\g_j$ the unique norm with $c_{\g_j} = 2$, so that
  \[ \inf \{ \|\alpha^\vee\|_{\g_j} \: \alpha \in \Delta_j\} = 1\]
 by Lemma~\ref{lem:cg}, 
  and since the possible ratios of the square lengths of roots
  are $\{1,2,3,4\}$ (\cite[Rem.~6.4.6]{HiNe12}), we then have
  \[ \sup \{ \|\alpha^\vee\|_{\g_j} \: \alpha \in \Delta_j\} \leq 2.\]
    
  We now write the norm on $\g$ as
  \[ \|x\|^2= \sum_{j \in J} w_j \|x_j\|_{\g_j}^2 \]
  and observe that 
  \[ w_j = \sup \{ \|\alpha^\vee\|_{\g}^2 \: \alpha \in \Delta_j,
\|\alpha^\vee\|_{\g_j}  = 1 \}.\]
  Thus $(w_j)_{j \in J}$ is unbounded. Hence there exists a sequence
  $(j_n)_{n \in \N}$ with $w_{j_n} \geq n^4$. As we can always compose
  representations of an ideal with the projection onto that ideal,
  we may replace $J$ by $\N$ and assume that
  $w_n \geq n^4$.

  For an irreducible unitary representations
  $(\rho_n, \cH_n)$ of $\g_n$ of  highest weight $\lambda_n$, we have
  $\|\rho_n\| = \|\lambda_n\|$ (this follows by reduction to
  $\ft$ and $\cP_{\rho_n} \subeq \conv(\cW\lambda_n)$).
  If $\g_n$ is finite-dimensional, we take $\rho_n$ to be the
  adjoint representations, and for infinite-dimensional ideals 
  $\g_n \cong \fu_2(\cH_n)$, we consider the representation
  on~$\cH_n$ for $\K = \C,\H$, and  on $\sH_{n,\C}$ for $\K = \R$.
  We thus obtains a sequence of bounded unitary representations
  $\rho_n$ by compact operators on $\cH_n$, and such that
  \[ r := \sup_{n \in \N} \|\rho_n\| < \infty. \] 

  Then the $C^*$-subalgebra $\cA_n := C^*(\rho_n(\g_n)) \subeq B(\cH_n)$
  coincides with the algebra $B_\infty(\cH_n)$ of compact operators 
  because it acts irreducibly on $\cH_n$
  (\cite[Cor.~4.1.6]{Dix77}). 
  As $\dim \cH_n \geq 2$ for every $n$, we may choose rank-two
  projections $P_n$ on $\cH_n$ and consider the
  $C^*$-tensor product
  \[ \cA := \hat{\bigotimes}_{n \in \N} (\cA_n, P_n), \]
  specified by the isometric embeddings
$\otimes_{j = 1}^n \cA_j \to \otimes_{j = 1}^{n+1} \cA_j, 
X \mapsto X \otimes P_{n+1}$
(See \cite{GrNe09} for more details on this construction).

  Then we have continuous morphisms of Lie algebra
  \[ \eta_n \: \g_n \to \cA, \quad
    \eta_n(x) = P_1 \otimes \cdots \otimes P_{n-1} \otimes
    \rho_n(x) \otimes P_{n+1} \otimes \cdots \]
  whose ranges commute pairwise and satisfy
$\|\eta_n\| = \|\rho_n\| \leq r.$ 
  For any finite sum $x = x_1 + \cdots + x_n \in \g_1 + \cdots + \g_n$,
  we then obtain for
  \[ \eta(x) := \rho_1(x_1) + \cdots + \rho_n(x_n) \]
  the estimate 
  \[ \|\eta(x)\|
    \leq r\sum_{j = 1}^n  \|x_j\|
    \leq r\sum_{j = 1}^n  w_j^{-1/2} \|x\|
    \leq r\|x\| \sum_{j = 1}^n  w_j^{-1/2}
    \leq r\|x\| \sum_{j = 1}^\infty  j^{-2}.\]
  Therefore the Lie algebra homomorphism
  $\eta \: \sum_n \g_n \to \cA$ extends to a continuous homomorphism
  {$\eta \: \g \to~\cA$}.

That 
  \[ \eta(\g_1 + \cdots + \g_n) \subeq
    \cA_1 \otimes \cdots \otimes \cA_n \cong
    B_\infty(\cH_1 \otimes \cdots \otimes \cH_n)\]
  is a $C^*$-generating subset
  follows from the irreducibility of the representation 
  $\rho_1 \otimes \cdots \otimes \rho_n$ 
  of $\g^n := \g_1 + \cdots + \g_n$
    (\cite[Cor.~4.1.6]{Dix77}). 
  This implies that $\eta(\g)$ is $C^*$-generating in $\cA$. 
Therefore every $C^*$-representation $(\pi,\cH)$ of $\cA$
  yields a bounded unitary representation
  $(\pi \circ \rho, \cH)$ of $\g$ with the same commutant.
In particular, every factor representation 
of $\cA$ defines a bounded factor representation of $\g$.
We now argue that $\cA$ has non-type I factor representations. 

In $\cA$ we consider the projection
\[ P_\infty := \otimes_{n \in \N} P_n.\] 
It defines the hereditary unital subalgebra
$\cB := P_\infty \cA P_\infty$ onto which
\[ \eps \: \cA \to \cB, \quad A \mapsto P_\infty A P_\infty \]
defines a conditional expectation, so in particular a completely positive map. 
We thus obtain representations of $\cA$
by Stinespring dilation from the completely positive maps
of the form $\omega = \pi \circ \eps$, where $(\pi,\cF)$ is a
representation of $\cB$ (\cite[Def.~3.17, Prop.~3.20]{BGN20}).
Concretely, we have 
\[ \cB \cong \bigotimes_{n \in \N} P_n \cA_n P_n
  \cong \bigotimes_{n \in \N} M_2(\C),\]
so that $\cB$ is a uniformly hyperfinite (UHF) algebra, hence not of type I
(\cite{Po67}).
Let $(\rho,\cH)$ be a factorial representation of $\cB$ which is not of type~I 
and consider the completely positive map $\omega := \rho \circ \eps$ on $\cA$
and its minimal dilation~$(\pi_\omega,\cH_\omega)$. This is a
representation of $\cA$ containing $\cH$ as a closed $\cA$-cyclic \break 
{$\cB$-invariant} subspace on which the $\cB$-representation is equivalent
to~$\rho$. 
Then \cite[Lemmas~3.14 and 3.19]{BGN20} imply that the commutant
$\rho(\cB)'$ is isomorphic to the commutant 
$\pi_\omega(\cA)'$. It follows that $\pi_{\omega}$ is  a factor 
representation of $\cA$ of the same type.

Finally, we use \cite[Thm.~4.6.4]{Sa71} to see that a
$C^*$-algebra which is not of type I has factorial representations
of type III. Further, \cite[\S 4.6, Rem.~2]{Sa71} implies that,
if it is separable, then it also has factor representations of type~II.
\end{prf}

The technique used in the preceding proof is based on the general
observations that bounded representations of Lie groups
can be identified with representations of suitable $C^*$-algebras,
so that the well-developed representation theory of $C^*$-algebras
can be used to study bounded unitary Lie group representations.
We refer to \cite{Ne17} for a more detailed outline of this technique.

Combing the preceding theorem
with Theorem~\ref{thm:5.7}, we obtain:

\begin{cor} Let $\g$ be a semisimple Hilbert--Lie algebra.
  Then all its bounded unitary representations are of type I if and only if
  $\Delta^\vee$ is bounded.     
\end{cor}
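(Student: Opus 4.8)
The plan is to obtain this as a bookkeeping consequence of Theorems~\ref{thm:5.7} and~\ref{thm:5.7b}; essentially no new work is required, and the only points demanding a little care are the identification of bounded representations of $\g$ with bounded representations of an associated group and the verification that the centrality hypothesis of Theorem~\ref{thm:5.7} is automatic in the semisimple case. First I would fix a simply connected Hilbert--Lie group $G$ with Lie algebra $\g$. By basic Lie theory (as invoked at the start of Section~\ref{sec:bounded}), bounded unitary representations of $\g$, i.e.\ continuous homomorphisms $\g \to \fu(\cH)$, correspond bijectively to norm-continuous unitary representations of $G$, with matching commutants; hence ``all bounded unitary representations of $\g$ are of type~I'' is equivalent to the same assertion for~$G$. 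Since $\g$ is semisimple, $\fz(\g) = \{0\}$, so the identity component $Z := Z(G)_e$ has Lie algebra $\{0\}$ and is trivial; in particular $\hat Z = \Hom(Z,\T)$ is countable. Thus, for semisimple $\g$, the standing hypotheses of Theorem~\ref{thm:5.7} reduce to the single condition that $\Delta^\vee \subeq \ft$ be bounded.

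For the implication ``$\Delta^\vee$ bounded $\Rightarrow$ all bounded representations of type~I'', I would simply quote Theorem~\ref{thm:5.7}(i): every bounded unitary representation $(\pi,\cH)$ of $G$ is then a direct sum of irreducible subrepresentations. Gathering isotypical components, $\pi(G)''$ is an $\ell^\infty$-direct sum of type~I factors (blocks of the form $B(\cH_\iota)\otimes\1$), hence a type~I von Neumann algebra, so $\pi$ is of type~I. This is the only ``calculation'' in the argument, and it is the standard fact that a representation which decomposes discretely into irreducibles is of type~I.

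For the converse I would argue by contraposition: if $\Delta^\vee$ is unbounded, then Theorem~\ref{thm:5.7b} produces a bounded unitary factor representation of $\g$ of type~III (and, if $\g$ is separable, also one of type~II), and such a representation is by definition not of type~I. Hence not all bounded unitary representations of $\g$ are of type~I, and combining the two directions gives the stated equivalence.

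I do not anticipate a genuine obstacle here; the corollary is immediate once Theorems~\ref{thm:5.7} and~\ref{thm:5.7b} are available. The only subtlety worth recording explicitly is to apply Theorem~\ref{thm:5.7} to a \emph{simply connected} $G$, so that its countability-of-$\hat Z$ hypothesis is discharged using semisimplicity of $\g$ rather than retained as an extra assumption.
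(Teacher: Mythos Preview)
Your proposal is correct and matches the paper's own approach: the paper simply states that the corollary follows by ``combining the preceding theorem [Theorem~\ref{thm:5.7b}] with Theorem~\ref{thm:5.7}'', and you have spelled out exactly this combination, together with the routine verification that semisimplicity discharges the centrality hypothesis of Theorem~\ref{thm:5.7}.
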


\begin{ex} In \cite{NeSe11} we have analyzed the bounded representations 
of Lie algebras of the type
\[ \su_2(\cA) \cong \cA \otimes \su_2(\C),\] where 
$\cA$ is a unital commutative Banach-$*$-algebra. For all these 
Lie algebras, bounded irreducible 
representations are finite-dimensional and finite tensor products 
of evaluation representations. If $\cA$ is not unital, 
this is no longer the case, as the following example shows: 

For $J = \N$, $\g_n = \su_2(\C)$  and 
$w_n := n^4$ we consider the direct sum Hilbert--Lie algebra 
\[ \g := \ell^2(\N,w, \su_2(\C)) \quad \mbox{ with } \quad 
\|x\|^2 
= \sum_n w_n \|x_n\|_2^2
= \sum_n n^4 \|x_n\|_2^2.\] 
Then $x \in \g$ implies that 
$\|x_n\|_2 \leq \frac{\|x\|}{n^2}$. 
Note that $\g$ is a Hilbert--Lie algebra by 
Lemma~\ref{lem:infsum}. 
This Lie algebra is of the form
$\g = \su_2(\cA)$, where $\cA = \ell^2(\N,w,\C)$ is 
a commutative Banach $*$-algebra with respect to the pointwise product,  and
contained in $\ell^1(\N,\C)$. 
The coroots of $\g$ are of the form 
$\alpha^\vee_n = (\delta_{jn} \alpha^\vee)$, so that the sequence 
$\|\alpha^\vee_n\|^2 = 2 w_n$ is unbounded (Theorem~\ref{thm:5.7}). 
\end{ex} 

\begin{rem} If $\sH$ is separable infinite-dimensional,
  then Theorem~\ref{thm:5.7} implies in particular that every
  bounded unitary representation of the Hilbert--Lie group
  $G = \U_2(\sH)$ is a direct sum of irreducible ones, and these
  are the well-known representations arising from the decomposition
  of the tensor products ${\sH^{\otimes n} \otimes (\sH^*)^{\otimes m}}$
  (\cite{Ki73} and \cite{Ne98, Ne14b}).

  However, as shown by Boyer in \cite{Bo80},
  the group $\U_2(\sH)$ has unbounded unitary
  factor representations not of type I. In this respect $\U_2(\sH)$
  behaves very different from a compact group.
  See also Subsection~\ref{rem:factorrep} for a concrete description of such representations.
\end{rem}

\begin{rem} \mlabel{rem:toroid}
  (Toroidal groups) Let $(w_n)_{n \in \N}$ be a sequence of positive numbers 
and $V := \ell^2(\N,w,\R)$ the corresponding weighted $\ell^2$-space, 
consisting of complex sequences $(x_n)$ for which 
\[ \|x\|_{2,w}^2 = \sum_n w_n |x_n|^2 < \infty.\] 
Then the group $\Gamma := \Z^{(\N)}$ is discrete in $V$ if and only 
if $\inf (w_n)_{n\in \N} > 0$. Suppose that this is the case 
and consider the abelian Lie group $T := V/\Gamma$. 
Then $\lambda = (\lambda_n) \in V'$ is equivalent to 
\[ \|\lambda\|_{2,w^{-1}}^2 = \sum_n \frac{1}{w_n} |\lambda_n|^2 < \infty
\quad \mbox{ because } \quad 
\sup \Big\{ \sum_n \lambda_n x_n \: \|x\|_{2,w} \leq 1 \Big\} 
= \|\lambda\|_{2,w^{-1}}.\]  
Such a sequence $(\lambda_n)$ defines a character 
$\chi_\lambda(x) = e^{2\pi i \sum_k \lambda_k x_k}$ of $T$ if 
and only if $\lambda_n \in \Z$ for every $n \in \N$, i.e.,
\[ \hat T \cong \Z^\N \cap \ell^2(\N, w^{-1},\R).\] 

Each $\lambda \in \hat T = \Hom(T,\T)$ is finitely supported if and only if 
the weight sequence $(w_n)$ is bounded from above, which is equivalent to the
boundedness of of the sequence of norms $\|e_n\|_{2,w}$ of the 
generators of~$\Gamma$.
In this case it takes values in some interval $[a,b] \subeq \R_+$,
so that $V = \ell^2$ and $\hat T \cong \Z^{(\N)}$ is  countable.
Accordingly, every norm continuous representation of
$T$ (which is given by a spectral measure on $\hat T$) is a direct
sum of irreducible (hence one-dimensional) representations
(cf.~Theorem~\ref{thm:z-discrete}).
So we have an analogy between the representation theory of
these groups for a bounded sequence of generators of $\Gamma$
and the representation theory of Hilbert--Lie algebras
for which $\Delta^\vee$ is bounded
(Theorems~\ref{thm:5.7} and \ref{thm:5.7b}).
\end{rem}

\section{Automorphism groups and derivations}
\mlabel{sec:4}

Starting with this section, we shall restrict our attention
to the simple Hilbert--Lie algebra 
$\g = \fu_2(\sH)$, where~$\sH$ is a separable infinite-dimensional
complex Hilbert space. We want to understand
which restrictions this imposes
for a unitary representation $(\pi, \cH)$ of $G = \U_2(\sH)$
to be extendable to a representation of 
$G \rtimes_\alpha \R$ for a one-parameter group
$(\alpha_t)_{t \in \R}$ of automorphisms of $G$.
We know that,  for any such $\alpha$, there exists a selfadjoint operator
$H$ on $\sH$ with
\[ \alpha_t(g) = e^{itH} g e^{-itH} \quad \mbox{ for }\quad
  t\in \R, g \in G\]
(Theorem~\ref{thm:aut-grp}). 
Therefore the problem is non-trivial if and only if
$H$ is not a Hilbert--Schmidt operator.

In Subsection~\ref{subsec:WvN}  we prove a rather general 
Perturbation Theorem (Theorem~\ref{thm:4.4})
that can in particular be used to reduce questions concerning
covariance of unitary representations with respect to some $\alpha$
to  the case where $H$ is diagonalizable
(Corollary~\ref{cor:5.3}). This observation is based on 
a Weyl--von Neumann decomposition $H = H_d + H_{\rm HS}$, where
$H_d$ is diagonalizable and $H_{\rm HS}$ is Hilbert--Schmidt.
We may  assume henceforth that $H$ is diagonalizable,
which implies that it fixes a maximal abelian subalgebra
$\ft \subeq \g$ pointwise. We write
  \[ T := \exp \ft \subeq G \]
  for the corresponding ``maximal torus'', an abelian connected
  Banach--Lie group. 
Then we show that any factorial unitary representation
$(\pi,\cH)$ that is a {\it $T$-weight representation},
i.e., has a basis of $T$-eigenvectors,
extends to $G \rtimes_\alpha \R$ (Proposition~\ref{prop:4.11}).
As these results apply equally well to projective representations,
we briefly discuss in Subsection~\ref{subsec:centext} some techniques
for the identification of the corresponding cocycles.

For representations of $G \rtimes_\alpha \R$,  we can ask for the existence
of ground states, but
also investigate the semiboundedness of the extended representation,
or conditions
for $U_t := \hat\pi(e,t)$ to have spectrum which is bounded from below.
To this end, we need to understand the structure of the subgroup 
$G^0 = \Fix(\alpha)$ of $\alpha$-fixed points in $G$,
which is described in Subsection~\ref{subsec:galpha}.
Moreover, we need to understand which representations
of $G^0$ can possibly arise on the minimal $U$-eigenspace. 
As $G$ contains a dense subgroup $G_{\rm alg}$ which is a direct
limit of compact Lie groups, the techniques developed in
\cite{NR24} provide significant necessary conditions
(Theorem~\ref{thm:4.20} in Subsection~\ref{subsec:4.4}).

In Subsection~\ref{rem:factorrep}, we discuss a class of 
representations that is a very special case of a 
construction that has been used in \cite{Bo80} to show that
$\U_2(\sH)$ has factor representations of type II$_1$. 
For our context, this provides representations
of $\U_2(\sH)$ with no $T$-eigenvector which are covariant
for the full automorphism group. 


\subsection{The Weyl--von Neumann decomposition}
\mlabel{subsec:WvN}

The following result can be found in \cite[App.~A]{DA68}.
It is originally due to John von Neumann and Hermann Weyl.
More recent generalizations
can be found in  \cite{Be71} and \cite{Vo79}.

\begin{thm} \mlabel{thm:vNW-dec} {\rm(Weyl--Von Neumann Decomposition
    Theorem, \cite{JvN35, We09})}
Let $A$ be a selfadjoint operator on a separable 
complex Hilbert space. Then there exists a diagonalizable 
selfadjoint operator $D$ and a Hilbert--Schmidt operator 
$T$ with $\|T\|_2 < 1$, such that 
\[ A = D + T.\] 
\end{thm}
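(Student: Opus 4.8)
The plan is to reduce to the case where $A$ is bounded and then, via the spectral theorem, to build an exhausting increasing chain of finite-dimensional subspaces that are ``almost $A$-invariant'' in a Hilbert--Schmidt sense; the block-diagonal truncation of $A$ along this chain will be the diagonalizable part $D$, and the off-diagonal remainder will be the small Hilbert--Schmidt operator $T$. It is convenient to prove the sharper statement that for every $\eps>0$ one can arrange $\|T\|_2<\eps$. Granting that in the bounded case, the reduction is immediate: let $E$ be the spectral measure of $A$ and $F_k:=E([k,k+1))$ for $k\in\Z$; then $\cH=\hat\bigoplus_{k\in\Z}F_k\cH$ orthogonally, each summand is $A$-invariant with $A\res_{F_k\cH}$ bounded, and applying the bounded case on each summand with error $\eps_k$, where $\sum_k\eps_k^2<1$, and setting $D:=\hat\bigoplus_kD_k$, $T:=\hat\bigoplus_kT_k$, one obtains a diagonalizable $D$, a Hilbert--Schmidt $T$ with $\|T\|_2^2=\sum_k\|T_k\|_2^2<1$, and $A=D+T$ on the domain of $A$ (as $T$ is bounded).

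The heart of the bounded case is the following lemma, which I regard as the main obstacle: for bounded selfadjoint $A$ with $\sigma(A)\subseteq[-R,R]$, a finite-dimensional $\cM\subseteq\cH$, and $\delta>0$, there is a finite-dimensional $\cN\supseteq\cM$ whose orthogonal projection $P$ satisfies $\|AP-PA\|_2<\delta$. To prove it, partition $[-R,R]$ into $m\le 2R/\eta+1$ intervals of length $\le\eta$, let $E_1,\dots,E_m$ be the corresponding spectral projections, pick $\lambda_j$ in the $j$-th interval, and set $B:=\sum_j\lambda_jE_j$, so that $\|A-B\|\le\eta$. Take $\cN:=\sum_jE_j\cM$: it contains $\cM$, is an orthogonal sum of the $E_j\cM$ (hence $\dim\cN\le m\dim\cM$), and its projection $P=\sum_jP_{E_j\cM}$ commutes with every $E_j$, hence with $B$. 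Therefore $AP-PA=(A-B)P-P(A-B)$ and
\[
\|AP-PA\|_2\;\le\;2\|A-B\|\,\|P\|_2\;\le\;2\eta\sqrt{\dim\cN}\;\le\;2\sqrt{2R\,\eta\,\dim\cM},
\]
which is $<\delta$ once $\eta$ is small enough. \emph{This balance --- that shrinking the spectral mesh $\eta$ beats the induced growth $\sqrt{\dim\cN}\sim\eta^{-1/2}$ of the Hilbert--Schmidt size of $P$ --- is exactly the point of the theorem, and is the step requiring the most care}; it is also precisely what separates a Hilbert--Schmidt perturbation from a merely compact one.

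Finally I would assemble $D$ and $T$. Fix a dense sequence $(v_n)$ in $\cH$ and numbers $\delta_n>0$ with $\sum_n\delta_n<\shalf$. Build $\{0\}=\cN_0\subseteq\cN_1\subseteq\cdots$ inductively by applying the lemma to $\cM:=\cN_{n-1}+\C v_n$ with $\delta:=\delta_n$, obtaining $\cN_n\supseteq\cN_{n-1}+\C v_n$ with $\|AP_n-P_nA\|_2<\delta_n$, where $P_n$ is the projection onto $\cN_n$. Since $v_n\in\cN_n$ for all $n$, we have $P_n\to\1$ strongly; put $Q_n:=P_n-P_{n-1}$ (finite rank, $\sum_nQ_n=\1$ strongly) and $D:=\sum_nQ_nAQ_n$. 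Then $D$ is block diagonal with finite selfadjoint blocks, so diagonalizing each $Q_nAQ_n$ on $Q_n\cH$ produces an orthonormal eigenbasis of $D$, i.e. $D$ is diagonalizable. Using $P_{n-1}Q_n=Q_nP_{n-1}=0$ one obtains the telescoping identity
\[
A-D\;=\;\sum_{n}\Big(Q_n\,(AP_{n-1}-P_{n-1}A)-(AP_{n-1}-P_{n-1}A)\,Q_n\Big),
\]
whence, with $T:=A-D$, $\|T\|_2\le 2\sum_{n\ge 1}\|AP_n-P_nA\|_2<2\sum_n\delta_n<1$. Thus $T\in B_2(\cH)$, $\|T\|_2<1$, and $A=D+T$. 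The telescoping identity, the Hilbert--Schmidt bookkeeping, and the passage between the bounded and unbounded cases are all routine; only the lemma carries real content.
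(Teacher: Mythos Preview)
The paper does not prove this theorem at all: it is stated with attribution to von Neumann and Weyl and a pointer to \cite[App.~A]{DA68}, so there is no ``paper's own proof'' to compare against. Your argument is correct and is essentially the classical one. The key lemma---that for bounded selfadjoint $A$, any finite-dimensional $\cM$ can be enlarged to a finite-dimensional $\cN$ whose projection $P$ satisfies $\|[A,P]\|_2<\delta$---is exactly von Neumann's idea, and your proof of it via a spectral mesh of width $\eta$ and the estimate $\|[A,P]\|_2\le 2\eta\sqrt{\dim\cN}\lesssim\sqrt{\eta}$ is the standard mechanism. The assembly of $D$ as the block-diagonal truncation $\sum_n Q_nAQ_n$ along an exhausting chain, and the reduction from unbounded to bounded $A$ by slicing the spectrum into unit intervals and allotting errors $\eps_k$ with $\sum_k\eps_k^2<1$, are both fine.

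One cosmetic remark: in your final estimate you write $\|T\|_2\le 2\sum_{n\ge 1}\|AP_n-P_nA\|_2$, but your telescoping identity involves $[A,P_{n-1}]$, not $[A,P_n]$. Since $P_0=0$ and you are summing over all $n$ anyway, the bound is the same, but it would read more cleanly if the indices matched.
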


\begin{thm} {\rm(Trotter Product Formula; \cite[Ex.~III.5.11]{EN00})}
  \mlabel{thm:trotter}
  Let $(T_t)_{t \geq 0}$ be a strongly continuous semigroup with
  generator $A$ on the Banach space $E$ and
  $B \in B(E)$ a bounded operator. Then
  the semigroup $(U_t)_{t \geq 0}$ generated by $A + B$ is given by
  the Trotter--Product Formula
  \[ U_t x = \lim_{n \to \infty} \Big( T_{t/n} e^{t B/n}\Big)^n x \quad\mbox{ for } \quad     x \in E.\]
\end{thm}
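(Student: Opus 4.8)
The plan is to run the classical telescoping argument for bounded perturbations (this is essentially the proof in \cite{EN00}). Since $B$ is bounded, $D(A+B)=D(A)$ is dense, is invariant under the perturbed semigroup $(U_t)$ appearing in the statement, and $(A+B)U_t=U_t(A+B)$ holds on it. Passing to an equivalent norm on $E$, I may assume $\|T_t\|\le e^{\omega t}$, so that the approximants $V_s:=T_s e^{sB}$ (well defined since $B$ is bounded, with $\|e^{sB}\|\le e^{s\|B\|}$) satisfy $\|V_s^{\,j}\|\le e^{(\omega+\|B\|)^+ t}=:C$ whenever $js\le t$. The one computation to record at the outset is that, for $y\in D(A)$,
\[ \lim_{s\to 0^+}\frac{V_s y-y}{s}
   = \lim_{s\to 0^+}\Big(T_s\,\tfrac{e^{sB}y-y}{s}+\tfrac{T_s y-y}{s}\Big)
   = By+Ay = (A+B)y, \]
i.e.\ $V_s$ has on $D(A)$ the same right derivative at $s=0$ as $U_s$.

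Next I would use, for $x\in D(A)$ and $s=t/n$, the telescoping identity
\[ U_t x - V_s^n x = \sum_{j=0}^{n-1} V_s^{\,j}\,(U_s-V_s)\,U_{(n-1-j)s}\,x, \]
which follows from $U_t=U_s^n$ and the algebraic formula $P^n-Q^n=\sum_{j} Q^j(P-Q)P^{n-1-j}$. Taking norms, using $\|V_s^{\,j}\|\le C$ and noting that each $U_{(n-1-j)s}x$ lies in the set $K:=\{U_r x : 0\le r\le t\}\subseteq D(A)$, the whole problem reduces to the estimate $\sup_{y\in K}\|(U_s-V_s)y\|=o(s)$ as $s\to 0^+$. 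For a fixed $y\in D(A)$ this is routine: one writes $U_s y-y=\int_0^s U_r(A+B)y\,dr$ and $V_s y-y=\int_0^s T_s e^{rB}By\,dr+\int_0^s T_r Ay\,dr$, subtracts, and bounds the resulting integrand by moduli of continuity at $r=0$ of $r\mapsto T_r z$ and $r\mapsto U_r z$ for $z\in\{Ay,By\}$, together with a term dominated by $(e^{r\|B\|}-1)\|By\|$. This gives $\|(U_s-V_s)y\|\le s\,\eps(\eta,y)$ for $0\le s\le\eta$, with $\eps(\eta,y)\downarrow 0$ as $\eta\downarrow 0$.

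The step that needs care — the main obstacle — is upgrading this to uniformity over $y\in K$. The key observation is that $K$ is compact not only in $E$ but in the graph norm of $A$: the maps $r\mapsto U_r x$ and $r\mapsto (A+B)U_r x=U_r(A+B)x$ are norm-continuous on $[0,t]$, hence $\{U_r x\}$, $\{(A+B)U_r x\}$, and therefore also $\{A U_r x\}=\{(A+B)U_r x-B U_r x\}$ and $\{B U_r x\}$ are all norm-compact. Since strong continuity of $(T_r)$ and $(U_r)$ is uniform on norm-compact sets, every modulus of continuity entering $\eps(\eta,y)$ is uniform over $y\in K$, and $\|By\|$ is bounded there; hence $\eps(\eta):=\sup_{y\in K}\eps(\eta,y)\downarrow 0$.

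Combining the telescoping bound with this uniform estimate yields $\|U_t x-V_{t/n}^n x\|\le C\sum_{j=0}^{n-1}(t/n)\,\eps(\eta)=Ct\,\eps(\eta)$ as soon as $t/n\le\eta$, so the left side tends to $0$ for every $x\in D(A)$, and the bound $Ct\,\eps(\eta)$ shows the convergence is uniform for $t$ in a compact interval. Finally I would extend the conclusion to arbitrary $x\in E$ by density of $D(A)$ together with the uniform bound $\sup_n\|V_{t/n}^n\|\le C$, via the standard three-$\eps$ argument.
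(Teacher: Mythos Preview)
The paper does not supply a proof of this theorem; it is simply quoted from \cite[Ex.~III.5.11]{EN00} and then invoked in the proof of Theorem~\ref{thm:4.4}. So there is no ``paper's own proof'' to compare against.

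Your argument is the standard telescoping proof for bounded perturbations and is essentially correct. A few minor points worth tightening: the renorming you invoke gives $\|T_t\|\le e^{\omega t}$ in the \emph{new} norm, and one should remark that the claimed convergence is norm-independent, so working in the equivalent norm is harmless; the bound $\|V_s^{\,j}\|\le\|V_s\|^j\le e^{(\omega+\|B\|)js}$ is what you actually use, and this indeed gives the uniform constant $C$ for $js\le t$. The crucial step---uniformity of the $o(s)$ estimate over the orbit segment $K=\{U_r x:0\le r\le t\}$---is handled correctly via compactness of $K$ in the graph norm of $A$, which is exactly the device used in the Engel--Nagel treatment. The final density argument is routine. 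In short, your proof is sound and coincides with the approach of the cited reference.
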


\begin{thm} \mlabel{thm:4.4}
  {\rm(Perturbation of one-parameter groups)}
  Let $G$ be a connected Banach--Lie group and
  $(\alpha_t)_{t \in \R}$ be a one-parameter group in $\Aut(G)$,
  defining a continuous action of $\R$ on $G$, so that $G \rtimes_\alpha \R$
  is a topological group. Let $(\beta_t)_{t \in \R}$
be a second such one-parameter group   of automorphisms
  of $G$ such that the infinitesimal generators
  $D_\alpha$ and $D_\beta$ on $\g$ satisfy
  \[ D_\beta = D_\alpha + \ad x \quad \mbox{  for some  } \quad x \in \g.\] 
  Then $G \rtimes_\alpha \R \cong G \rtimes_\beta \R$ 
  as topological groups, and the isomorphism can be chosen to be
  $C^1$-for the product manifold structure. 
\end{thm}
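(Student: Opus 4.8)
The plan is to exhibit the isomorphism by hand. On the Lie algebra level one checks at once that
\[ (\eta,\tau)\longmapsto(\eta+\tau x,\tau) \]
is an isomorphism $\g\rtimes_{D_\beta}\R\to\g\rtimes_{D_\alpha}\R$: the hypothesis $D_\beta=D_\alpha+\ad x$ makes all the extra terms in the bracket cancel, and the element $(x,1)\in\g\rtimes_{D_\alpha}\R$ thereby takes over the role of $(0,1)\in\g\rtimes_{D_\beta}\R$. Since $G\rtimes_\alpha\R$ need not be a Lie group (its $\R$-part may have an unbounded generator, so it is only a half Lie group in the sense of \cite{MN18}), one cannot simply exponentiate $(x,1)$; instead I would first produce a curve $c\colon\R\to G$ with $c(0)=e$, $\dot c(0)=x$, satisfying the cocycle identity
\[ c(s+t)=c(s)\,\alpha_s(c(t))\qquad(s,t\in\R), \]
equivalently such that $t\mapsto(c(t),t)$ is a one-parameter subgroup of $G\rtimes_\alpha\R$. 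Such a $c$ exists because $G$ is a regular Banach--Lie group (\cite{GN}): $c$ is the evolution (product integral) of the continuous $\g$-valued curve $s\mapsto\L(\alpha_s)x$, i.e.\ the unique $C^1$ solution of the left-logarithmic-derivative equation $\delta^{\mathrm l}(c)(s)=\L(\alpha_s)x$, $c(0)=e$; the cocycle identity then follows from uniqueness of solutions, since for fixed $s$ both $t\mapsto c(s+t)$ and $t\mapsto c(s)\alpha_s(c(t))$ solve the same equation and agree at $t=0$. (The same $c$ arises as the Trotter-type limit $\lim_{n}\prod_{k=0}^{n-1}\alpha_{kt/n}(\exp_G(tx/n))$, convergence being controlled by the bounded perturbation $\ad x$ of $D_\alpha$ through Theorem~\ref{thm:trotter}.)

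The second step is to identify $\Inn(c(s))\circ\alpha_s$ with $\beta_s$. Conjugation in $G\rtimes_\alpha\R$ by the one-parameter subgroup $(c(s),s)$ restricts on $G$ to $\gamma_s:=\Inn(c(s))\circ\alpha_s$, which is therefore a one-parameter group in $\Aut(G)$; on $\g$ it acts by the strongly continuous one-parameter group $\L(\gamma_s)=\Ad(c(s))\circ\L(\alpha_s)$. Differentiating at $s=0$, using $\dot c(0)=x$ (so that $\Ad(c(s))=\1+s\,\ad x+o(s)$) together with boundedness of $\ad x$, its generator equals $D_\alpha+\ad x=D_\beta$; since a $C_0$-group generator admits no proper generator extension, $\L(\gamma_s)=e^{sD_\beta}=\L(\beta_s)$ for every $s$. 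As $G$ is connected, an automorphism is determined by its differential at $e$, whence $\gamma_s=\beta_s$, i.e.
\[ \beta_s=\Inn(c(s))\circ\alpha_s\qquad(s\in\R). \]

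With these two facts the isomorphism is forced: put
\[ \Phi\colon G\rtimes_\beta\R\to G\rtimes_\alpha\R,\qquad \Phi(g,t):=(g\,c(t),\,t). \]
Using the cocycle identity and $\beta_s=\Inn(c(s))\circ\alpha_s$ one computes $\Phi\big((g,s)(h,t)\big)=\Phi(g,s)\Phi(h,t)$; the map is a bijection with inverse $(h,t)\mapsto(h\,c(t)^{-1},t)$, and both $\Phi$ and $\Phi^{-1}$ are continuous because $c$ is, so $\Phi$ is an isomorphism of topological groups. For the regularity claim, $\Phi$ is obtained from the smooth multiplication of $G$ in the first coordinate composed with $\mathrm{id}_G$ and the $C^1$ curve $c$, hence is $C^1$ for the product manifold structure, and the same applies to $\Phi^{-1}$.

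The routine part is the final algebra (the cocycle computation and verifying that $\Phi$ is a homomorphism). The two real difficulties are Step~1 — constructing the $C^1$ cocycle $c$, i.e.\ giving precise meaning to the one-parameter subgroup ``generated by $(x,1)$'' while $G\rtimes_\alpha\R$ is merely a topological (half Lie) group and $D_\alpha$ may be unbounded — and Step~2, the descent from equality of generators on $\g$ to equality of the automorphisms $\gamma_s$ and $\beta_s$ of $G$, where connectedness of $G$ is essential.
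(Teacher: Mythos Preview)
Your proof is correct and follows the same overall strategy as the paper: construct the one-parameter subgroup $t\mapsto(c(t),t)$ of $G\rtimes_\alpha\R$ ``generated by $(x,1)$'', show that conjugation by it realises $\beta$, and then define $\Phi(g,t)=(g\,c(t),t)$. The final isomorphism and the $C^1$ claim are identical to the paper's.

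The technical implementations of the two key steps differ, however, and the comparison is instructive. For the construction of $c$, the paper invokes the half-Lie-group machinery of \cite{MN18} (specifically Cor.~4.9 there) to obtain the $C^1$ one-parameter group $\delta(t)=(c(t),t)$ directly inside $G\rtimes_\alpha\R$; you instead stay in $G$ and use regularity of Banach--Lie groups to solve $\delta^{\mathrm l}(c)(s)=\L(\alpha_s)x$, deriving the cocycle identity from uniqueness of solutions. Your route is more elementary and avoids the half-Lie-group framework altogether. For the identification $\Inn(c(s))\circ\alpha_s=\beta_s$, the paper expresses $\delta(t)$ as a Trotter product via \cite[Cor.~4.15]{MN18} and then applies the classical Trotter formula (Theorem~\ref{thm:trotter}) on the Banach space~$\g$ to conclude $\Ad(\delta(t))=e^{t(D_\alpha+\ad x)}=\L(\beta_t)$; you differentiate $\Ad(c(s))\L(\alpha_s)$ at $s=0$, obtain $D_\alpha+\ad x$ on $\cD(D_\alpha)=\cD(D_\beta)$, and invoke maximality of $C_0$-group generators. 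Both arguments are valid; the paper's has the advantage of being uniform with the half-Lie-group viewpoint used elsewhere, while yours uses only standard semigroup and regularity facts.
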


\begin{prf} We consider the
  half Lie group $\hat G := G \rtimes_\alpha \R$,
  i.e., $\hat G$ is a smooth manifold
  and a topological group, for which left multiplication maps
  are smooth   (see \cite{MN18} and \cite{BHM23} for more
  on half Lie groups).
  Let $\hat\g := \g \oplus \R$ be its tangent space
  (which is not a Lie algebra if $D_\alpha$ is unbounded, i.e.,
  $\alpha$ is not norm-continuous). 
  According to \cite[Cor.~4.9]{MN18},
  the element $\bd := (x, 1) \in \hat\g$ generates a
  $C^1$-one-parameter group $\delta\: \R \to \hat G$ with
  $\delta'(0) = \bd$. Now \cite[Cor.~4.15]{MN18} implies that
  the one-parameter groups 
  $\gamma_1(t) := (\exp(tx),0)$ and $\gamma_2(t) = (0,t)$
  satisfy 
  \[ \delta(t) = \lim_{n \to \infty} \big(\gamma_1(t/n)\gamma_2(t/n)\big)^n.\]
  For the continuous action on $\g$ we thus obtain
  \[ \Ad(\delta(t))
    = \lim_{n \to \infty} \big(\Ad(\gamma_1(t/n))\Ad(\gamma_2(t/n))\big)^n
    = \lim_{n \to \infty} \big(e^{\textstyle{\frac{t}{n} \ad x}}\L(\alpha_{t/n})\big)^n.\]
Theorem~\ref{thm:trotter} now implies that
$\Ad(\delta(t)) = \L(\beta_t)$ and hence that
\[ \beta_t(g) = \delta(t) g \delta(t)^{-1} \quad \mbox{ for } \quad
  t\in \R, g \in G \]
because $G$ is connected. We conclude that the map 
  \[ \Phi  \: G \times_\beta \R \to G \rtimes_\alpha \R, \quad
    (g,t) \mapsto (g,0) \delta(t) \] 
  is an isomorphism of topological groups.

  Writing $\delta(t) = (\delta_G(t), t)$ with a $C^1$-curve
  $\delta_G \:\R \to G$, we have
  $\Phi(g,t) = (g \delta_G(t), t)$
  and $\Phi^{-1}(g,t) = (g \delta_G(t)^{-1}, t)$, showing
  that $\Phi$ actually is a $C^1$-diffeomorphism.
\end{prf}

We record an important obvious consequence of the preceding theorem. 
\begin{cor} \mlabel{cor:4.5}
  In the context of {\rm Theorem~\ref{thm:4.4}}, a
  continuous homomorphism $\pi \: G \to U$
    to some topological group $U$ extends to a continuous
    homomorphism $G \rtimes_\alpha \R \to U$ if and only if
  this is the case for  $G \rtimes_\beta \R$.
\end{cor}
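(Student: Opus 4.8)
The plan is to read off the equivalence directly from the isomorphism $\Phi \: G \rtimes_\beta \R \to G \rtimes_\alpha \R$ produced in the proof of Theorem~\ref{thm:4.4}, after noting that $\Phi$ restricts to the identity on the common subgroup $G$. Recall that $\Phi(g,t) = (g,0)\,\delta(t)$, where $\delta \: \R \to G \rtimes_\alpha \R$ is the $C^1$-one-parameter group with $\delta'(0) = (x,1)$; writing $\delta(t) = (\delta_G(t),t)$ with $\delta_G \: \R \to G$ a $C^1$-curve, we have $\delta(0) = (e,0)$, hence $\delta_G(0) = e$, and therefore $\Phi(g,0) = (g\,\delta_G(0),0) = (g,0)$ for all $g \in G$. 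So, under the canonical inclusions $G \into G \rtimes_\alpha \R$ and $G \into G \rtimes_\beta \R$, the map $\Phi$ is the identity on $G$.

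Granting this, the corollary is immediate. Suppose $\hat\pi \: G \rtimes_\alpha \R \to U$ is a continuous homomorphism extending $\pi$, i.e.\ $\hat\pi\res_G = \pi$. Then $\hat\pi \circ \Phi \: G \rtimes_\beta \R \to U$ is a continuous homomorphism, and by the previous paragraph $(\hat\pi \circ \Phi)\res_G = \hat\pi\res_G = \pi$, so $\pi$ extends over $G \rtimes_\beta \R$. Conversely, if $\tilde\pi \: G \rtimes_\beta \R \to U$ is a continuous homomorphism with $\tilde\pi\res_G = \pi$, then $\tilde\pi \circ \Phi^{-1}$ does the job for $G \rtimes_\alpha \R$, since $\Phi^{-1}$ is continuous and also restricts to $\id_G$.

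There is no genuine obstacle: the only point that needs a line of verification is that $\Phi$ fixes $G$ pointwise, which comes down to $\delta(0)$ being the identity. It is worth recording that, because $\Phi$ is in fact a $C^1$-diffeomorphism for the product manifold structures, the same argument yields the analogous statement for $C^1$- (or smooth) homomorphisms into Banach--Lie groups, and, in particular, for the (projective) unitary representations considered in the later subsections; this is the form in which the corollary will be applied.
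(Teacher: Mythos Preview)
Your proof is correct and is precisely the argument the paper has in mind: the corollary is stated without proof as ``an important obvious consequence'' of Theorem~\ref{thm:4.4}, and the obvious consequence is exactly that the isomorphism $\Phi$ restricts to the identity on $G$, so extensions of $\pi$ transport along $\Phi$ and $\Phi^{-1}$. Your explicit verification that $\Phi\res_G = \id_G$ via $\delta_G(0) = e$ is the one small check needed, and your closing remark about $C^1$- or smooth homomorphisms is a useful observation consistent with how the corollary is later applied.
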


In the context of unitary and projective unitary representations,
Corollary~\ref{cor:4.5} can be formulated as follows. We refer to \cite{Ne14c, JN19} for a
discussion of continuity and smoothness of projective representations. 
\begin{cor} \mlabel{cor:4.6}
 A (projective) unitary representation $(\pi, \cH)$ of $G$
  extends to a continuous (projective) representation
 of $G \rtimes_\alpha \R$ if and only if
  this is the case for $G \rtimes_\beta \R$.
\end{cor}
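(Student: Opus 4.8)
The plan is to deduce this directly from Corollary~\ref{cor:4.5} by choosing the target group appropriately and then keeping track of the identification of $G$ as a subgroup on both sides. First I would recall that a continuous unitary representation of a topological group $H$ on $\cH$ is precisely a continuous homomorphism $H \to \U(\cH)$, where $\U(\cH)$ carries the strong operator topology and is thereby a topological group; and that a continuous projective unitary representation of $H$ is, in the convention of \cite{Ne14c, JN19}, a continuous homomorphism $H \to \PU(\cH)$, where $\PU(\cH) = \U(\cH)/\T\1$ carries the quotient topology and is again a topological group.

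Next I would apply Corollary~\ref{cor:4.5} with $U = \U(\cH)$ in the unitary case and $U = \PU(\cH)$ in the projective case: the homomorphism $\pi \: G \to U$ extends to a continuous homomorphism $G \rtimes_\alpha \R \to U$ if and only if it extends to a continuous homomorphism $G \rtimes_\beta \R \to U$. Unwinding the definitions, this is exactly the assertion of the corollary.

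The one bookkeeping point to check is that the topological isomorphism $\Phi \: G \rtimes_\beta \R \to G \rtimes_\alpha \R$ produced in the proof of Theorem~\ref{thm:4.4} is compatible with the canonical inclusions of $G$ into both sides, so that ``extending $\pi$'' refers to the same thing in both semidirect products. Since $\delta(0)$ is the identity of $G \rtimes_\alpha \R$, we have $\Phi(g,0) = (g,0)$, i.e.\ $\Phi\res_G = \id_G$; hence if $\hat\pi$ is a continuous (projective) representation of $G \rtimes_\alpha \R$ restricting to $\pi$ on $G$, then $\hat\pi \circ \Phi$ is such a representation of $G \rtimes_\beta \R$, and symmetrically with $\Phi^{-1}$. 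This gives both implications.

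I do not expect a genuine obstacle here: all the analytic content already sits in Theorem~\ref{thm:4.4} (and, through it, in the Trotter product formula and the half Lie group machinery of \cite{MN18}). If forced to name a delicate point, it is the choice of convention for ``continuous projective representation'': whichever one adopts from \cite{Ne14c, JN19}, the relevant notion is stable under composition with the topological --- indeed $C^1$ --- isomorphism $\Phi$, so the argument goes through unchanged. In the alternative central-extension picture, where a projective representation of $H$ is a continuous unitary representation of a $\T$-central extension $H^\sharp$ whose restriction to the central circle is $z \mapsto z\1$, one instead transports the central extension of $G \rtimes_\alpha \R$ back along $\Phi$ and uses $\Phi\res_G = \id_G$ to conclude that the restriction to $G$ is unchanged; this is again routine.
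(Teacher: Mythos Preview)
Your proposal is correct and follows essentially the same approach as the paper: apply Corollary~\ref{cor:4.5} with $U = \U(\cH)_s$ in the unitary case and $U = \PU(\cH)_s$ in the projective case. Your additional bookkeeping remark that $\Phi\res_G = \id_G$ is a welcome clarification, but the paper's own proof is a one-liner that omits it.
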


\begin{prf} We only have to apply the preceding corollary to
  the topological groups $U = \U(\cH)_s$ (the unitary group, endowed with
  the strong operator topology) and to its quotient
  $\PU(\cH)_s = \U(\cH)_s/\T \1$.
\end{prf}

Eventually, we apply these general results to the Hilbert--Lie group
$G = \U_2(\sH)$. 
\begin{cor} \mlabel{cor:5.3}
{\rm(Perturbation of implementable one-parameter groups)}
  Let $(\pi, \cH)$ be a (projective) unitary 
  representation of $G = \U_2(\sH)$ and 
$\alpha_t(g) = e^{itH} g e^{-itH}$ for $H = H^*$ on $\sH$. 
Write $H = H_1 + H_2$, where $H_1$ and $H_2$ are selfadjoint
with $\| H_2\|_2 < \infty$. Then
  $\pi$ extends to $G \rtimes_\alpha \R$ if and only if
  this is the case for $G \rtimes_\beta \R$, where
$\beta_t(g) = e^{itH_1} g e^{-itH_1}$.   
\end{cor}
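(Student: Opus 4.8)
The plan is to realize this corollary as the special case $G = \U_2(\sH)$ of the Perturbation Theorem~\ref{thm:4.4}, whose hypothesis is all that needs to be verified: the infinitesimal generators $D_\alpha$ and $D_\beta$ of the induced one-parameter groups of automorphisms of $\g = \fu_2(\sH)$ should differ by an inner derivation $\ad x$ with $x \in \g$. Once this is checked, Corollary~\ref{cor:4.6}---which handles the unitary and the projective case at once---gives the assertion with no further work.

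First I would note that $\beta_t(g) = e^{itH_1}ge^{-itH_1}$ does define a continuous one-parameter group of automorphisms of $G = \U_2(\sH)$, so that $G \rtimes_\beta \R$ is a topological group: this is immediate since $B_2(\sH)$ is a $*$-ideal invariant under conjugation by the unitaries $e^{itH_1}$ (this is Theorem~\ref{thm:aut-grp} read in reverse). Since $H_2 = H_2^*$ lies in $B_2(\sH)$, the operator $x := -iH_2$ is skew-adjoint and Hilbert--Schmidt, hence $x \in \fu_2(\sH) = \g$. On the Lie algebra level the conjugation actions are $\L(\alpha_t)(Y) = e^{itH}Ye^{-itH}$ and $\L(\beta_t)(Y) = e^{itH_1}Ye^{-itH_1}$, and differentiating at $t = 0$ one obtains, on the natural domains of the semigroup generators,
\[ D_\alpha Y = i[H,Y], \qquad D_\beta Y = i[H_1,Y] = i[H,Y] - i[H_2,Y] = D_\alpha Y + \ad(x)(Y).\]
The operator $Y \mapsto \ad(x)(Y) = i[H_2,Y]$ is bounded on $B_2(\sH)$, of norm at most $2\|H_2\|$ by \eqref{hoelder-halfbis} (applied also to adjoints). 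Since a bounded perturbation of a $C_0$-semigroup generator is again a generator with the same domain, the displayed identity is an equality of operators on $\g$, i.e. $D_\beta = D_\alpha + \ad x$ with $x \in \g$.

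Here some care is needed, because $H$ and $H_1$ need not be bounded---this is exactly the non-trivial situation, as $H$ is not assumed Hilbert--Schmidt, so $\ad(iH)$ is a genuinely unbounded operator on $\g$. The identity $D_\alpha Y = i[H,Y]$ must be read on $\mathrm{dom}(D_\alpha)$, and I would justify it by exhibiting the finite-rank skew-adjoint operators built from vectors of $\mathrm{dom}(H) = \mathrm{dom}(H_1)$ (the two domains coincide because $H_2$ is bounded) as a common core for $D_\alpha$ and $D_\beta$, on which the above computation is literal. This bookkeeping with the unbounded generator is the main obstacle; everything else is routine. With $D_\beta = D_\alpha + \ad x$ in hand, Theorem~\ref{thm:4.4} provides an isomorphism $G \rtimes_\beta \R \cong G \rtimes_\alpha \R$ of topological groups (which is moreover $C^1$ for the product manifold structures and fixes $G$ pointwise), and Corollary~\ref{cor:4.6} then yields that a (projective) unitary representation $(\pi,\cH)$ of $G$ extends to $G \rtimes_\alpha \R$ if and only if it extends to $G \rtimes_\beta \R$.
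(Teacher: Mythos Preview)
Your proposal is correct and follows exactly the approach the paper intends: the paper does not give a separate proof of this corollary but simply introduces it with ``Eventually, we apply these general results to the Hilbert--Lie group $G = \U_2(\sH)$,'' leaving the verification that $D_\beta = D_\alpha + \ad(-iH_2)$ with $-iH_2 \in \fu_2(\sH)$ implicit. Your added care about domains of the unbounded generators is a welcome elaboration, though the paper absorbs this into the general framework of Theorem~\ref{thm:4.4}, which is formulated for arbitrary continuous (not necessarily smooth) $\R$-actions on Banach--Lie groups and hence already accommodates unbounded $D_\alpha$.
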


Note that this theorem applies in particular to the case where
$H = H_d + H_{\rm HS}$ is a Weyl--von Neumann decomposition, where
$H_d$ is diagonalizable and $H_{\rm HS}$ is Hilbert--Schmidt.
It reduces the problem to implement
one-parameter groups of automorphisms in (projective) 
unitary representation of $\U_2(\sH)$ to the case where
these one-parameter groups are diagonalizable on $\sH$.

\begin{rem} The Weyl--von Neumann decomposition
  is far from unique in any sense.
  For example we can move a summand $A$ of $H_d$ that corresponds to a
  finite sum of finite-dimensional eigenspaces of $H_d$ to $H_{\rm HS}$
  and obtain the new decomposition
  \[ H = (H_d - A) + (A + H_{\rm HS}).\]
  If $H_d$ has only finitely many finite-dimensional eigenspaces, we
  can thus get rid of all finite-dimensional eigenspaces of~$H_d$.

  Note that, for $J = \N$ and $H = \diag(n^{-1/2})$, all eigenspaces
  are one-dimensional, but $H$ is not Hilbert--Schmidt. This operator
  is not equivalent to one for which all eigenspaces are
  infinite-dimensional. 
  \end{rem}

\begin{prob} Extend the Weyl--Von Neumann Decomposition
  to skew-hermitian operators on real and quaternionic Hilbert spaces.
  Over $\C$, this is trivially obtained by multiplication with $i$.
  For a skew-adjoint operator $D$ on the real Hilbert space $\sH$,
  we can complexify and write $D = D_d + D_{HS}$ on $\sH_\C$,
  but then it is not clear if $D_d$ (or $D_{HS}$) can be chosen
  in such a way as to commute with the complex conjugation $\tau$
  of $\sH_\C$ with respect to~$\sH$. A similar problem arises
  in the quaternionic case. 
\end{prob}

  \begin{rem} \mlabel{rem:4.9} If two one-parameter groups
  $\alpha,\beta \: \R \to \Aut(\U_2(\sH))$ are generated by
  diagonal operators $H_\alpha = \diag(\lambda_j)$ and $H_\beta = \diag(\mu_j)$,
  we call them {\it equivalent},
  denoted $\alpha \sim \beta$, if there exists a $c \in \R$ such that
  $H_\alpha - H_\beta - c \1$ is a Hilbert--Schmidt diagonal operator, i.e.,
  \[ \sum_{j \in J} (\lambda_j - \mu_j - c)^2 < \infty.\]
  Then the equivalence classes are parametrized by the linear quotient space
  \[ \R^J/(\ell^2(J,\R) + \R \1).\]
  In view of the Weyl--von Neumann decomposition and
  Corollary~\ref{cor:4.6}, this is the natural parameter space
  for one-parameter groups of automorphisms when
  studying extensions of (projective) unitary representations.
\end{rem}

\begin{prop} \mlabel{prop:4.10}
  Let $(U_t = e^{itH})_{t\in \R}$ be a unitary one-parameter
  subgroup of $\U(\sH)$   and $\alpha_t = \Ad(U_t)$ be the corresponding
  one-parameter group of automorphisms of $\U_2(\sH)$.
  Then the following are equivalent:
  \begin{itemize}
  \item[\rm(a)] $H$ is diagonalizable.
  \item[\rm(b)] $\alpha$ leaves a maximal abelian
    subalgebra $\ft \subeq \fu_2(\sH)$
    invariant. 
  \item[\rm(c)] $\alpha$ fixes a maximal abelian
    subalgebra $\ft \subeq \fu_2(\sH)$     pointwise. 
  \end{itemize}
\end{prop}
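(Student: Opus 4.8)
The plan is to prove the cycle of implications (a) $\Rightarrow$ (c) $\Rightarrow$ (b) $\Rightarrow$ (a), the last being the only one with real content. The one structural fact I would use throughout is that, for a complex Hilbert space $\sH$, a maximal abelian subalgebra $\ft\subeq\fu_2(\sH)$ is exactly the space of all Hilbert--Schmidt skew-hermitian operators that are diagonal with respect to some orthonormal basis $(e_j)_{j\in J}$ of $\sH$. That diagonal operators form such a subalgebra is Example~\ref{ex:d.1a}; for the converse I would note that the elements of $\ft$ are commuting compact skew-hermitian operators, hence simultaneously diagonalizable, and that if some common eigenspace were at least two-dimensional, or the common kernel nonzero, one could adjoin to $\ft$ a suitable finite-rank --- hence Hilbert--Schmidt --- skew-hermitian operator supported there and still have an abelian subalgebra, contradicting maximality. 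In particular $iE_{jj}\in\ft$ for every $j\in J$.

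For (a) $\Rightarrow$ (c): given an orthonormal basis $(e_j)_{j\in J}$ of $H$-eigenvectors, let $\ft$ be the associated maximal abelian subalgebra of diagonal Hilbert--Schmidt operators. Every $x\in\ft$ is diagonal in the same basis as $H$, hence commutes with $U_t=e^{itH}$, so the action of $\alpha_t$ on $\fu_2(\sH)$, namely $x\mapsto U_t x U_t^{-1}$, fixes $\ft$ pointwise; consequently $\alpha_t$ fixes $T=\exp\ft$ pointwise as well. The implication (c) $\Rightarrow$ (b) is trivial, a pointwise fixed subalgebra being in particular invariant.

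For (b) $\Rightarrow$ (a): assume $U_t\ft U_t^{-1}=\ft$ for all $t$, and write $\ft$ as the diagonal Hilbert--Schmidt operators for an orthonormal basis $(e_j)_{j\in J}$. Fixing $j$, invariance of $\ft$ forces $U_t E_{jj} U_t^{-1}\in -i\ft$; but $U_t E_{jj} U_t^{-1}$ is the rank-one orthogonal projection onto $\C U_t e_j$, and the only rank-one orthogonal projections diagonal in $(e_j)$ are the $E_{kk}$. Hence $U_t e_j\in\T\, e_{\sigma_t(j)}$ for a uniquely determined $\sigma_t(j)\in J$, and from $U_{t+s}=U_tU_s$ one gets $\sigma_t\in\mathrm{Sym}(J)$ with $\sigma_0=\id$ and $\sigma_{t+s}=\sigma_t\sigma_s$. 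Since $t\mapsto U_t e_j$ is norm-continuous and distinct basis vectors are orthogonal, $\|U_t e_j-e_j\|<\sqrt2$ forces $U_t e_j\in\T e_j$, i.e.\ $\sigma_t(j)=j$; thus $\{t\in\R:\sigma_t(j)=j\}$ is a subgroup of $\R$ containing a neighborhood of $0$, hence all of $\R$. As $j$ was arbitrary, each $U_t$ preserves every line $\C e_j$, so each $e_j$ is an $H$-eigenvector and $H$ is diagonalizable.

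The main obstacle is precisely this last implication: invariance of $\ft$ yields, \emph{a priori}, only an $\R$-action by permutations of the basis, and the substance of the argument is that this action is trivial. That rests on the strong continuity of $(U_t)_{t\in\R}$ --- equivalently, on $H$ being its self-adjoint generator --- together with the fact that distinct basis lines lie at distance $\sqrt 2$; separability of $\sH$ is not used here, entering only through the Weyl--von Neumann reduction of Corollary~\ref{cor:5.3} needed to bring a general $H$ into diagonalizable form.
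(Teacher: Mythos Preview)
Your proof is correct and follows essentially the same route as the paper's: both arguments reduce to showing that invariance of $\ft$ forces $U_t$ to permute the one-dimensional $\ft$-weight lines $\C e_j$, and then use strong continuity (exactly the mechanism of Lemma~\ref{lem:pres-ideals}, which the paper cites explicitly) to conclude that this permutation is trivial. The only cosmetic difference is that the paper orders the cycle as (a)$\Rightarrow$(b)$\Rightarrow$(c)$\Rightarrow$(a) and phrases the key step in terms of weight spaces in $\sH$, whereas you run (a)$\Rightarrow$(c)$\Rightarrow$(b)$\Rightarrow$(a) and phrase it via the rank-one projections $E_{jj}\in -i\ft$; these are dual formulations of the same observation.
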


\begin{prf} (a) $\Rarrow$ (b) follows if we choose $\ft$ in such a
  way that it preserves the eigenspaces of $H$.

  \nin (b) $\Rarrow$ (c): The $\ft$-weight spaces in $\sH$ are
  $1$-dimensional (Example~\ref{ex:d.1a}).
  Hence they are permuted by the unitary
  one-parameter group $U_t$, and the same argument as
  in the proof of Lemma~\ref{lem:pres-ideals} implies that
  they are all invariant under $U_t$. Hence $U_t$ commutes with $\ft$.

  \nin (c) $\Rarrow$ (a) follows from the fact that $U_t$ leaves all
the one-dimensional $\ft$-weight spaces invariant.
\end{prf}

For a general Hilbert--Lie algebra $\g = \fz(\g) \oplus \hat\oplus_{j \in J} \g_j$
as in Schue's Decomposition Theorem~\ref{thm:1.3} 
and a continuous one-parameter group $(\alpha_t)_{t \in \R}$
of automorphisms, we know from Lemma~\ref{lem:pres-ideals}
that it preserves all simple ideals and the center.
Its restriction to a finite-dimensional simple ideal $\g_j$ is given by
inner automorphisms, hence preserves a maximal abelian subalgebra
$\ft_j$. The preceding proposition characterizes this requirement
for the infinite-dimensional
simple ideals of the type $\fu_2(\sH)$, where $\sH$ is an
infinite-dimensional complex Hilbert space.
We plan to investigate the context where all simple ideals
are finite-dimensional in \cite{NR24b}.

Fix an orthonormal basis $(e_j)_{j \in J}$ of $\sH$
and a corresponding maximal abelian subalgebra $\ft \subeq \g = \fu_2(\sH)$.
In the following we  write
    \[ G_{\rm alg} := \U_2(\sH)_{\rm alg} \]
    for the subgroup consisting of unitary operators $g$ for which
    the matrix of $g - \1$ has only finitely many non-zero entries.

\begin{prop} \mlabel{prop:4.11}
{\rm(Extension of weight representations)}
Let $(\pi, \cH)$ be a factorial representation of
  $G = \U_2(\sH)_{\rm alg}$ or of $\U_2(\sH)$.
  Suppose that the restriction of $\pi$ to the subgroup 
  $T \subeq G$ of diagonal matrices is diagonalizable.
  We assume that $\alpha_t(g) = e^{t \bd} g e^{-t\bd}$ for
  $\bd = \diag(i d_j)_{j \in J}$. Then the following assertions hold:
  \begin{itemize}
  \item[\rm(a)] There exists an
  extension $(\hat\pi,\cH)$ of $\pi$ to the semidirect product
  $\hat G = G \rtimes_\alpha \R$ such that 
  the operators  $U_t := \hat\pi(e,t), t \in \R$, are
  contained in $\pi(T)''$, i.e., they are scalar on all $T$-weight spaces.  
\item[\rm(b)] If the spectrum of $(U_t)_{t \in \R}$ is bounded from  below
  and $\cP_\pi \subeq \hat T$ is the weight set of $\pi\res_T$,   then 
  \begin{equation}
    \label{eq:wsembo}
    \inf  \la \cW\lambda - \lambda, -i\bd \ra  > - \infty
    \quad \mbox{ for } \quad \lambda \in \cP_\pi, 
  \end{equation}
  where $\cW$ is the Weyl group of $\Delta(\g,\ft)$. 
  \end{itemize}
\end{prop}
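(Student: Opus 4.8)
The plan is to prove (a) by constructing the implementing unitaries $U_t := \hat\pi(e,t)$ explicitly as operators diagonal for the $T$-weight decomposition $\cH = \hat\bigoplus_{\mu \in \cP_\pi} \cH^\mu$, and to deduce (b) from the $\cW$-invariance of $\cP_\pi$ together with the explicit form of these $U_t$. Since $\pi\res_T$ is diagonalizable, $\pi(T)''$ is exactly the algebra of operators acting by a scalar on each $\cH^\mu$, so any $U_t$ of this shape automatically lies in $\pi(T)''$. A weight $\mu \in \cP_\pi \subeq \hat T$ occurring in a weight representation is a continuous character of $T$, hence finitely supported (Example~\ref{ex:d.1a}), so $\la\mu,-i\bd\ra = \sum_j \mu_j d_j$ (with $\mu$ represented by $j \mapsto \mu_j$ as in Example~\ref{ex:d.1a}) is a well-defined real number although $\bd$ is typically unbounded and not an element of $\ft$. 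I would set $U_t\res_{\cH^\mu} := e^{it\la\mu,-i\bd\ra}\,\id_{\cH^\mu}$; this is a unitary with $U_{s+t} = U_s U_t$ and $t \mapsto U_t$ strongly continuous, and $\hat\pi(g,t) := \pi(g)U_t$ will be the required extension once the covariance identity $U_t\pi(g)U_t^{-1} = \pi(\alpha_t(g))$ is checked for all $g\in G$.

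Both sides being homomorphisms in $g$, I would check covariance on a generating set: $T$ together with the root subgroups $G_\alpha$ generated by $\exp(\R u_\alpha)$ and $\exp(\R v_\alpha)$, where $u_\alpha := x_\alpha - x_\alpha^*$, $v_\alpha := i(x_\alpha + x_\alpha^*)$ for a root vector $0 \neq x_\alpha \in \g_\C^\alpha$, $\alpha \in \Delta$. On $T$ it is immediate: $\alpha_t$ fixes $\ft$ pointwise, and $U_t$ commutes with $\pi(T)$ as both are scalar on the $\cH^\mu$. On $G_\alpha$ one first observes, with $a_\alpha := -i\alpha(\bd) \in \R$, that $\alpha_t\res_{G_\alpha} = \Ad(s_t)$ where $s_t := \exp\bigl(\tfrac12 t a_\alpha\, i\alpha^\vee\bigr) \in T$ (the flow rotates the plane $\R u_\alpha + \R v_\alpha$ and fixes $i\alpha^\vee$, just as conjugation by $s_t$ does; here one uses $\alpha(\alpha^\vee) = 2$). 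So it suffices that $B_t := \pi(s_t)^{-1}U_t$ commute with $\pi(G_\alpha)$. Now $B_t$ commutes with $\pi(\exp(\R i\alpha^\vee)) \subeq \pi(T)$; and since $\exp(\R u_\alpha)$, $\exp(\R v_\alpha)$ commute with $\exp(\ker\alpha \cap \ft)$, the operators $\pi(\exp(\R u_\alpha))$, $\pi(\exp(\R v_\alpha))$ preserve the ``$\alpha$-strings'' $\cH^{[\mu]} := \hat\bigoplus_c \cH^{\mu+c\alpha}$. On such a string $B_t$ is constant, because passing from $\cH^\mu$ to $\cH^{\mu+c\alpha}$ multiplies $U_t$ by $e^{itc\la\alpha,-i\bd\ra}$ and $\pi(s_t)^{-1}$ by $e^{-\frac{i}{2}tc\,a_\alpha\,\alpha(\alpha^\vee)}$, and these cancel since $\la\alpha,-i\bd\ra = a_\alpha$ and $\alpha(\alpha^\vee) = 2$. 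A block-scalar operator commutes with operators preserving the blocks, so $B_t \in \pi(G_\alpha)'$. This gives covariance on generators, hence on $G$ (by density and strong continuity if $G = \U_2(\sH)$, and immediately if $G = \U_2(\sH)_{\rm alg}$), and proves (a).

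For (b), the generator $\hat H$ of the $(U_t)$ above acts on $\cH^\mu$ by the real scalar $\la\mu,-i\bd\ra$, so ``the spectrum of $(U_t)_{t\in\R}$ is bounded below'' means $c := \inf\{\la\mu,-i\bd\ra : \mu \in \cP_\pi\} > -\infty$. The key fact is that $\cP_\pi$ is $\cW$-invariant: every $w \in \cW \cong S_{(J)}$ is realized by a finitely supported permutation matrix $\dot w \in G$ normalizing $T$, whence $\pi(\dot w)\cH^\mu = \cH^{w\mu}$ and $\mu \in \cP_\pi \Leftrightarrow w\mu \in \cP_\pi$. Thus for a fixed $\lambda \in \cP_\pi$ and any $w \in \cW$ we have $w\lambda \in \cP_\pi$, so
\[ \la w\lambda - \lambda,\, -i\bd\ra = \la w\lambda, -i\bd\ra - \la\lambda,-i\bd\ra \ \geq\ c - \la\lambda,-i\bd\ra \ >\ -\infty, \]
which is \eqref{eq:wsembo}.

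The main obstacle is the covariance verification in (a): one has to realise the ``diagonal generator'' $\bd$ at the level of the representation even though it need not lie in $\ft$ — which works precisely because the occurring $T$-weights are finitely supported — and then see that the phases $e^{it\la\mu,-i\bd\ra}$ reassemble, along each $\alpha$-string, into exactly the conjugation by the torus element $s_t$ implementing $\alpha_t$ on $G_\alpha$. Given (a), part (b) is a short consequence whose only real input is the $\cW$-invariance of the weight set. The same construction applies to projective representations, replacing $\U(\cH)$ by $\PU(\cH)$.
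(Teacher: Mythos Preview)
Your covariance verification is more detailed than the paper's (which simply writes down the formula for $U_t$ and leaves the check implicit), and the $\alpha$-string argument is correct. Part (b) is also fine once (a) is in place. However, there is a genuine gap in your construction of $U_t$ for the case $G = \U_2(\sH)_{\rm alg}$.

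Your key input is the claim that every $T$-weight $\mu$ is finitely supported, so that $\la\mu,-i\bd\ra = \sum_j \mu_j d_j$ is a finite sum. This is true when $G = \U_2(\sH)$: there $T = \exp\ft$ is a Banach--Lie torus and Example~\ref{ex:d.1a} shows that continuous characters have finite support. But for $G = \U_2(\sH)_{\rm alg}$ the diagonal subgroup is $T_{\rm alg} \cong \T^{(J)}$, whose character group is $\Z^J$: \emph{every} integer-valued function on $J$ defines a continuous character, with no support restriction. In that case $\sum_j \mu_j d_j$ need not converge, your $U_t$ is undefined, and the whole argument collapses. You never invoke the factoriality hypothesis, and this is exactly what it is there for.

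The paper's fix is as follows. Fix any $\nu_0 \in \cP_\pi$. The subspace $\cH_0 := \hat\bigoplus_{\nu - \nu_0 \in \Z[\Delta]} \cH^\nu$ is invariant under every $\U_n(\C) \subeq G_{\rm alg}$, hence under $G$, and also under $\pi(G)'$; factoriality then forces $\cH_0 = \cH$, i.e.\ $\cP_\pi \subeq \nu_0 + \Z[\Delta]$. Now $\nu - \nu_0$ is a finite $\Z$-combination of roots, so $\chi_\bd(\nu-\nu_0) := \la\nu-\nu_0,-i\bd\ra$ is a well-defined real number, and one sets $U_t v_\nu := e^{it\chi_\bd(\nu-\nu_0)} v_\nu$. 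Your covariance argument applies verbatim to this $U_t$ (only differences of weights enter), and your proof of (b) goes through with $\la\mu,-i\bd\ra$ replaced by $\chi_\bd(\mu-\nu_0)$.
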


\begin{prf} As $\U_2(\sH)_{\rm alg}$ is a dense subgroup of  $\U_2(\sH)$,
  for every representation $\pi$ of $\U_2(\sH)$, the restriction
  to $\U_2(\sH)_{\rm alg}$ generates the same von Neumann algebra.
  Likewise the weight set $\cP_\pi$ of $T \subeq \U_2(\sH)$ can by
  restriction to  $T_{\rm alg}$ be identified with a subset of
  $\hat{T_{\rm alg}}$. It therefore suffices for the proof to assume that
  $G = \U_2(\sH)_{\rm alg}$ and that $T \subeq G$ is the corresponding subgroup
  of diagonal matrices. 

\nin  (a) We have $\L(\alpha_t) x_\alpha = e^{it \chi_\bd\alpha)} x_\alpha$ for
    $x_\alpha \in \g_\C^\alpha,$ 
  where
  \[ \chi_\bd\: \Z[\Delta] := \Spann_\Z \Delta \to (\R,+) \]
  is a group homomorphism (\cite[Lemma~6.2]{St01}).
In terms of $\bd$, it is  given
  by $\chi_\bd(\eps_j - \eps_k) = d_j - d_k,$ resp., 
\[     \chi_\bd \: \R^{(J)} \supeq \hat T \to \R, \quad
    \chi_\bd(\bx)  =  \sum_j d_j x_j.\]  

  Let $\cP_\pi \subeq \hat T$ be the set of $T$-weights of $\pi$
  and fix $\nu_0 \in \cP_\pi$. We claim that factoriality of $\pi$
  implies that
  \[ \cP_\pi \subeq \nu_0 + \Z [\Delta].\]
  To see this, we consider the subspace
  \[ \cH_0 := \hat\bigoplus_{\alpha - \nu_0 \in\Z [\Delta]} \cH_\alpha(T),\]
  which is easily seen to be invariant under all compact subgroups
  of $G = \U_2(\sH)_{\rm alg}$ corresponding to finite subsets of~$J$, hence
  under   $G$ itself.   It is certainly invariant under the commutant $\pi(G)'$.
  Factoriality means that
  \[ \C \1 = \cZ(\pi(G)'') = \pi(G)'' \cap  \pi(G)'
    = (\pi(G)' \cup \pi(G))'.\]
  So the   projection onto $\cH_0$ is contained in the
  trivial center of the von Neumann algebra $\pi(G)''$.
  Since it is non-zero, it must be $\1$, and thus
    $\cH = \cH_0$.
    We may therefore define a unitary one-parameter group on $\cH$ by
  \begin{equation}
    \label{eq:extform}
 U_t v_\nu := e^{it \chi_\bd(\nu - \nu_0)} v_\nu \quad \mbox{ for } \quad
 v_\nu \in\cH_\nu(T).
\end{equation}

\nin (b) follows immediately from~\eqref{eq:extform}. 
\end{prf}

\begin{rem}  In \cite[Ex.~V.9]{Ne04} one finds examples of weight modules
    without extremal weights which define holomorphic 
    representations of $\tilde\GL_1(\sH)$. 
    On the algebraic level, a classification of
    weight modules of $\g_{\rm alg}$ has been obtained by
    Dimitrov and Penkov in~\cite{DP99}.
  \end{rem}

    \begin{prob} Determine which unitary weight representations of
      $\U_1(\sH)$ extend continuously to $\U_2(\sH)$.    
\end{prob}


\subsection{Projective representations and central extensions}
\mlabel{subsec:centext}

Many results of the preceding subsection apply equally well
to projective representations.
This motivates the brief discussion, in the present subsection,
of some techniques
for the identification of the corresponding cocycles. Here an important
feature of Hilbert--Lie algebras is that the invariant scalar product
provides a bijection between (bounded) derivations and $2$-cocycles.
For projective representations with a $T$-fixed ray, this permits us
to determine the corresponding Lie algebra cocycle rather directly.

If $\oline\pi \:  G\to \PU(\cH)$ is a {\it smooth projective
representation} of a Lie group~$G$, i.e., the set of all rays in
the projective space $\bP(\cH)$ with smooth orbit map is dense, then we obtain a
central Lie group extension
\begin{equation}
  \label{eq:gsharpx}
  G^\sharp := \{ (U,g) \in \U(\cH) \times G \: \oline U = \oline\pi(g)\}
\end{equation}
of $G$ by the circle group $\T$, see \cite{JN19}. Then $\pi(U,g) := U$ defines a
unitary representation of $G^\sharp$, lifting the
projective representation $\oline\pi$ of $G$.

Corollary~\ref{cor:5.3} has an interesting analog on the level of central
extensions. Any central Lie group extension $G^\sharp$ of
$G = \U_2(\sH)$, leads to a central
Lie algebra extension
\[ \g^\sharp = \R \oplus_\omega \g\quad \mbox{ with } \quad
  [(t,x),(s,y)] = (\omega(x,y),[x,y]),\]
where $\omega \: \g \times \g \to \R$ is a continuous Lie algebra cocycle.
In view of \cite[Lemma~VII.1]{Ne04}, for $\g = \fu_2(\sH)$, the
continuous Lie algebra cocycles can all be written in the form
\[ \omega_\bd(x,y) = \tr(\bd[x,y]), \quad \bd\in \fu(\sH), x,y \in \fu_2(\sH).\]
As $\omega_\bd = 0$ is equivalent to $\bd \in\R i \1$,
the cocycle $\omega_\bd$ is a coboundary if and only if
$\bd \in \fu_2(\sH) + \R i \1$. This leads to the isomorphism
\[ H^2(\fu_2(\sH),\R) \cong \fu(\cH)/(\R i \1 + \fu_2(\sH))\]
(cf.\ also \cite[Ex.~2.18]{Ne02c} and Remark~\ref{rem:4.9}). 
Theorem~\ref{thm:vNW-dec} now implies that each cohomology class
is represented by a cocycle 
$\omega_\bd$ for which $\bd$ is diagonalizable. Since
the second homotopy group $\pi_2(\UU_2(\sH))$ 
vanishes (combine \cite[Thms.~II.12, II.14]{Ne02} on
the homotopy groups of $\GL_2(\sH)$ with the
polar decomposition \cite[Prop.~A.4]{Ne04}), 
all Lie algebra cocycles integrate to central Lie group
extensions of the associated simply connected Lie group
(\cite{Ne02b}).

A {\bf concrete model} for the central extension of $\U_2(\sH)$,
corresponding to a diagonal operator $\bd$, can be
described as follows. 
The group $G  =\U_2(\sH)$
contains the abelian Banach--Lie group $T$ of diagonal operators.
We shall also consider a Banach--Lie group $\U_{1,2}(\sH)$, 
whose Lie algebra can be specified as follows.
We write $x \in \fu_2(\sH)$ as $x = x_d + x_r$, where $x_d \in \ft$
is a diagonal operator and $x_r \in \ft^\bot$ is off-diagonal.
One easily shows that, for $x,y \in \fu_2(\sH)$, the operator
$[x,y]$ is trace class, and  the same holds for its diagonal part.
Therefore
\[ \fu_{1,2}(\sH) :=
  \Big\{ x \in \fu_2(\sH) \: \sum_{j \in J} |x_{jj}| < \infty \Big\} \]
is a Banach--Lie algebra with respect to the norm
$\|x\| := \|x \|_2 + \|x_d\|_1$. We write
$\U_{1,2}(\sH) \subeq \U_2(\sH)$ for the corresponding Banach--Lie group
with Lie algebra $\fu_{1,2}(\sH)$. We now obtain
a surjective morphism
\[ \U_{1,2}(\sH) \rtimes T \onto \U_2(\sH), \quad
(g_1, g_2) \mapsto g_1 g_2 \]
of connected Banach--Lie groups, 
whose kernel is $\{(g,g^{-1}) \: g \in T_1 \}$, where
\[ T_1 := T \cap \U_1(\sH) \]
is the group of diagonal operators in $\U_1(\sH)$. 
For any character $\chi  \: T_1 \to \T$, we now obtain a central extension
\[ q \: \U_2(\sH)^\sharp_\chi := 
  (\U_{1,2}(\sH) \rtimes T)/\ker(\chi) \to \U_2(\sH),
  \quad [(g,t)] \mapsto gt,\]
whose kernel is mapped to the circle by 
$T_1/\ker(\chi) \to \T, [(g,g^{-1})] \mapsto \chi(g).$
We pick $\bd \in \fu(\sH)$ such that 
\[ \chi(\exp x) =  e^{2\pi i \tr(\bd x)} \quad \mbox{ for } \quad x \in \ft_1.\]

On the Lie algebra level, we have the central extension 
\[ \g^\sharp = (\fu_{1,2}(\sH) \rtimes \ft)/\ker(\dd\chi) \to \R,\]
for which
\[ \sigma \: \g \to \g^\sharp, \quad x = x_d + x_r  \mapsto   (x_r, x_d) \]
is a continuous linear section. A corresponding cocycle is 
\begin{align*}
  \omega(x,y)
  &= [\sigma(x),\sigma(y)] - \sigma([x,y])
  = [(x_d,x_r), (y_d,y_r)] - ([x,y]_r, [x,y]_d) \\
  &  = ([x_r, y_d] + [x_d, y_r] + [x_r,y_r],0) - ([x_d, y_r] + [x_r, y_d]
    + [x_r, y_r]_r, [x_r,,y_r]_d) \\
  &  = ([x_r,y_r]_d, - [x_r,y_r]_d),
\end{align*}
which corresponds to $\dd\chi([x_r, y_r]_d) \in \R \cong \L(\T)$.
With $\exp_\T(x) = e^{2\pi i x}$, this translates for
$\chi(\exp x) = e^{2\pi i\tr(\bd x)}$ into 
\begin{equation}
  \label{eq:omega-formel}
 \omega(x,y)
 = \tr(\bd[x_r, y_r])
 = \tr(\bd[x, y]) = \omega_\bd(x,y).
\end{equation}

\begin{prob} It is an interesting question
  which central extensions of  $\U_2(\sH)$ arise
  from projective unitary representations.
  In Subsection~\ref{subsec:5.4} below    we shall see examples, where   the Lie algebra cocycles satisfy certain integrality conditions,
  but it is not clear if this covers all cocycles coming from
  projective unitary representations.
\end{prob}

\begin{ex} \mlabel{ex:4.13}
  Let $(\oline\pi,\cH)$ be a smooth projective unitary
  representation of $G= \U_2(\sH)$ and
  $\pi \: G^\sharp \to G$ be the corresponding unitary representation
  of the central extension $G^\sharp$ (cf.\ \eqref{eq:gsharpx}).
  We assume that $\Omega \in \cH$ is a smooth unit vector
 that is an eigenvector of $T^\sharp$.

  As above, we write $T^\sharp$ as $(T_1 \times T)/\ker(\chi_\bd)$,
  so that the inclusion $\T \into T^\sharp$ corresponds to
  the isomorphism $T_1/\ker(\chi_\bd) \to \T, [t] \mapsto \chi_\bd(t)$.
  Further, the map $T^\sharp \to \T, [(t_1, t)] \mapsto \chi_\bd(t_1)$,
  has kernel $[\{e\} \times T] \cong T$, so that
  \[ T^\sharp \cong \T \times T.\]

  We now consider the continuous linear section
  $\sigma \: \g \to \g^\sharp$, specified by
  $\dd\pi(\sigma(x))\Omega  \bot\Omega.$
  For the corresponding cocycle 
    \[   \omega_\Omega(x,y) \1= [\sigma(x),\sigma(y)] - \sigma([x,y]),\]
    we then find
    \begin{align*}
 i \omega_\Omega(x,y)
& = i\la \Omega, \omega_\Omega(x,y) \Omega \ra 
      = \la \Omega, \dd\pi([\sigma(x), \sigma(y)]) \Omega \ra \\
&      = -\la \dd\pi(\sigma(x)) \Omega, \dd\pi(\sigma(y))\Omega \ra 
       + \la \dd\pi(\sigma(y)) \Omega, \dd\pi(\sigma(x))\Omega \ra \\
&   = 2i \Im \la \dd\pi(\sigma(y)) \Omega, \dd\pi(\sigma(x))\Omega \ra.      
    \end{align*}
    By construction, the diagonal algebra $\ft$, for which $\Omega$
    is an eigenvector, is contained in the radical of this form.
    Let $x = x_d + x_r$ denote the decomposition of $x \in \g = \fu_2(\sH)$
    into diagonal component $x_d \in \ft$ and off-diagonal component~$x_r$.
    Then we find
    \[ \omega_\Omega(x,y) = \omega_\Omega(x_r, y_r).\]

Now let $\rho \: \fu_{1,2}(\sH) \to \End(\cH^\infty)$ be the
    representation of $\fu_{1,2}(\sH)$ on $\cH^\infty$,
    obtained from splitting the central extension.
    Then there exists a continuous linear functional
    $\lambda \in i\ft_1' \cong \ell^\infty(J,\R)$ such that
    \[ \rho(x) \Omega = i\lambda(x) \Omega \quad \mbox{ for } \quad
      x \in \ft_1.\] 
    Then $x_r, y_r \in \fu_{1,2}(\sH)$ and  $\Omega \bot \rho(x_r)\Omega$ imply that 
    $\dd\pi(\sigma(x_r)) = \rho(x_r)$.  We thus obtain
\begin{align*}
  \omega_\Omega(x,y)
&  = \omega_\Omega(x_r, y_r) 
   = 2i \Im \la \rho(y_r)\Omega, \rho(x_r)\Omega \ra
   =  \la \Omega, [\rho(x_r), \rho(y_r)]\Omega \ra\\
&   =  \la \Omega, \rho([x_r, y_r]) \Omega \ra
   =  \la \Omega, \rho([x_r, y_r]_d) \Omega \ra
   =  i\lambda([x_r, y_r]_d).
\end{align*}
Extending $\lambda \in i\ft_1' \cong \ell^\infty(J,\R)$ 
to a continuous linear functional on $\g_1 = \fu_1(\sH)$,
this takes the form
\begin{align*}
  \omega_\Omega(x,y)
& =  i\lambda([x_r, y_r]) =  i\lambda([x, y]).
\end{align*}
For $\lambda(x) = -i\tr(\bd x)$, $\bd \in \fu(\sH)$, we
derive from \eqref{eq:omega-formel} that
$\omega_\Omega = \omega_\bd$. 
\end{ex}

\subsection{The fixed point group} 
\mlabel{subsec:galpha}

Let $G = \U_2(\sH)$, $(e_j)_{j \in J}$ be a Hilbert-Lie group on $\sH$,
and $\bd =\diag (i d_j)_{j \in J}$ a purely imaginary diagonal 
operator, generating the one-parameter group
$\alpha_t(g) = e^{t \bd} g e^{-t\bd}$ of automorphisms of~$G$.
In this section we briefly discuss the structure of the subgroup
\[ G^0 := G^0(\alpha) := \Fix(\alpha) = \{  g \in G \:
  (\forall t \in \R)\, \alpha_t(g) = g \} \] 
of $\alpha$-fixed points.

An operator is fixed by  $\alpha$ if and only if it commutes with
$\bd$, i.e., preserves all $H$-eigenspaces
\[ \sH_\lambda := \ker(\bd - \lambda\1).\] 
We thus obtain
\[ \g^0 := \g^0(\bd) 
  = \hat\bigoplus_{\lambda \in i\R} \fu_2(\sH_\lambda).\]
Here infinite-dimensional eigenspaces correspond to infinite-dimensional
simple ideals of $\g^0(\bd)$, and each finite-dimensional eigenspace
contributes a one-dimensional subspace to the center:
\[ \fz(\g^0) 
  = \hat\bigoplus_{\dim \sH_\lambda < \infty} \R i \1_{\sH_\lambda}.\]
Let $z_\lambda := 2\pi i \1_{\sH_\lambda}$. Then
$\exp(z_\lambda) = \1$, so that
\[ Z(G^0)_e = \exp(\fz(\g^0))
  \cong \fz(\g^0)/\Gamma\quad\mbox{ for } \quad 
  \Gamma = \sum_{\dim \sH_\lambda < \infty} \Z z_\lambda.\]
Note that
\[ \|z_\lambda\|_2 = 2\pi \sqrt{\dim \sH_\lambda} \geq 2 \pi.\]
For $z = \sum_\lambda x_\lambda z_\lambda$ with $x_\alpha \in \R$, we then have
\[ \|z\|^2_2 = \sum_\lambda |x_\lambda|^2 \cdot   \|z_\lambda\|_2^2, \]
so that we may consider $w_\lambda := \|z_\lambda\|_2^2 = 4\pi^2 \dim \sH_\lambda$
as a weight (cf.\ Lemma~\ref{lem:infsum}).

The subgroup $\Gamma \subeq \ft$ is discrete,
but Remark~\ref{rem:toroid} applies only if
the dimensions of all finite-dimensional eigenspaces $\sH_\lambda$ are
bounded from above. In this case, it implies with
Theorem~\ref{thm:z-discrete} that 
all bounded representations of $Z(G^0)_e$
decompose discretely. 

\begin{rem} \mlabel{rem:cont-disc-deco} 
  If $\bd$ is not diagonalizable, we write it as
  $\bd = \bd_d \oplus \bd_c$, where $\bd_d$ is diagonalizable, and
  $\bd_c$ has no non-zero
  eigenvectors. Accordingly, we write
  $\sH = \sH_d \oplus \sH_c$. Then $\fz_{\fu_2(\sH)}(\bd_c) = \{0\}$.
Then
\begin{equation}
  \label{eq:contspec}
 \fz_{\fu_2(\sH)}(\bd)
  = \fz_{\fu_2(\sH_d)}(\bd_d) \oplus  \fz_{\fu_2(\sH_c)}(\bd_c) 
  = \fz_{\fu_2(\sH_d)}(\bd_d)
\end{equation}
because all compact hermitian operators on $\sH_c$,
commuting with $\bd_c$, vanish.   
\end{rem}

\subsection{Ground state representations of the direct limit group}
\mlabel{subsec:4.4}

As in the preceding subsection, let
$(e_j)_{j \in J}$ be an orthonormal basis of $\sH$.
We write $\U(\sH)_{\rm alg} \subeq \U(\sH)$ for the subgroup
of all unitary operators for which the matrix
of $g - \1$ has only finitely many non-zero entries. We then
have the inclusions 
\[ \U(\sH)_{\rm alg} \subeq \U_1(\sH) \subeq \U_2(\sH) \subeq \U(\sH).\]
In this paragraph we briefly discuss the case where
$\bd = \diag(i d_j)$ is an arbitrary skew-hermitian diagonal
matrix, considered as the infinitesimal generator
of the automorphism group
\begin{equation}
  \label{eq:dag}
 \alpha \: \R \to \Aut(\U(\sH)_{\rm alg}), \quad
 \alpha_t(g) = e^{t\bd} g e^{-t\bd}.
\end{equation}

\begin{defn}
  A  unitary representation $(\hat\pi, \cH)$ of $\hat G = G \rtimes_\alpha \R$
has the form 
\begin{equation}
  \label{eq:pisharp}
\pi^\flat(g,t) = \pi(g) U_t,
\end{equation}
where $(\pi, \cH)$ is a unitary representation of $G$ and $(U_t)_{t \in \R}$ 
is a unitary one-parameter group on $\cH$ satisfying the covariance condition  
\begin{equation}
  \label{eq:commrel}
 U_t \pi(g) U_{-t} = \pi(\alpha_t(g)) \quad \mbox{ for } \quad 
t \in \R, g \in G.
\end{equation}
Writing $U_t= e^{itH}$ with a selfadjoint operator $H$ 
(Stone's Theorem, \cite[Thm.~13.38]{Ru73}),
we call $\hat\pi$, resp., the pair $(\pi,U)$ 
a {\it positive energy representation of $(G,\alpha)$}  if $H \geq 0$. 
If, in addition, for the {\it minimal energy space} $\cH^0 := \ker H$, 
the subset $\pi(G)\cH^0$ spans a dense subspace of $\cH$, 
we call $(\pi, \cH)$ a {\it ground state representation}.
\end{defn}

We may consider the Hilbert-Lie group  $\U_2(\ell^2)$
in which the subgroup $\U_2(\ell^2)_{\rm alg} = \U_\infty(\C)
= \indlim \U_n(\C)$ carries the structure of a direct limit
Lie group, and 
$\alpha_t \in \Aut(\U_\infty(\C))$ determined by \eqref{eq:dag}.
Then $\U_\infty(\C)^0 \subeq \U_\infty(\C)$
is the subgroup preserving all eigenspaces of
the diagonal operator $\bd$ on $\C^{(\N)}$. 
If the $(d_n)_{n \in \N}$ are mutually different,
  this subgroup coincides with the torus $T_\infty\cong \T^{(\N)}$ 
 of diagonal matrices in $\U_\infty(\C)$. 
Note that $T_\infty$ is a direct limit of the tori 
$T_n := T \cap \U_n(\C) \cong \T^n$. 

The following theorem characterizes those representations 
of the group $G^0_{\rm alg}$
which arise on minimal eigenspaces in positive energy representations.
Its main feature is that it provides a tool to create representations
of $G_{\rm alg}$ with rather specific properties with respect to
$\alpha$ out of representations of the group~$G^0_{\rm alg}$. The
structure of the group $G^0$ has been discussed in Subsection~\ref{subsec:galpha},
but this discussion shows that, if $\bd$ has infinite-dimensional
eigenspaces, then the structure of $G_{\rm alg}^0$ is more complicated
than the structure of $G_{\rm alg}$ itself.

\begin{thm} \mlabel{thm:4.20} A unitary representation $(\pi^0, \cH^0)$ of
$G^0_{\rm alg}$ extends to a ground state representation
 of $G_{\rm alg} \rtimes_\alpha \R$ if and only if 
\[-i \partial \pi^0(i E_{nn} - i E_{mm}) \geq 0 \quad \mbox{ holds for } \quad
    d_n > d_m.\] 
  \end{thm}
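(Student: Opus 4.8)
The plan is to establish the equivalence in both directions. In either direction the basic observation is that $G^0_{\rm alg} = \Fix(\alpha) \cap G_{\rm alg}$ commutes with $\alpha$, hence commutes with every $U_t$ in a covariant pair $(\pi, U)$ with $U_t = e^{itH}$ and $H \geq 0$; consequently $\pi(G^0_{\rm alg})$ preserves the minimal energy space $\cH^0 = \ker H$, and its restriction there is exactly the datum $(\pi^0,\cH^0)$. So the content of the theorem is to decide which $\pi^0$ arise this way.

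\textbf{Necessity.} Assume $(\pi^0,\cH^0)$ sits on the bottom of the energy of a ground state representation. First I would compute how matrix units move the $H$-spectrum: since $\L(\alpha_t)E_{nm} = e^{it(d_n - d_m)}E_{nm}$ and $U_t$ implements $\alpha$, differentiating the covariance relation on the complexified derived representation gives $[H, \partial\pi_\C(E_{nm})] = (d_n - d_m)\,\partial\pi_\C(E_{nm})$. Hence, for $d_n > d_m$, the operator $\partial\pi_\C(E_{mn})$ strictly lowers energy, so $H \geq 0$ and minimality of $\cH^0$ force $\partial\pi_\C(E_{mn})\cH^0 = \{0\}$. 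Fixing such a pair $(n,m)$, the elements $E_{nm}, E_{mn}, E_{nn}-E_{mm}$ span an $\fsl_2(\C)$-triple inside $(\g_{\rm alg})_\C$ generating a compact $\SU(2) \subeq G_{\rm alg}$ acting on $\C e_n \oplus \C e_m$, with $i(E_{nn}-E_{mm}) \in \ft \subeq \g^0$. Decomposing $\pi\res_{\SU(2)}$ into finite-dimensional irreducibles, and using that $\partial\pi_\C(E_{mn})$ respects the isotypic decomposition while a vector it annihilates is a lowest-weight vector of its string, I would read off the sign of the eigenvalues of $-i\partial\pi^0(iE_{nn}-iE_{mm})$ on $\cH^0$ from ordinary $\SU(2)$-representation theory, obtaining the stated semidefiniteness for all pairs with $d_n > d_m$. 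Compatibility for overlapping pairs $d_n > d_m > d_l$ is automatic, since $-i\partial\pi^0(iE_{nn}-iE_{ll})$ is the sum of $-i\partial\pi^0(iE_{nn}-iE_{mm})$ and $-i\partial\pi^0(iE_{mm}-iE_{ll})$.

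\textbf{Sufficiency.} Conversely, given $(\pi^0,\cH^0)$ satisfying the inequalities, I would build the extension by holomorphic induction in the style of \cite{NR24}. The operator $\bd$ grades $(\g_{\rm alg})_\C$ into $\g^0_\C$, the span of the $E_{nm}$ with $d_n > d_m$ (``raising''), and the span of the $E_{nm}$ with $d_n < d_m$ (``lowering''); let $\fp$ be the sum of $\g^0_\C$ with the lowering part, making $\cH^0$ a $\fp$-module via the complexified derived representation of $\pi^0$ on $\g^0_\C$ and by zero on the lowering part. One forms the induced module $\cU((\g_{\rm alg})_\C) \otimes_{\cU(\fp)} \cH^0$, equips it with the unique contravariant sesquilinear form extending the inner product of $\cH^0$ and intertwining the involution $x \mapsto x^*$, and the heart of the matter is to show this form is positive semidefinite. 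Reducing along each string $(E_{nm}, E_{mn}, E_{nn}-E_{mm})$ with $d_n > d_m$ turns positivity on that string into unitarizability of a finite-dimensional $\SU(2)$-representation whose extremal weight has exactly the sign supplied by the hypothesis; the organization of the full induction and the handling of the non-commuting strings is precisely the ground state machinery of \cite{NR24} for $G_{\rm alg} = \U_\infty(\C) = \indlim\, \U_n(\C)$. Passing to the quotient by the radical, completing to a Hilbert space $\cH$, integrating the action to a unitary representation $\pi$ of $G_{\rm alg}$ by the standard direct-limit argument, and letting $U_t$ act by the scalars dictated by the $\bd$-grading (normalized so the minimal degree is $0$) produce a covariant pair with $H \geq 0$, $\ker H = \cH^0$, and $\pi(G_{\rm alg})\cH^0$ dense --- a ground state representation of $G_{\rm alg}\rtimes_\alpha\R$ restricting to $\pi^0$.

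\textbf{Main obstacle.} The serious step is the positive semidefiniteness of the contravariant form in the sufficiency direction. Each single-string estimate is elementary $\SU(2)$-representation theory, but the raising operators for distinct pairs $(n,m)$ do not commute, so one cannot merely combine one-dimensional positivity statements; this is where the full apparatus of \cite{NR24}, designed exactly for ground state representations of direct limits of compact groups, is needed. A secondary difficulty, already noted in Subsection~\ref{subsec:galpha}, is that when $\bd$ has infinite-dimensional eigenspaces the fixed point group $G^0_{\rm alg}$ carries infinite-dimensional simple ideals and is structurally more complicated than $G_{\rm alg}$ itself; one must then be careful about the meaning of ``lowest-weight vector for a string'' at the level of $\pi^0$, and about verifying that the constructed extension is genuinely of ground state type (cyclicity of $\pi(G_{\rm alg})\cH^0$ together with $\ker H = \cH^0$) rather than merely positive energy, perhaps after first decomposing $\pi^0$ into a direct integral of more tractable pieces.
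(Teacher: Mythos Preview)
Your proposal is essentially correct and aligned with the paper's approach, but it is worth noting how minimal the paper's own proof is: it reads, in full, ``This follows from (36) in \cite[\S 8]{NR24}. We only need to observe that $G_{\rm alg}=\U_\infty(\C)$.'' In other words, the paper treats Theorem~\ref{thm:4.20} as a direct specialization of a result already established in \cite{NR24} for direct limits of compact groups, and supplies no independent argument.

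Your sketch unpacks what lies behind that citation. The necessity argument via the commutator $[H,\partial\pi_\C(E_{nm})]=(d_n-d_m)\partial\pi_\C(E_{nm})$, energy-lowering, and $\fsl_2$-strings is exactly the standard mechanism, and your sufficiency outline via an induced module with a contravariant form is one natural route to the ground state construction. You correctly identify that the genuine work---positivity of the contravariant form when the raising operators for different pairs do not commute---is not elementary and must be imported from \cite{NR24}; this is precisely what the paper does as well, just without the surrounding exposition. So there is no substantive difference in strategy: both proofs reduce to \cite{NR24}, yours with an explanatory scaffold, the paper's by bare citation. Your remarks on the secondary difficulty (the structure of $G^0_{\rm alg}$ when $\bd$ has infinite-dimensional eigenspaces) go somewhat beyond what the paper discusses at this point, but they are pertinent and do not conflict with anything in the text.
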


  \begin{prf}  This follows from  (36) in \cite[\S 8]{NR24}.
    We only need to observe that $G_{\rm alg}=\U_\infty(\C)$.
\end{prf}

\begin{prob} \mlabel{prob:4.21}
  Show that, for all ground state representations, 
    $\pi\res_T$ is discrete, i.e., a direct sum of $T$-eigenspaces. 
    To study this problem, the case where $\bd$ has only two eigenvalues is
    already interesting. Then $G^0 \cong \U_2(\sH_1) \times \U_2(\sH_2)$,
    where $\sH_{1/2}$ are the $\bd$-eigenspaces in $\sH$, and the
    condition in Theorem~\ref{thm:4.20} imposes rather strong
    spectral conditions for the representation~$\pi^0$.
    If, for instance, $\pi^0$ is trivial on $\U_2(\sH_2)$,
    we find I.~Segal's positivity condition from \cite{Se57}, which
    implies that $\pi^0$ is a direct sum of subrepresentations
    of tensor products $\sH^{\otimes n}$.
    We refer to \cite{Ne14b} for a discussion of these representations.
\end{prob}

\subsection{Some non-weight representations} 
\mlabel{rem:factorrep}

In \cite{Vo76} Voiculescu describes  {\it characters}
(central positive definite functions) of the inductive limit
  group $\U(\sH)_{\rm alg} \cong \indlim \U_n(\C)$
  for $\sH = \ell^2= \ell^2(\N,\C)$, and
  he also shows that these characters extend
to the group $\U_1(\sH)$. 
In this subsection we discuss one case where
the corresponding unitary representation of $\U_1(\sH)$ even extends
to a representation of the Hilbert--Lie group $\U_2(\sH)$
and show that its restriction to the diagonal subgroup $T$
does not decompose discretely. 

All these characters can be described as follows. 
We start with a continuous positive definite function
$p \: \bS^1 \to \C$ satisfying $p(1) = 1$
and construct a class function on $\U_\infty(\C)$ by 
\[  \chi_p(g) := \det \big(p(g)\big) 
  = p(z_1) \cdots p(z_n), \quad \mbox{ where } \quad
  \Spec(g) = \{z_1, \ldots, z_n\},\]
where the spectral values $z_j\not=1$
are enumerated with multiplicities. 
The simplest case is $p(z) = z^k$. Then $\chi_p(g) = (\det g)^k$ 
which extends holomorphically to $\GL_1(\sH)$. 

 For $b > 0$ we consider 
 $q(z) = \frac{1 + bz}{1 + b}$ and
 $p(z) := q(z) q(z^{-1})$. The polynomial $q$ corresponds in some sense to
 representations in antisymmetric tensors (\cite[p.~15]{Vo76})
 and $q(z^{-1})$ to the dual representation.
 For $g = \1 + X \in \U_2(\sH)$, we have
 \begin{align*}
 \frac{(\1 + b g)(\1 + b g^{-1})}{(1 + b)^2} - \1 
&   = \frac{(1 + b^2)\1 + b (g + g^{-1})}{(1 + b)^2} -\1
   = \frac{b (g + g^{-1} - 2\1)}{(1 + b)^2} \\
&   = \frac{b}{(1+b)^2}(X + (\1 + X)^{-1}-\1).
 \end{align*}
 As
 \[ (X + (\1 + X)^{-1}-\1)
   = (\1 + X)^{-1}((\1 + X)(X - \1) + \1)
   = (\1 + X)^{-1}X^2 \in B_1(\sH),\]
we conclude that 
\[  \frac{(\1 + b g)(\1 + b g^{-1})}{(1 + b)^2} \in \1 + B_1(\sH).\] 
Therefore the character 
\[  \chi(g) = \det\big((1 + b)^{-2}(\1 + b g)(\1 + b g^{-1})\big)  \] 
is defined for $g \in \U_2(\sH)$
for a large class of characters 
(see \cite[Thm.~5.2]{Bo80}). 

As $\chi(gsg^{-1}) = \chi(s)$ for $s \in \U_2(\sH)$ and $g \in \U(\sH)$,
the Gelfand--Naimark--Segal (GNS) representation
$(\pi_\chi, \cH_\chi, \Omega_\chi)$ with the
cyclic vector $\Omega_\chi$ extends naturally
to a representation $\hat\pi_\chi$ of the semidirect product
$\U_2(\sH) \rtimes \U(\sH)$, where the representation of $\U(\sH)$
is determined by the covariance relation
\[ \hat\pi_\chi(g) \pi_\chi(h) \hat\pi_\chi(g)^{-1} = \pi_\chi(ghg^{-1})
  \quad \mbox{ and } \quad
  \hat\pi_\chi(g)\Omega_\chi = \Omega_\chi \quad \mbox{ for } \quad
  g \in \U(\sH), h \in \U_2(\sH). \] 

We claim that the restriction to the diagonal subgroup
$T \subeq \U_2(\sH)$ does not decompose discretely.
We have 
\[ \chi(\bt) = \prod_{n = 1}^\infty p(t_n)
 = \prod_{n = 1}^\infty \frac{(1 + b t_n)(1 + b t_n^{-1})}{(1 + b)^2}  
  \quad \mbox{ for } \quad \bt = (t_1, t_2, \ldots), \] 
where the function
\[ p(t) = \frac{(1 + bt)(1+bt^{-1})}{(1+b)^2} = \frac{1 + b^2
  + b(t + t^{-1})}{(1+b)^2}\] 
is the positive definite
function corresponding to the $3$-dimensional representation 
of the circle group $\T$ on $\C^3$, given by
$\pi(t) = \diag(1,t,t^{-1})$ and
\[ p(t) = \la \Omega, \pi(t) \Omega \ra \quad \mbox{ for the unit vector} \quad
  \Omega = \frac{1}{1 + b} \pmat{ \sqrt{1+ b^2} \\ \sqrt{b}\\ \sqrt{b}}.\]
The corresponding measure $\nu$ on $\Z \cong \hat\T$ with
$\hat\nu = p$ is given by
\[ \nu = \frac{1}{(1 + b)^2}((1 + b^2)\delta_0  + b \delta_1+ b \delta_{-1}).\]
Accordingly, the measure $\mu$ on $\Z^\N$ with $\hat\mu = \chi$ on
$T_1 = T \cap \U_1(\sH)$ 
is the infinite tensor product 
$\mu = \nu^{\otimes \N}$ on $\{-1,0,1\}^\N \subeq \Z^\N$, 
and this measure has no atoms. It  is equivalent to the
Haar measure on the Cantor group $(\Z/3\Z)^\N$.
This implies that the spectral measure for the $T$-representation
generated by $\Omega_\chi$ is not atomic. In particular
$\pi_\chi\res_{T}$ does not decompose discretely. 

As we have seen above, $\pi_\chi$ extends for all continuous
$\R$-actions $\alpha$ on $G = \U_2(\sH)$ to the group
$G \rtimes_\alpha \R$. It follows in particular that the extendability
alone does not permit to conclude that $\pi_\chi\res_T$ decomposes
discretely.

\begin{prob} If $\alpha$ is generated by a diagonal operator,
then it fixes $T$, so that the closed subspace generated by
$\pi_\chi(T)\Omega$ is fixed by $(U_t)_{t \in \R}$.
Is it possible to determine the spectrum of these
unitary one-parameter groups on $\cH_\chi$?
As the total subset $\pi_\chi(\U_1(\sH))\Omega$
is invariant under $(U_t)_{t \in \R}$, 
reproducing kernel methods similar to those in \cite[Thm.~II.4.4]{Ne00}
may be useful in this regard.
\end{prob}

  \begin{prob} \mlabel{prob:4.22} Show that, for all non-trivial
  one-parameter groups $(U_t)_{t \in \R}$ of $\U(\sH)$,
  the spectrum of the corresponding representation on $\cH_\chi$
  is neither bounded from below nor from above.   
\end{prob}

\section{Semibounded representations
of Hilbert--Lie groups and central extensions}
\mlabel{sec:5}

In Section~\ref{sec:bounded} we discussed bounded
representations of Hilbert--Lie group. The next larger
class consists of semibounded representations, 
whose theory has been developed 
in  \cite{Ne08, Ne10b, Ne12, Ne14, NZ13}.

We start in Subsection \ref{subsec:reconstruction} with
Theorem~\ref{theoremofreconstruction}, which
asserts that semibounded unitary representations of connected
Hilbert--Lie groups decompose discretely into bounded ones,
hence provide nothing new beyond what we have seen in
Section~\ref{sec:bounded}.
This is complemented by Theorem \ref{thm:b.6}, which  shows
that semibounded projective representations of Hilbert--Lie groups 
can be lifted to representations of the simply connected
covering group. Its proof involves the remarkable Bruhat--Tits Fixed
Point Theorem (see Appendix \ref{app:b}).

As a consequence, enlarging our context to projective semibounded
  representations of $G$ does not lead to  new representations,
but projective representations of the groups
  $G \rtimes_\alpha \R$ for which the spectrum of the one-parameter group
  $\pi(e,t)$ is bounded below are far more interesting. 
As we go along, we learn 
that this is due to the fact that considering $\U_2(\sH)$ as a
Lie group is too restrictive, in particular, when we consider
one-parameter groups $(\alpha_t)_{t \in \R}$ of automorphisms
that act only continuously and not smoothly, i.e., their
infinitesimal generator is unbounded.
This is taken into account in
Subsection \ref{subsec:4.3}, where we show that the existence of a
unitary representation of  $\U_2(\sH) \rtimes_\alpha \R$,
whose restriction to the Lie group
$\U_2(\sH)^\infty \rtimes_\alpha \R$
is semibounded, requires the infinitesimal generator $i\bd$ of $\alpha$
to be semibounded as an operator on $\sH$ 
(Corollary~\ref{cor:5.8}). This result builds on some fine analysis
of the coadjoint action of the semidirect product Lie group
(Theorem \ref{thm:5.8}).
If $\bd$ is unbounded, then
the topological group $\U_2(\sH) \rtimes_\alpha \R$ is not a
Lie group, but it is a smooth Banach manifold and even a
half Lie group in the sense of \cite{MN18},
which has already been used in the proof of
Theorem~\ref{thm:4.4}. Then
$\U_2(\sH)^\infty \rtimes_\alpha \R$ is the Lie group
of all elements for which right multiplications are also smooth,
so that the passage between these two perspectives is well
captured by the half Lie group concept. 

Subsections \ref{subsec:traceclassgroups} and
\ref{subsec:5.4} are devoted to projective representations
of the restricted unitary group $\U_{\rm res}(\sH,D)\subeq \U(\sH)$
 for an operator $D$ on a Hilbert space $\sH$,
where the ``restriction'' is expressed in terms of $gDg^{-1}-D$ being
Hilbert--Schmidt. We recall in Theorem \ref{lem:7.7} (from \cite{Ne04})
that the extremal weight representation
of the trace class group $\U_1(\sH)$, corresponding to a bounded weight,
all extend to projective representations of suitable restricted groups,
which contain in particular the group $\U_2(\sH)$. So we obtain
a large family of projective representations of $\U_2(\sH)$
that define projective semibounded representations of the Lie group 
$\U_2(\sH)^\infty \rtimes_\alpha \R$.
In Theorem~\ref{lem:7.7} we show that the restricted groups are maximal
``covariance groups'' of these representations of $\U_1(\sH)$.
These results provide a new perspective and 
relevant information on significant constructions
which arise naturally in Quantum Field Theory
(see \cite{PS86, Ne10b, KR87}).

\subsection{Smooth and semibounded representations} 
  \mlabel{subsec:reconstruction}
\begin{defn}(\cite{Ne10a})
A unitary 
representation $\pi \: G \to \U(\cH)$ 
is said to be {\it smooth} if the subspace 
$\cH^\infty \subeq \cH$ of smooth vectors is dense. 
This is automatic for continuous 
representations of finite-dimensional groups, but not 
for Banach--Lie groups. For any smooth 
unitary representation, the {\it derived representation} 
\[ \dd\pi \: \g = \L(G)\to \End(\cH^\infty), \quad 
\dd\pi(x)v := \frac{d}{dt}\Big|_{t = 0} \pi(\exp tx)v\]  
carries significant information in the sense that the closure
$\partial \pi(x)$
of the operator $\dd\pi(x)$ coincides with the infinitesimal generator of the 
unitary one-parameter group $\pi(\exp tx)$. We call $(\pi, \cH)$ 
{\it semibounded} if the function 
\begin{equation}
  \label{eq:spi}
 s_\pi \: \g \to \R \cup \{ \infty\}, \quad 
s_\pi(x) 
:= \sup\big(\Spec(i\partial\pi(x))\big)   
\end{equation}
is bounded on the neighborhood of some point in $\g$.  
Then the set $W_\pi$ of all such points 
is an open $\Ad(G)$-invariant convex cone in the Lie algebra $\g$. 
We call $\pi$ {\it bounded} if $s_\pi$ is bounded on some $0$-neighborhood, 
i.e., $W_\pi = \g$.

We shall also use the {\it momentum set $I_\pi$}.
This is the weak-$*$-closed convex hull of the image of the 
momentum map on the projective space of $\cH^\infty$: 
\[  \Phi_\pi \: \bP({\cal H}^\infty)\to \g' \quad \hbox{ with } \quad 
\Phi_\pi([v])(x) 
= \frac{1}{i}  \frac{\la  \dd\pi(x)v, v \ra}{\la v, v \ra}\quad \mbox{ for } 
[v] = \C v. \] 
As a weak-$*$-closed convex subset, $I_\pi$ is completely determined 
by its support functional 
\[  s_\pi \: \g \to \R \cup \{\infty\}, \quad s_\pi(x) 
  = - \inf \la I_\pi,x \ra = \sup(\Spec(i\partial \pi(x))). \]
\end{defn}

We now show that semibounded unitary representations
  decompose discretely into bounded ones, and these have been
  studied in Section~\ref{sec:bounded}.

\begin{thm} \mlabel{theoremofreconstruction} Let $G$ be a connected Hilbert--Lie group.
  Then every semibounded unitary representation
  $(\pi,\cH)$ of $G$ is a direct sum of bounded ones.   
\end{thm}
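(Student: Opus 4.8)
The plan is to prove two separate statements and then splice them together along the centre. First, I would show that a semibounded representation of a \emph{semisimple} Hilbert--Lie group $G_{\rm ss}$ (Lie algebra $\g_{\rm ss}=\hat\bigoplus_{j\in J}\g_j$) is automatically bounded. The heart of this is that $\g_{\rm ss}$ carries no proper non-empty open $\Ad(G_{\rm ss})$-invariant convex cone $W$: given $y_0=(y_0^{(j)})_j\in W$ with $y_0\neq 0$, openness gives $B(y_0,2\rho)\subseteq W$ for some $\rho>0$, so $W$ contains $\conv\big(\Ad(G_{\rm ss})B(y_0,2\rho)\big)$. Now $\conv(\Ad(G_{\rm ss})y_0)$ accumulates at $0$: on a finite-dimensional ideal the barycentre $\int_{K_j}\Ad(k)y_0^{(j)}\,dk$ over the corresponding compact group $K_j$ is $\Ad$-invariant, hence $0$ since $\g_j$ is semisimple; on an infinite-dimensional ideal $\g_j\cong\fu_2(\sH_j)$, averaging the compact operator $y_0^{(j)}$ over larger and larger coordinate blocks (using $\tfrac1n(\sum_{i\le n}\mu_i)^2\to 0$ for $(\mu_i)\in\ell^2$) produces convex combinations of conjugates tending to $0$ in Hilbert--Schmidt norm; truncating the tail $\sum_{j\notin F}\|y_0^{(j)}\|^2<\eps^2$ and averaging the finitely many remaining components yields $p\in\conv(\Ad(G_{\rm ss})y_0)$ with $\|p\|<\rho$. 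Translating as before puts $p+B(0,2\rho)\supseteq B(0,\rho)$ into $W$, and being a convex cone $W$ must be all of $\g_{\rm ss}$. Hence $W_\pi=\g_{\rm ss}$, so the lower semicontinuous sublinear functional $s_\pi$ (a supremum of continuous linear functionals on $\cH^\infty$) is finite everywhere, therefore bounded on a ball by Baire category and then on a $0$-neighbourhood by subadditivity; thus $\pi$ is bounded. This step can alternatively be quoted from the classification of semibounded representations in \cite{Ne12, Ne14}.

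Second, I would observe that \emph{every} continuous unitary representation of the identity component $Z:=Z(G)_e$ of the centre is a direct sum of bounded ones: writing $Z\cong\fz/\Gamma$ and using the spectral measure $P$ on $\hat Z\subseteq\fz'$ from \cite[Thm.~4.1]{Ne08}, split $\hat Z=\bigsqcup_n A_n$ with $A_n:=\{\chi:n\le\|\chi\|<n+1\}$; on $\cH_n:=P(A_n)\cH$ the generators $\partial\pi(z)$ are bounded by $2\pi(n+1)\|z\|$. Since $Z$ is central in $G$, each $\cH_n$ is $G$-invariant; no semiboundedness is needed here.

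Now combine. Let $(\pi,\cH)$ be semibounded on $G$, with $\g=\z(\g)\oplus\g_{\rm ss}$ by Theorem~\ref{thm:1.3} and $\cH=\hat\bigoplus_n\cH_n$ the centre-slicing above. Fixing $(z_0,y_0)\in W_\pi$, the orbit-averaging argument of the first step, applied inside $\g_{\rm ss}$ to $B(y_0,2\rho)$, shows that $s_\pi$ is bounded above, by some $M$, on $(z_0,0)+B_{\g_{\rm ss}}(0,\rho)$. On $\cH_n$ the operator $A_n:=i\partial\pi(z_0,0)|_{\cH_n}$ is bounded, while for $\|y\|<\rho$ the operator $A_n+i\partial\pi(0,y)|_{\cH_n}$ satisfies $\sup\Spec(\cdot)\le M$; subtracting $A_n$ and using $y\mapsto-y$ gives $\|i\partial\pi(0,y)|_{\cH_n}\|\le M+\|A_n\|$ for $\|y\|<\rho$, i.e.\ $\pi|_{\cH_n}$ is bounded on the integral subgroup $G_{\rm ss}$. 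Together with boundedness on $Z$ and the identity $\partial\pi(x)|_{\cH_n}=\partial\pi(x_\z)|_{\cH_n}+\partial\pi(x_{\rm ss})|_{\cH_n}$ for the orthogonal decomposition $x=x_\z+x_{\rm ss}$, this makes $\pi|_{\cH_n}$ a bounded representation of $G=Z\cdot G_{\rm ss}$. Hence $\pi=\hat\bigoplus_n\pi|_{\cH_n}$ is a direct sum of bounded representations, as asserted; if desired, each summand decomposes further into irreducibles by Theorem~\ref{thm:5.7}, but this is not needed.

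The main obstacle is the first step, and within it the claim that convex hulls of adjoint orbits in an infinite-dimensional simple Hilbert--Lie algebra accumulate at $0$: this relies on the $\ell^2$ Ces\`aro estimate together with the tail truncation, and on checking that the block-averaging conjugations stay inside $\U_2(\sH_j)$. A secondary subtlety, easy to miss, is that the centre must be sliced off \emph{first}: $W_\pi$ may be ``tilted into the centre'', so that $\pi|_{G_{\rm ss}}$ need not be semibounded on all of $\cH$, and only after restricting to a slice $\cH_n$ — on which the centre acts boundedly — does boundedness along $G_{\rm ss}$ become accessible.
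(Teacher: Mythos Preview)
Your overall architecture is the paper's: show that $W_\pi$ meets the centre, slice spectrally along $Z=Z(G)_e$, and conclude boundedness on each slice. The substantive difference is that you replace the paper's use of Proposition~\ref{prop:2.12} (whose proof is the Bruhat--Tits Fixed Point Theorem) by an explicit orbit-averaging argument to push $y_0$ towards~$0$ inside $\g_{\rm ss}$. That is a legitimate and more constructive alternative; the paper's Bruhat--Tits route is shorter and works uniformly without treating finite- and infinite-dimensional ideals separately, but yours avoids invoking that machinery. The paper's endgame on each slice is also a bit slicker than your explicit spectral estimate: once $\pi_j\res_{Z}$ is bounded one has $W_{\pi_j}+\fz(\g)=W_{\pi_j}$, and since $W_{\pi_j}$ meets $\fz(\g)$ (again Proposition~\ref{prop:2.12}) one obtains $0\in W_{\pi_j}$ directly.

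There is, however, a genuine ordering gap in your Step~2. You assert that \emph{every} continuous unitary representation of $Z$ admits a spectral measure on $\hat Z\subeq\fz'$, citing \cite[Thm.~4.1]{Ne08}, and that ``no semiboundedness is needed here''. That theorem concerns norm-continuous (bounded) representations; for a merely continuous representation of an infinite-dimensional abelian Banach--Lie group there is in general no spectral measure supported on the topological dual $\fz'$ (this is the familiar Bochner--Minlos obstruction on Hilbert spaces). What you need is \cite[Thm.~5.1]{Ne08}, which gives the locally compact momentum set and the $C^*$-picture for \emph{semibounded} representations of $Z$---but then you must first know that $\pi\res_Z$ is semibounded. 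As written, Step~3 presupposes the slices $\cH_n$ produced by Step~2, so the argument is circular.

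The fix is already implicit in your averaging. Because $\Ad(G_{\rm ss})$ acts by isometries on $\g$ and fixes $\fz(\g)$ pointwise, the full ball $B((z_0,y_0),2\rho)\subeq W_\pi$ is carried to $B((z_0,p),2\rho)$ for any $p\in\conv(\Ad(G_{\rm ss})y_0)$, and choosing $\|p\|<\rho$ yields $B((z_0,0),\rho)\subeq W_\pi$. Hence $(z_0,0)\in W_\pi$, so $s_{\pi\res_Z}=s_\pi\res_{\fz}$ is bounded near $z_0$ and $\pi\res_Z$ is semibounded. \emph{Now} \cite[Thm.~5.1]{Ne08} applies and your slicing $\cH=\hat\bigoplus_n\cH_n$ into $Z$-bounded, $G$-invariant pieces is justified; the rest of Step~3 then goes through unchanged. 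In short: run your averaging first to reach the centre, then slice, then bound each slice---this is precisely the order the paper follows.
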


\begin{prf} 
  Proposition~\ref{prop:2.12} implies that the open
  invariant cone $W_\pi \subeq \g$ intersects the center~$\fz(\g)$,
  so that the restriction $\pi\res_{Z(G)_e}$ is also semibounded.
  We claim that it is a direct sum of bounded representations.
  In fact, by \cite[Thm.~5.1]{Ne08}, the momentum set
  \[ C := I_{\pi\res_{Z(G)}} \subeq \fz(\g)' \] 
  is locally compact with respect to the weak-$*$ topology
  and we obtain a representation of the $C^*$-algebra
  $C_0(C,\C)$ on $\cH$ that commutes with $\pi(G)$.
  Using the corresponding spectral measure on $C$, we may thus decompose
  into representations $(\pi_j)_{j \in J}$, for which $C$ is compact. Then
  $\pi_j\res_{Z(G)}$ is bounded, so that
  $W_{\pi_j} + \fz(\g) = W_{\pi_j}$.
  As $W_{\pi_j}$ intersects the center, it follows that $0 \in W_{\pi_j}$,
  i.e., that $\pi_j$ is bounded.   
\end{prf}

If $\g$ is semisimple, then Proposition~\ref{prop:2.12}
shows as in the proof above
shows that every semibounded representation of $\g$ is bounded.
Then $W_\pi$ intersects $\fz(\g) = \{0\}$, which is equivalent to
boundedness of the representation.

\begin{defn}(\cite{JN19}) We call a smooth projective representation
  $(\oline\pi, \cH)$ of a Lie group $G$ {\it bounded
    (resp.~semibounded)},
  if the corresponding smooth representation 
  $(\pi, \cH)$ of the central Lie group extension
  \[ G^\sharp = \{ (U,g) \in \U(\cH) \times G \: \oline U = \oline\pi(g)\},\]
  given by $\pi(U,g) = U$, is   bounded (resp.~semibounded).   
\end{defn}

The following theorem shows that we cannot expect to find more
semibounded representations of Hilbert--Lie groups
by passing to central extensions.
It builds on quite subtle techniques involving in particular
Lie group cohomology and the Bruhat--Tits Fixed Point Theorem.
For more details, we refer to the proof of Theorem~\ref{thm:b.4}
in the appendix.

\begin{thm}
  \mlabel{thm:b.6}  Let $\g$ be a Hilbert--Lie algebra.
  Then every projective semibounded unitary representation
  $(\oline\pi,\cH)$ of a corresponding simply connected
  Lie group $G$ lifts to a unitary representation of~$G$.
\end{thm}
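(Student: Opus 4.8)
The plan is to reduce the statement to a vanishing/triviality result for the relevant Lie group cohomology class, exploiting that semiboundedness forces the obstruction to live in a space on which a natural isometric action of a suitable group has a fixed point. First I would recall the setup: a smooth projective semibounded representation $\oline\pi$ of the simply connected $G$ gives a central $\T$-extension $G^\sharp$ carrying a semibounded unitary representation $\pi$ with $\pi(U,g)=U$. Lifting $\oline\pi$ to $G$ is equivalent to splitting this central extension, i.e.\ to showing that its characteristic class in $H^2_{\rm Lie,s}(G,\T)$ (smooth Lie group cohomology) vanishes. Since $G$ is simply connected, by the integration theory for central extensions of infinite-dimensional Lie groups the obstruction sits first on the Lie algebra level: one has a continuous Lie algebra $2$-cocycle $\omega$ on $\g$, and the key claim is that $\omega$ must be a coboundary. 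Here the Hilbert--Lie structure is decisive: as recalled in Subsection~\ref{subsec:centext}, the invariant scalar product identifies continuous $2$-cocycles on $\g$ with (bounded) derivations, and for $\g=\fu_2(\sH)$ every continuous cocycle has the form $\omega_\bd$; more generally, by Schue's decomposition $\g=\fz(\g)\oplus\hat\bigoplus_j\g_j$ the cohomology is controlled factor by factor, with the center contributing only trivial (abelian) classes.

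The heart of the argument is to show the Lie algebra cocycle is trivial using semiboundedness. I would argue as follows: semiboundedness of $\pi$ means the cone $W_\pi\subeq\g^\sharp$ is nonempty and open; by an argument as in Theorem~\ref{theoremofreconstruction} (via Proposition~\ref{prop:2.12}), $W_\pi$ meets the center of $\g^\sharp$, and the restriction of $\pi$ to the center of $G^\sharp$ is semibounded, hence a direct sum of bounded representations. This forces the central $\R$-direction coming from $\omega$ to be represented by an operator with semibounded spectrum. Combined with the fact that on each semisimple part semibounded implies bounded, one concludes that $\pi$ itself is a direct sum of bounded representations of $G^\sharp$; a bounded representation has a $G^\sharp$-invariant ``positive'' structure that lets one run the standard averaging/fixed-point argument. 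Concretely, the affine space of continuous linear sections $\sigma\:\g\to\g^\sharp$ (equivalently, of potential ``splitting'' $1$-cochains) carries an isometric affine action of $G$ (via the adjoint-type action twisted by $\omega$), and the Hilbert space structure makes this a complete CAT(0) setting; the Bruhat--Tits Fixed Point Theorem (Appendix~\ref{app:b}) then produces a $G$-fixed section, i.e.\ a Lie algebra splitting. This is exactly the mechanism behind Theorem~\ref{thm:b.4}, and I would invoke that theorem's proof structure rather than reproduce it.

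Having split the cocycle on the Lie algebra level, I would promote this to a splitting of $G^\sharp\to G$: since $G$ is simply connected, a trivial Lie algebra cocycle integrates to the trivial central Lie group extension, so $G^\sharp\cong G\times\T$ as Lie groups over $G$, and the corresponding homomorphic section $G\to G^\sharp$ composed with $\pi$ yields the desired unitary representation of $G$ lifting $\oline\pi$. I should be slightly careful about the global/topological step: one needs that $\pi_2(G)$ (or the image of $\pi_2$ under the period map) does not obstruct integration of the trivialized cocycle; but for the Hilbert--Lie groups in play the relevant $\pi_2$ vanishes (as recalled in Subsection~\ref{subsec:centext} for $\U_2(\sH)$, and in general because simple Hilbert--Lie groups are modeled on the classical groups whose $\pi_2$ is trivial), so this causes no trouble.

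\medskip

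\noindent\textbf{Where the difficulty lies.} The routine parts are the reduction to Lie algebra cohomology and the identification of cocycles with derivations via the scalar product. The genuine obstacle is establishing that semiboundedness is strong enough to trivialize the cocycle in the possibly infinite direct sum setting: one must control the interplay between the noncompact center, the (possibly infinitely many) simple ideals, and the semiboundedness cone simultaneously, and then package the splitting problem as a genuine affine isometric action on a complete CAT(0) space so that Bruhat--Tits applies. This is precisely the content that Theorem~\ref{thm:b.4} and its proof in Appendix~\ref{app:b} are designed to supply, so the cleanest path is to derive Theorem~\ref{thm:b.6} as a corollary of that result together with the discrete decomposition of semibounded representations (Theorem~\ref{theoremofreconstruction}).
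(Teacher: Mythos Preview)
Your overall architecture is right: reduce to triviality of the Lie algebra cocycle $\omega$, invoke Theorem~\ref{thm:b.4}, then integrate using simple connectedness. The gap is in how you feed Theorem~\ref{thm:b.4}. You propose to show $W_\pi$ meets the center of $\g^\sharp$ ``by an argument as in Theorem~\ref{theoremofreconstruction} (via Proposition~\ref{prop:2.12})'' and then decompose $\pi$ into bounded pieces. But Proposition~\ref{prop:2.12} requires the adjoint action to be isometric, i.e.\ that the ambient Lie algebra be Hilbert--Lie. The central extension $\g^\sharp = \R \oplus_\omega \g$ is \emph{not} Hilbert--Lie when $[\omega]\neq 0$: by Corollary~\ref{cor:b.3} and Example~\ref{ex:hs}(a), $\hat\g$ fails to be elliptic precisely because $\Theta_\omega$ is unbounded. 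So the Bruhat--Tits step you want to run on $\g^\sharp$ presupposes what you are trying to prove. Likewise, Theorem~\ref{theoremofreconstruction} applies to Hilbert--Lie $G$, not to $G^\sharp$, so the ``discrete decomposition'' detour does not get off the ground.

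The paper bypasses this by working on the dual side with the momentum set. Since $\partial\pi(1,0)=i\1$, every element of $I_\pi\subeq(\g^\sharp)'$ evaluates to $1$ on the central generator $(1,0)$; pick any $\lambda=(1,\alpha)\in I_\pi$. Semiboundedness of $\pi$ says exactly that $I_\pi$ is semi-equicontinuous, hence so is the coadjoint orbit $\cO_\lambda\subeq I_\pi$. By Remark~\ref{rem:new1} this orbit, sitting in the affine hyperplane $\{1\}\times\g'$, is the orbit of $\alpha$ under the affine action $g*\nu=\Ad^*(g)\nu+\Theta_\omega(g)$. That is precisely condition~(d) of Theorem~\ref{thm:b.4}, which then gives (a): $\omega$ is a coboundary. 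Simple connectedness finishes the job. So keep your endgame, but replace the ``$W_\pi$ meets the center of $\g^\sharp$'' step by the momentum-set argument: it delivers the semi-equicontinuous affine orbit needed for Theorem~\ref{thm:b.4} without assuming any Hilbert structure on $\g^\sharp$.
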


\begin{prf} We consider the corresponding semibounded
  representation $(\pi,\cH)$ of the central extension 
  $G^\sharp$ of~$G$.
  Then $\partial \pi(1,0) = i \1$ 
  implies that the momentum set $I_\pi
  \subeq (\g^\sharp)'$ 
does not annihilate $(1,0)$. Hence it contains an element 
of the form $\lambda = (z,\alpha)$, $z \not=0$. 
Then $\cO_\lambda$ is semi-equicontinuous (see \eqref{def:seq} in
Appendix~\ref{app:b}), so that the orbit of $\alpha \in \g'$ 
under the affine action 
is semi-equicontinuous (cf.~Proposition~\ref{prop:3.3} and
Remark~\ref{rem:new1}). Now Theorem~\ref{thm:b.4} 
implies that the central Lie algebra extension is trivial.
As $G$ is simply connected, it now follows from\cite{Ne02} that
the central Lie group extension $G^\sharp \to G$ splits. 
\end{prf}


\begin{rem} \mlabel{rem:4.16}
  Let $\hat G = G \rtimes_\alpha \R$
  for a continuous $\R$-action defined by
  $\alpha \: \R \to \Aut(G)$.
  Then $\alpha$ preserves $\fz(\g)$ and all
  simple ideals $(\g_j)_{j \in J}$ (Lemma~\ref{lem:pres-ideals}).

  Suppose that $(\pi,\cH)$ is a positive energy representation of $\hat G$,
  i.e., $U_t := \pi(e,t) = e^{itH}$ with $H = H^*$ bounded from below.
Then
  $U_t \in \pi(G)''$ for $t \in \R$ by the Borchers--Arveson
  Theorem (\cite[Thm.~4.14]{BGN20}).
  If $\pi$ is semibounded, then we may w.l.o.g.\ assume that
  $\bd = (0,1) \in W_\pi$.

  \begin{itemize}
  \item As $\pi(Z(G))$ is central in $\pi(G)''$, we must have
    $\alpha_t(z) = z$ for $t \in \R, z \in Z(G)_e$, provided
    $\ker\pi$ is discrete.

  \item Let $G^0 \subeq G$ be the closed subgroup of $\alpha$-fixed
    points, which is a Hilbert--Lie group (cf.\ Subsection~\ref{subsec:galpha}).
    If $\pi$ is semibounded with $\bd = (0,1) \in W_\pi$,
    then     the restriction of $\pi$ to     
    $G^0 \rtimes_\alpha \R = G^0 \times \R$ 
    is also semibounded.
    The same holds  for the restriction to the abelian subgroup
    $Z \times \R$ for $Z := Z(G^0)_e$, so that \cite[Thm.~4.1]{Ne08} implies
    the existence of a corresponding spectral measure.
    It follows in particular, that the restriction to 
    $Z \times \R$ is a direct sum of bounded representations
    on $G^0$-invariant subspaces.
    We conclude with Proposition~\ref{prop:2.12}, applied to the
      open cone $W_\pi$,   that $(\pi, \cH)$ is a direct sum of bounded representations of $G^0$.

   If, in addition, the dimensions of the finite-dimensional
    $\bd$-eigenspaces are bounded from above, then the discussion
    above 
    shows that $\hat Z$ is countable,
    so that Theorem~\ref{thm:5.7} implies
        that $\pi\res_{G^0_e}$ is a direct sum of irreducible
    bounded representations. 
  \end{itemize}
\end{rem}

\subsection{Semibounded representations require semibounded  $i\bd$} 
\mlabel{subsec:4.3} 

Let $\bd = - \bd^* = i H$ be a, not necessarily bounded,
operator on the complex Hilbert space
$\sH$ and $\g^\infty = \fu_2(\sH)^\infty$ be the Fr\'echet--Lie algebra
of smooth vectors 
for the $\R$-action defined by $\alpha_t(x) = e^{t \bd} x e^{-t \bd}$. 
Then $\g^\infty$ is dense in $\fu_2(\sH)$ and we have a Fr\'echet--Lie group 
$\hat G^\infty := G^\infty \rtimes_\alpha \R$. 
If $\bd$  is diagonal, then 
\[ \fu(\sH)_{\rm alg} = \fu_2(\sH)_{\rm alg} \subeq \fu_2(\sH)^\infty.\]

In this subsection we show that, if $\hat\g^\infty$ contains open invariant
cones contained in the half spaces
\[ \g^\infty \oplus (0,\infty)\bd,\]
then the selfadjoint operator  $i\bd$ is semibounded. 
Intersecting with $\g^\infty + \bd \cong \g^\infty \times \{1\}$, 
we see that our requirement is equivalent to the existence of a 
proper open convex  subset of $\g^\infty$, invariant under the affine action 
\[ \beta_g x := \Ad_g x + g\bd g^{-1} - \bd\] 
on $\g^\infty$.

\begin{rem} Consider the half Lie group
  $G := \U_2(\sH) \rtimes_\alpha \R$,
  which is a topological group and a smooth manifold, such that all
  left multiplications are smooth
  (see \cite{MN18} for a detailed discussion of half Lie groups).
  Right multiplications with
  $g \in  \U_2(\sH)$ are of the form 
  $(h,t) \mapsto (h \alpha_t(g),t)$, hence are smooth if and only if
  $g \in \U_2(\sH)^\infty$, the subgroup of elements in
  $\U_2(\sH)$ with smooth
  $\alpha$-orbits. Hence 
  the Fr\'echet--Lie group $G^\infty = \U_2(\sH)^\infty \rtimes_\alpha \R$
  consists of all elements for which the right multiplication 
  is smooth (see~\cite{BHM23} for a systematic discussion of
  the Lie group structure on such subgroups). 
\end{rem}

\begin{lem} \mlabel{lem:semib}
  Let $A$ and $B$ be selfadjoint operators on the complex
  Hilbert space $\sH$ and assume that $B$ is bounded.
  Then $A$ is semibounded if and only if $A+B$ is.   
\end{lem}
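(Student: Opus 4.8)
The statement is that for selfadjoint $A,B$ on a complex Hilbert space with $B$ bounded, $A$ is semibounded (i.e., bounded from below or from above) if and only if $A+B$ is. By symmetry (replacing $A$ by $-A$ and $B$ by $-B$), it suffices to prove the "bounded from below" version, and by symmetry in $A \leftrightarrow A+B$, $B \leftrightarrow -B$ it suffices to prove one implication. So the plan is: assume $A$ is bounded from below, say $A \geq c\1$ in the sense of quadratic forms, and show $A+B$ is bounded from below.

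The key tool is the characterization of semiboundedness of a selfadjoint operator $T$ via its quadratic form: $T \geq c\1$ iff $\langle v, Tv\rangle \geq c\|v\|^2$ for all $v$ in the form domain of $T$. First I would recall that $A+B$, being a bounded perturbation of the selfadjoint operator $A$, is again selfadjoint on $\dom(A)$ by the Kato--Rellich theorem (a standard fact; one may cite the same references used elsewhere for Stone's theorem, or simply note that $B$ bounded means $\dom(A+B) = \dom(A)$ and selfadjointness is immediate since $(A+B)^* = A^* + B^* = A + B$). The quadratic form domain of $A$ is then also the form domain of $A+B$ since $B$ is form-bounded with relative bound $0$ (indeed bounded). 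Then for every $v$ in this common form domain,
\[ \langle v, (A+B)v\rangle = \langle v, Av\rangle + \langle v, Bv\rangle \geq c\|v\|^2 - \|B\|\,\|v\|^2 = (c - \|B\|)\|v\|^2, \]
so $A+B \geq (c-\|B\|)\1$, hence $A+B$ is bounded from below.

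For the converse direction, write $A = (A+B) + (-B)$ with $-B$ bounded, and apply what was just proved with $A+B$ in the role of "$A$" and $-B$ in the role of "$B$". This gives: $A+B$ bounded from below $\Rightarrow$ $A$ bounded from below. Combining, $A$ is bounded from below iff $A+B$ is. Applying the same equivalence to $-A$ and $-B$ yields: $A$ is bounded from above iff $A+B$ is. Therefore $A$ is semibounded (bounded from below or above) iff $A+B$ is, which is the claim.

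**Main obstacle.** There is essentially no serious obstacle here; the only point requiring a little care is the bookkeeping about form domains versus operator domains, and the fact that "$T \geq c\1$" for unbounded selfadjoint $T$ should be interpreted spectrally (equivalently $\inf\Spec(T) \geq c$) and matched with the quadratic-form inequality on the form domain. Since $B$ is a genuinely bounded operator (not merely relatively bounded), all these subtleties are trivial: $\dom(A) = \dom(A+B)$, the quadratic forms have the same domain, and the spectral inequality is equivalent to the form inequality on that common domain. So the proof is a two-line estimate plus the symmetry argument sketched above.
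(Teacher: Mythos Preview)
Your proof is correct and takes essentially the same approach as the paper: both argue that $\cD(A) = \cD(A+B)$ and that for unit vectors $v$ in this common domain the quadratic forms satisfy $\la v,(A+B)v\ra \in \la v,Av\ra + [-\|B\|,\|B\|]$, from which semiboundedness transfers in both directions. The paper's version is simply a terser one-line statement of the same estimate, without the explicit discussion of form domains or the Kato--Rellich remark.
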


\begin{prf} For unit vectors $v \in \cD(A) = \cD(A + B)$, we have
  \[ \la v,(A+B)v \ra
    = \la v, A v \ra + \la v, B v\ra
\in \la v, A v \ra + [-\|B\|, \|B\|].\] 
Thus $A$ is semibounded if and only if $A + B$ is.   
\end{prf}

\begin{thm} \mlabel{thm:5.8}
  If $\g^\infty = \fu_2(\sH)^\infty$ 
  contains proper open $\beta$-invariant convex subsets, 
then $i\bd$ is semibounded. 
\end{thm}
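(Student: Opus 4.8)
The plan is to argue by contraposition: assuming $i\bd=-H$ is \emph{not} semibounded — i.e.\ that the self-adjoint operator $H$ has spectrum unbounded both above and below — I will show that every nonempty open $\beta$-invariant convex subset $\Omega\subseteq\g^\infty$ equals $\g^\infty$, so that no \emph{proper} such subset can exist.

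The key is the identity $\beta_g(x)=\Ad_g x+g\bd g^{-1}-\bd=g(x+\bd)g^{-1}-\bd$, valid for $g\in G^\infty=\U_2(\sH)^\infty$ (both summands $\Ad_g x$ and $g\bd g^{-1}-\bd$ lie in $\g^\infty$). Hence the $\beta$-orbit of a point $x_0\in\Omega$ is $\cO_{x_0}=\{\,gAg^{-1}-\bd:\ g\in G^\infty\,\}$ with $A:=x_0+\bd$ skew-adjoint, and since $\Omega$ is open and convex and contains $\conv(\cO_{x_0})$, it suffices to prove that $\conv(\cO_{x_0})$ is dense in $\g^\infty$: an open convex set containing a dense subset is itself dense, hence all of $\g^\infty$ by a Hahn--Banach separation argument. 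Passing to $B:=iA=ix_0-H$, which is self-adjoint and, being a Hilbert--Schmidt perturbation of $-H$, has spectrum unbounded above and below, and multiplying through by $i$, the claim becomes: the closed convex hull $\oline{\conv}\{\,gBg^{-1}:\ g\in G^\infty\,\}$ is dense, for the Fréchet topology of $i\g^\infty$, in the affine subspace $B+i\g^\infty$.

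To prove this I would build the hull from rank-two moves. For unit vectors $v,w$ lying in spectral subspaces of $B$ for two short disjoint intervals (such subspaces are $\bd$-invariant and $\bd$ is bounded on them, so the rotation group $\U(\Spann\{v,w\})$ is contained in $\U_2(\sH)^\infty$), conjugating $B$ by $\U(\Spann\{v,w\})$ moves it, up to an error governed by the interval lengths, over the $2$-sphere of radius $\tfrac12|\la v,Bv\ra-\la w,Bw\ra|$ inside the traceless part of the $\Spann\{v,w\}$-block of $i\g^\infty$. Interchanging $v$ and $w$ reverses the sign, so $\oline{\conv}$ of the orbit contains the corresponding ball about $0$; choosing $v$ (or a vector obtained from $v,w$ by a rotation) with $\la v,Bv\ra$ arbitrarily large — possible since $B$ is unbounded above — and using such vectors as intermediate steps, one obtains \emph{arbitrarily large} multiples of every block direction $E_{vv}-E_{ww}$, $E_{vw}+E_{wv}$, $i(E_{vw}-E_{wv})$, hence, combining convexly across pairs, the whole finite-dimensional block. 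Because $B$ is unbounded on \emph{both} sides, the inequalities that could otherwise confine the closed convex hull — $\oline{\conv}\{gBg^{-1}\}\subseteq\{C:\sup\Spec C\le\text{const}\}$ or $\subseteq\{C:\inf\Spec C\ge\text{const}\}$ — are vacuous, so the hull contains all finitely supported self-adjoint operators that are traceless in an (approximate) eigenbasis of $B$, and these are dense in $i\g^\infty$ (truncate a Hilbert--Schmidt operator and absorb the residual trace by a small diagonal term spread over many fresh basis vectors). This forces $\Omega=\g^\infty$.

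The main obstacle is exactly this convex-hull density statement: that for a self-adjoint operator $B$ with two-sided unbounded spectrum the closed convex hull of its $\U_2(\sH)^\infty$-conjugation orbit fills up $B+i\g^\infty$. The genuinely infinite-dimensional difficulties are the possible presence of continuous spectrum (there are no eigenvectors, so the rank-two moves must be carried out with spectral subspaces for shrinking intervals while all errors are tracked), the need to make every approximation in the Fréchet topology of $\g^\infty$ rather than merely in operator norm, and the restriction to conjugations by elements of $\U_2(\sH)^\infty$, which rules out a direct appeal to Voiculescu-type approximate-equivalence theorems and makes the elementary assembly of the hull necessary. The converse side of the picture, showing the statement is sharp, is easy: if $H\ge -R$ then, since $i\beta_g(x)-H=g(ix-H)g^{-1}$, the function $x\mapsto\sup\Spec\big(i(x+\bd)\big)$ is $\beta$-invariant, convex and continuous, so for $c>R$ the sets $\{\,x\in\g^\infty:\ \sup\Spec(i(x+\bd))<c\,\}$ are proper, nonempty, open, $\beta$-invariant convex subsets of $\g^\infty$, and symmetrically when $H$ is bounded above.
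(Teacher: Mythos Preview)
Your contrapositive strategy is coherent, but the density argument has a concrete error. First a small fix: the spectral subspaces of $B = ix_0 - H$ are \emph{not} $\bd$-invariant unless $[\bd,x_0]=0$; you should absorb $x_0$ into $\bd$ first (this leaves $\g^\infty$ and the semiboundedness of $i\bd$ unchanged, since $\ad x_0$ is bounded), reducing to $x_0 = 0$, $B = -H$. The real gap is the claim that ``interchanging $v$ and $w$ reverses the sign, so $\oline{\conv}$ of the orbit contains the corresponding ball about $0$.'' Swapping $v$ and $w$ is only a relabeling of the basis of $V = \Spann\{v,w\}$; it gives back the \emph{same} subset of $i\fu_2(\sH)$. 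The convex hull of the rank-two piece $\{PHP - g_V(PHP)g_V^{-1}: g_V\in\U(V)\}$ is a single ball in $i\su(V)$ centered at the traceless part of $PHP$, with $0$ on its \emph{boundary}, not in its interior. Reaching large multiples of a fixed direction such as $E_{vv}-E_{ww}$ therefore requires auxiliary vectors with $\langle\,\cdot\,,H\,\cdot\,\rangle$ far out on both sides and a careful disposal of the resulting diagonal ``garbage'' in all Fr\'echet seminorms; this is not carried out, and you yourself flag the density statement as the main obstacle. So the proposal is a plan, not a proof.

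The paper proceeds on the dual side and avoids the density construction altogether. After translating so that $0\in\partial\Omega$, Hahn--Banach yields $0\neq\lambda\in(\g^\infty)'$ with $\lambda(g\bd g^{-1}-\bd)\geq 0$, and differentiating along one-parameter subgroups gives $\lambda([x,\bd]) = 0$ and $\lambda([x,[x,\bd]])\geq 0$. The first condition forces $\lambda$ to factor through $\g^0$, hence $\lambda = \tr(\bb_\lambda\,\cdot\,)$ for a Hilbert--Schmidt operator $\bb_\lambda\in\g^0$. The second makes the hermitian form $h(x,y)=\lambda([x^*,[\bd,y]])$ positive semidefinite; evaluating it on rank-one operators between eigenspaces $\sH^\mu,\sH^\nu$ of $i\bb_\lambda$ yields $\sup\Spec(i\bd|_{\sH^\mu})\leq\inf\Spec(i\bd|_{\sH^\nu})$ whenever $\mu<\nu$. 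Since $\bb_\lambda\neq 0$ is compact, it has an extremal nonzero eigenvalue with finite-dimensional eigenspace, which then bounds $i\bd$ from one side. Note that, via Hahn--Banach, your density claim is exactly the assertion that no such $\lambda$ exists when $i\bd$ is two-sided unbounded --- so a rigorous completion of your route would most likely end up reproducing this analysis anyway.
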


\begin{prf} Let $\Omega \subeq \g^\infty$ be a proper $\beta$-invariant open convex subset. 
Replacing $\bd$ by $\bd + x$ for some $x \in \g^\infty$, we may w.l.o.g.\ assume that 
$0 \in \partial \Omega$. Note that $i\bd$ is semibounded if and only if 
$i(\bd + x)$ has this property (Lemma~\ref{lem:semib}). 

The Hahn--Banach separation theorem implies the existence of a 
non-zero continuous linear functional 
$\lambda \in \g^{\infty,'}$ with $\inf \lambda(\Omega) = \lambda(0) = 0$. 
This implies that 
\[ \lambda(g\bd g^{-1}-\bd) \geq 0 \quad \mbox{ for } \quad g \in G^\infty.\] 
Derivatives along one-parameter subgroups of $G^\infty$ lead to the conditions 
\begin{equation}
  \label{eq:lamd}
 \lambda([x,\bd]) = 0 \quad \mbox{ and } \quad 
 \lambda([x,[x,\bd]]) \geq 0 \quad \mbox{ for } \quad x \in \g^\infty.
\end{equation}
The first condition implies that $\lambda \in (\g^\infty)'$ is invariant under 
the $\R$-action defined by conjugation with $e^{t\bd}$. 
For the Hilbert space $\g$, we consider the decomposition 
\[ \g^\infty = \g^0 \oplus (\g^\infty \cap (\g^0)^\bot).\] 
Then the definition of $\g^\infty$ implies that $\g^{\infty,0} = \g^0$, so that 
$\g^\infty  = \g^0 \oplus \oline{[\bd,\g^\infty]}$. Now $\lambda$ vanishes on $[\bd,\g^\infty]$, so that 
$\lambda \in (\g^{\infty,0})'$. Therefore 
$\lambda(x) = \tr(\bb_\lambda x)$ for some 
Hilbert--Schmidt operator $\bb_\lambda \in \g^{\infty,0}$. 

Next we evaluate the second condition. First we observe that 
$\bd$ commutes with $\bb_\lambda$, hence preserves all eigenspaces of 
$\bb_\lambda$. For the spectral decomposition of $i \bb_\lambda$, 
\[ \sH = \hat\bigoplus_{\mu \in \R} \sH^\mu, \quad \mbox{ we put } \quad
  \bd_\mu := \bd\res_{\sH^\mu}.\]
Then 
$\sH^\mu$ is finite-dimensional for $\mu\not=0$.
We claim that 
\[ h(x,y)
  := \lambda([x^*,[\bd,y]]) = - \lambda([x^*,[y,\bd]])\] 
defines a positive semidefinite hermitian form on $\g^\infty_\C$. In fact, the 
form is clearly sesquilinear, and from $\oline{\lambda(x)}
= \lambda(\oline x)$ and $\lambda([\bd,\g^\infty]) = \{0\}$ we derive 
\[ \oline{h(y,x)} 
= \lambda(\oline{[y^*,[\bd, x]]}) 
= \lambda([-y,[x^*,\bd]]) 
= \lambda([[x^*,\bd],y]) 
= \lambda([x^*,[\bd, y]]) = h(x,y),\]
so that $h$ is hermitian. We thus have 
\[ h(x,x) = \lambda([x^*,[\bd, x]]) = \lambda([[-x^*,[x,\bd]]) \geq 0
  \quad \mbox{ for all } \quad 
  x\in \g^\infty_\C\]
because this holds for $x \in \g^\infty$ by \eqref{eq:lamd}. 
Let $\nu\not=\mu$ and 
$x \in B(\sH^\mu, \sH^\nu)  = B_2(\sH^\mu, \sH^\nu) \subeq \g^\infty_\C$ 
(recall that at least one of these spectral subspaces is finite-dimensional). 
Then 
\[ [i\bb_\lambda,x] = (\nu-\mu) x \quad \mbox{ and } \quad
  [i\bb_\lambda,x^*] = (\mu-\nu) x^* \] 
lead to 
\begin{align*}
0 &\leq \lambda([x^*,[\bd, x]]) 
= \tr(\bb_\lambda [x^*,[\bd, x]])  \\
&= \tr(i\bb_\lambda [x^*,[-i\bd,x]])
= -\tr([i\bb_\lambda,x^*][i\bd, x])\\
&= -(\mu - \nu) \tr(x^*[i\bd,x])
= (\nu - \mu) \tr(i\bd xx^* - x^*x i\bd) \\
& = (\nu - \mu) \tr(i\bd_\nu xx^* - x^*x i\bd_\mu) \\
&= (\nu-\mu) \big(\tr(i\bd_\nu xx^*)- \tr(i\bd_\mu x^*x)\big).
\end{align*}

We assume w.l.o.g.\ that $\nu> \mu$. Consider an operators $x = P_{v,w}$ 
of the form 
$P_{v,w}(u) := \la w,u \ra v,$ 
where $w \in \sH^\mu$ and $v \in \sH^\nu$ are unit eigenvectors
for~$\bd$. Then 
\[ \tr(i\bd_\mu x^*x) 
= \tr(i\bd_\mu P_{w,w}) = \la w, i \bd_\mu w \ra 
\quad \mbox{ and } \quad 
\tr(i\bd_\nu xx^*) = \la v,i \bd_\nu v \ra.\] 
We therefore obtain 
\[ \sup \Spec(i\bd_\mu) \leq \inf\Spec(i\bd_\nu)\quad \mbox{ for } \quad
  \mu < \nu.\] 
Let us write this as $i\bd_\mu \leq i\bd_\nu$. 

If the Hilbert--Schmidt operator
$i\bb_\lambda$ has a positive eigenvalue, it has a maximal 
eigenvalue $\mu_{\rm max} > 0$. Then $i\bd_{\mu_{\rm max}} \geq i \bd_\mu$ for all 
other spectral values $\mu$ of $i\bb_\lambda$
and the fact that $\sH^{\mu_{\rm max}}$ is
finite-dimensional  implies that $i\bd$ is bounded from above. 
Likewise, the existence of a negative eigenvalue of $i\bb_\lambda$ 
implies that $i\bd$ is bounded from below. As $\lambda$ is non-zero, 
one of these two cases occurs, so that $i\bd$ is semibounded. 
\end{prf}

\begin{cor} \mlabel{cor:5.8}
  A semibounded smooth representation of the Fr\'echet--Lie group 
$\U_2(\sH)^\infty \rtimes_\alpha \R$ 
  exists if and only if the hermitian operator $i\bd$ on $\sH$ is semibounded. 
\end{cor}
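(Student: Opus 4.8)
The plan is to prove the two directions separately: when $i\bd$ is semibounded I will exhibit one explicit semibounded representation, and for the converse I will manufacture, out of any semibounded representation, a proper open convex subset of $\fu_2(\sH)^\infty$ that is invariant under the affine action, and then quote Theorem~\ref{thm:5.8}.

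\textbf{Sufficiency.} Assume $i\bd$ is semibounded, so that $H := -i\bd$ (for which $\alpha_t(g) = e^{itH}ge^{-itH}$) is a selfadjoint operator on $\sH$ that is bounded above or below; replacing $\bd$ by $-\bd$ if needed — which changes neither the hypothesis nor the conclusion — I may assume $H \geq m\1$. I would then check that
\[ \hat\pi \: \U_2(\sH)^\infty \rtimes_\alpha \R \to \U(\sH), \qquad \hat\pi(g,t) := g\,e^{itH}, \]
is a smooth unitary representation: it is a homomorphism since $g\,e^{itH}\,g'\,e^{it'H} = g\,\alpha_t(g')\,e^{i(t+t')H}$, and it is smooth because $g \mapsto gv$ is smooth on all of $\sH$ for $g \in \U_2(\sH)^\infty$ while $t \mapsto e^{itH}v$ is smooth for $v$ in the dense domain $\bigcap_k \cD(H^k)$, the two dependencies combining to joint smoothness. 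For $x \in \fu_2(\sH)^\infty$ one has $i\,\partial\hat\pi(x + s\bd) = ix - sH$, hence for $s \geq 0$
\[ s_{\hat\pi}(x + s\bd) = \sup\big(\Spec(ix - sH)\big) \leq \|x\| - sm, \]
using $ix \leq \|x\|\1$ and $-sH \leq -sm\1$; thus $s_{\hat\pi}$ is bounded near $\bd$ and $\hat\pi$ is semibounded (and visibly non-trivial on $\U_2(\sH)^\infty$).

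\textbf{Necessity.} Let $(\hat\pi,\cH)$ be a semibounded smooth representation of $\U_2(\sH)^\infty \rtimes_\alpha \R$. Its cone of semiboundedness $W_{\hat\pi}$ is a non-empty open $\Ad$-invariant convex cone in $\hat\g^\infty = \fu_2(\sH)^\infty \oplus \R\bd$. Being a non-empty open cone, it meets the affine hyperplane $\fu_2(\sH)^\infty + \bd$ or $\fu_2(\sH)^\infty - \bd$; after replacing $\bd$ by $-\bd$ if necessary I fix $x_0$ with $x_0 + \bd \in W_{\hat\pi}$. Then
\[ \Omega := \{ x \in \fu_2(\sH)^\infty \: x + \bd \in W_{\hat\pi}\} \]
is a non-empty open convex subset of $\fu_2(\sH)^\infty$, and since $\Ad_{(g,0)}(x + \bd) = \beta_g(x) + \bd$ it is invariant under the affine action $\beta_g(x) = \Ad_g x + g\bd g^{-1} - \bd$. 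If $\Omega$ is a proper subset, Theorem~\ref{thm:5.8} applies and yields that $i\bd$ is semibounded. The only remaining case, $\Omega = \fu_2(\sH)^\infty$, gives $W_{\hat\pi} \supseteq \fu_2(\sH)^\infty \oplus (0,\infty)\bd$; if the slice at $-\bd$ is also improper then $W_{\hat\pi} = \hat\g^\infty$, i.e.\ $\hat\pi$ is bounded. I would dispose of this ``$W_{\hat\pi}$ too large'' situation (for the non-degenerate representations at hand) by reducing to $\bd = \diag(id_j)$ via Corollary~\ref{cor:5.3} and using, for each root $\eps_j - \eps_k$, the relations $[\bd, E_{jk}-E_{kj}] = (d_j-d_k)\,i(E_{jk}+E_{kj})$ and $[\bd, i(E_{jk}+E_{kj})] = -(d_j-d_k)(E_{jk}-E_{kj})$ inside $\fu_2(\sH)^\infty$: applying $\partial\hat\pi$ forces $|d_j - d_k| \leq 2\|\partial\hat\pi(\bd)\|$ on every index pair that $\hat\pi$ does not annihilate, and since a nonzero continuous representation of $\U_2(\sH)^\infty$ cannot annihilate all matrix units outside a finite set, the sequence $(d_j)$ — hence $i\bd$ — is bounded. (Alternatively, the restriction of $\hat\pi$ to $\U_2(\sH)^\infty$ is then bounded, extends to a bounded, hence extremal-weight (Theorem~\ref{thm:5.7}), representation of $\U_2(\sH)$, and Proposition~\ref{prop:4.11}(b), applied to a non-trivial weight, converts the boundedness of $\Spec\partial\hat\pi(\bd)$ into semiboundedness of $i\bd$.)

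\textbf{Where the difficulty lies.} The hard part will be exactly this dichotomy in the necessity direction: establishing that the sliced set $\Omega$ is proper, equivalently that a semibounded representation cannot be ``too bounded along $\fu_2(\sH)^\infty$'' without already forcing $i\bd$ to be semibounded. This is the step where non-degeneracy of the representation has to enter and where the convex geometry must be married to the representation theory of the simple algebra $\fu_2(\sH)$ (Theorem~\ref{thm:5.7}, Proposition~\ref{prop:4.11}). A secondary point of care is that $\U_2(\sH)^\infty \rtimes_\alpha \R$ is only a Fréchet--Lie group when $\bd$ is unbounded — the ``honest'' ambient object being the half Lie group $\U_2(\sH)\rtimes_\alpha\R$ used in Theorem~\ref{thm:4.4} — so I would confirm that the cone $W_{\hat\pi}$, the momentum set, and the affine action all behave as in the Banach case; they do, since only the general locally convex semibounded-representation machinery already underlying Theorem~\ref{thm:5.8} is invoked.
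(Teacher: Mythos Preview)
Your sufficiency argument via the identical representation on $\sH$ is exactly the paper's. For necessity the paper writes only ``Theorem~\ref{thm:5.8} shows that the semiboundedness of $i\bd$ is necessary''; you go further and correctly recognise that one must produce from $W_{\hat\pi}$ a \emph{proper} non-empty open $\beta$-invariant convex subset of $\g^\infty$, and slicing at $\pm\bd$ is the intended mechanism.

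Your case split, however, is incomplete. You treat ``$\Omega$ proper'' and ``both slices equal $\g^\infty$'' ($\hat\pi$ bounded), but not the case where $\Omega = \g^\infty$ while the slice at $-\bd$ is \emph{empty}, i.e., $W_{\hat\pi}$ is exactly the open half-space $\g^\infty \oplus (0,\infty)\bd$. This case occurs: $\hat\pi(g,t) := e^{itH}$ with $H \geq 0$ unbounded on some auxiliary Hilbert space, trivial on $G^\infty$, is smooth and semibounded with precisely this $W_{\hat\pi}$ --- for \emph{any}~$\bd$. So the corollary as stated needs an implicit non-triviality hypothesis on $G^\infty$ (you gesture at ``non-degenerate'' but neither formalise it nor use it to exclude the half-space case). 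For the genuinely bounded case your commutator argument can be made to work once sharpened: if $\partial\hat\pi(x_\alpha) = 0$ for \emph{one} root $\alpha$, then the Fr\'echet-closed ideal $\ker(\partial\hat\pi|_{\g^\infty})$ contains, via the bracket relations of $A_J$, all root spaces and all coroots, hence $\g_{\rm alg}$, which is Fr\'echet-dense in $\g^\infty$; so non-triviality on $G^\infty$ forces $\partial\hat\pi(x_\alpha)\neq 0$ for every $\alpha$, and then $|d_j - d_k| \leq 2\|\partial\hat\pi(\bd)\|$ gives boundedness of $i\bd$. Your alternative route via Theorem~\ref{thm:5.7} is weaker, since Fr\'echet-boundedness of $\partial\hat\pi|_{\g^\infty}$ does not automatically give $\|\cdot\|_2$-continuity, so the extension to $\U_2(\sH)$ requires separate justification.
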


\begin{prf} Theorem~\ref{thm:5.8} shows that the semiboundedness if
  $i\bd$ is necessary for the existence of a a semibounded smooth
  representation of $\U_2(\sH)^\infty \rtimes_\alpha \R.$

  To see that it is also sufficient, we observe that,
   if $i\bd$ is semibounded, then the identical representation
of $\U_2(\sH)$ on $\sH$ extends to a continuous 
representation $\hat\pi$ of $\U_2(\sH) \rtimes_\alpha \R$,
which defines a smooth representation of the Lie group
$\U_2(\sH)^\infty \rtimes_\alpha \R$. The corresponding space of
smooth vectors is the space $\sH^\infty = \sH^\infty(\bd)$ of smooth
vectors of the selfadjoint operator $i\bd$
(cf.\ \cite[Thm.~7.2]{Ne10a}). 
As the restriction of this representation to $\U_2(\sH)$ is bounded,
the cone $W_\pi$ contains an open half-space
$\fu_2(\sH) \pm \R_+ \bd$, so that $\hat\pi$ defines a semibounded
smooth representation of 
$\U_2(\sH)^\infty \rtimes_\alpha \R$.   
\end{prf}

\subsection{Representations of the trace class group} 
\mlabel{subsec:traceclassgroups}

From here on, in this section $\g = \fu_2(\sH), \g_1 = \fu_1(\sH),
G = \U_2(\sH)$ and $G_1 = \U_1(\sH)$.
We write $\ft_1 = \ft \cap \g_1$ for the subalgebra
of diagonal operators in $\fu_1(\sH)$, where $\ft \subeq \g$
also is the subalgebra of diagonal operator with respect to some
orthonormal basis. From \cite[Thms. I.11, III.8]{Ne98} and
\cite{Ne14} we have:

\begin{thm}
\mlabel{thm:7.1} 
For each $\lambda \in \cP_b$, there exists a unique 
bounded unitary representation $\rho_\lambda \: \g_1 \to \fu(\cH_\lambda)$ whose weight 
set with respect to $\ft_1$ is given by 
\[ \cP_\lambda = \conv(\cW\lambda) \cap (\lambda + \Z[\Delta]).\] 
Then $\cW\lambda = \Ext(\conv(\cP_\lambda))$ and 
$\rho_\lambda \sim  \rho_\mu \Longleftrightarrow \mu \in \cW\lambda.$ 
\end{thm}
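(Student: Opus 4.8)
The plan is to realize $\rho_\lambda$ as the \emph{extremal weight representation} attached to $\lambda$ and to extract all the stated properties from highest weight theory for the Lie algebra $\g_{1,\C} \cong \gl_1(\sH)$. First I would fix a positive system $\Delta^+ \subeq \Delta$ making $\lambda$ dominant; since $\lambda \in \cP_b$ is represented by a $\Z$-valued function on $J$ with finite image (Example~\ref{ex:d.1a} and its real and quaternionic analogues), one can order $J$ so that $\lambda$ is non-increasing, and then $\lambda$ becomes a highest weight. Using a triangular decomposition $\g_{1,\C} = \fn^- \oplus \ft_{1,\C} \oplus \fn^+$ (with the appropriate completions) one forms the irreducible highest weight module $L(\lambda)$, equips it with its contravariant Hermitian form, and checks that this form is positive definite exactly because $\lambda$ is integral, dominant and bounded. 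Completing yields a Hilbert space $\cH_\lambda$ on which $\g_1$ acts by skew-hermitian operators. The crucial point where $\lambda \in \cP_b$ is used is \emph{boundedness} (norm-continuity) of $\rho_\lambda$: by \cite[Thms.~I.11, III.8]{Ne98} (and \cite{Ne12, Ne14} for $\K = \R,\H$) the module $L(\lambda)$ embeds $\g_1$- and $\ft_1$-equivariantly into a finite tensor product $\sH^{\otimes p}\otimes(\sH^*)^{\otimes q}$ of the defining representation and its dual, which is visibly a bounded representation. Uniqueness of $\rho_\lambda$ among bounded irreducible representations with highest weight $\lambda$ then follows from irreducibility of $L(\lambda)$.

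Next I would pin down the weight set. The inclusion $\cP_\lambda \subeq \conv(\cW\lambda)\cap(\lambda+\Z[\Delta])$ is standard: every weight of $L(\lambda)$ lies in $\lambda+\Z[\Delta]$, is dominated by $\lambda$, and the weight set is $\cW$-stable, hence lies in $\conv(\cW\lambda)$. The reverse ``saturation'' inclusion is the content specific to these representations: for the root system $A_J$ (and $B_J, C_J, D_J$) and weights of this type, every lattice point of the weight polytope actually occurs, which one sees either directly from the tensor-product model or by reducing along $\fsl_2$-strings to the finite-dimensional $\gl_n$-case, where saturation is classical (Gelfand--Tsetlin patterns). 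This is exactly \cite[Thm.~III.8]{Ne98} together with \cite{Ne14}, and it shows in particular that $\rho_\lambda$ is determined up to equivalence by its $\ft_1$-weight set.

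For $\cW\lambda = \Ext(\conv(\cP_\lambda))$, note first that $\conv(\cP_\lambda) = \conv(\cW\lambda)$ because $\cW\lambda \subeq \cP_\lambda \subeq \conv(\cW\lambda)$. Each $w\lambda$ is extreme: there is a linear functional, lying strictly inside the corresponding Weyl chamber, whose supremum over $\cW\lambda$ is attained uniquely at $w\lambda$. That there are no further extreme points is where the boundedness of $\lambda$ re-enters, since $\cW\lambda$ is in general an infinite subset of the infinite-dimensional space $i\ft_{1,\rm alg}^*$ and one must control its weak-$*$ closure; this is carried out in \cite[Thm.~III.16]{Ne04}, which states that $\cW$ acts transitively on the set of extremal weights of such a representation. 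Combining the last two paragraphs yields the classification of equivalence classes: if $\rho_\lambda \sim \rho_\mu$ then the $\ft_1$-weight sets agree, so $\conv(\cW\lambda) = \conv(\cW\mu)$, hence $\cW\lambda = \Ext(\conv(\cW\lambda)) = \Ext(\conv(\cW\mu)) = \cW\mu$ and $\mu \in \cW\lambda$; conversely, if $\mu = w\lambda$, then a positive system in which $\mu$ is dominant exhibits $\rho_\mu$ as the unique bounded irreducible representation with highest weight $\mu$, which is $\rho_\lambda$ read off through that positive system, so $\rho_\lambda \cong \rho_\mu$.

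The main obstacles are the two places where infinite dimensionality really bites: proving boundedness of $\rho_\lambda$ (equivalently, that positivity of the contravariant form and norm-boundedness of the operators persist once $\lambda$ is only assumed bounded, rather than finitely supported), and the convex-geometric assertion $\Ext(\conv(\cW\lambda)) = \cW\lambda$ together with the saturation of the weight set. Both are settled in \cite{Ne98, Ne04} for $\K = \C$ and in \cite{Ne12, Ne14} for $\K = \R, \H$; granting these, the remaining statements are the formal consequences sketched above.
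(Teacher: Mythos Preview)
Your proposal is correct and essentially aligned with the paper: the paper does not give a proof here at all, it simply states the theorem as a direct consequence of \cite[Thms.~I.11, III.8]{Ne98} and \cite{Ne14}, and your sketch is precisely an unpacking of how those cited results combine to give existence, the weight-set formula, and the classification by $\cW$-orbits (with \cite[Thm.~III.16]{Ne04} for the extremal-weight identification). One small remark: in the context of this theorem, $\g_1 = \fu_1(\sH)$ for a \emph{complex} Hilbert space $\sH$ (this is fixed at the start of Subsection~\ref{subsec:traceclassgroups}), so your references to the real and quaternionic cases and to \cite{Ne12} are unnecessary here, though not wrong.
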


\begin{defn} We call $(\rho_\lambda, \cH_\lambda)$ the {\it 
representation with extremal weight $\lambda \in \cP_b$}. 
\end{defn}

For more on convex sets and cones, we refer to Appendix~\ref{app:b}.

\begin{prop} \mlabel{prop:5.14}
For $\lambda \in \cP_b$, let $(\pi_\lambda, \cH_\lambda)$ be the 
corresponding norm continuous unitary representation of $G_1 = \U_1(\cH)$ 
and $I_\lambda \subeq \g_1'$ be the corresponding momentum set. 
Then 
\[ I_\lambda = I_\mu \qquad \Longleftrightarrow \qquad \oline{\conv\cW\lambda} 
= \oline{\conv\cW\mu},\] 
where closures refer to the weak-$*$-topology on
$i \ft_1'\cong \ell^\infty(J,\R)$.  
\end{prop}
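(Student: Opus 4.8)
The statement asserts that the momentum set $I_\lambda$ of the bounded representation $\pi_\lambda$ of $G_1 = \U_1(\sH)$ depends only on the weak-$*$-closed convex hull $\overline{\conv\cW\lambda}$ in $i\ft_1' \cong \ell^\infty(J,\R)$. The natural approach is to identify $I_\lambda$ explicitly in terms of the weight set $\cP_\lambda$ and then recognize that $\overline{\conv(\cP_\lambda)} = \overline{\conv\cW\lambda}$.

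First I would recall that for a bounded unitary representation the momentum set $I_\pi \subeq \g_1'$ is the weak-$*$-closed convex hull of the image of the momentum map. The plan is to show that $I_\lambda$, intersected with the subspace of $\ft_1$-fixed functionals, equals $\overline{\conv(\cP_\lambda)}$ (viewed inside $i\ft_1'$), and that $I_\lambda$ is determined by this intersection because the coadjoint action of $G_1$ (or rather of the unitary group) sweeps out the full momentum set from the $T_1$-weights. Concretely: since $\pi_\lambda$ restricted to $T_1$ is a direct sum of the weight spaces $\cH_\lambda^\mu$, $\mu \in \cP_\lambda$, the $T_1$-component of $I_\lambda$ (the image of $I_\lambda$ under the restriction map $\g_1' \to \ft_1'$) is exactly the weak-$*$-closed convex hull of $\cP_\lambda$; this is a standard fact about momentum sets of weight representations (cf.\ the theory in \cite{Ne08, Ne12, Ne14}). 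Then, using $G_1$-invariance of $I_\lambda$ and the fact that every element of $\g_1 = \fu_1(\sH)$ is conjugate to an element of $\ft_1$ (diagonalizability of skew-hermitian trace-class operators), the support functional $s_{\pi_\lambda}$ on all of $\g_1$ is controlled by its values on $\ft_1$, hence $I_\lambda$ is fully determined by its $T_1$-component $\overline{\conv(\cP_\lambda)}$.

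The remaining step is the identity $\overline{\conv(\cP_\lambda)} = \overline{\conv\cW\lambda}$ in $i\ft_1'$. By Theorem \ref{thm:7.1}, $\cW\lambda = \Ext(\conv(\cP_\lambda))$ and $\cP_\lambda = \conv(\cW\lambda) \cap (\lambda + \Z[\Delta])$, so $\cW\lambda \subeq \cP_\lambda \subeq \conv(\cW\lambda)$; taking weak-$*$-closed convex hulls gives $\overline{\conv\cW\lambda} \subeq \overline{\conv(\cP_\lambda)} \subeq \overline{\conv\cW\lambda}$, hence equality. Combining the two reductions: $I_\lambda = I_\mu$ iff their $T_1$-components agree iff $\overline{\conv(\cP_\lambda)} = \overline{\conv(\cP_\mu)}$ iff $\overline{\conv\cW\lambda} = \overline{\conv\cW\mu}$.

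The main obstacle I anticipate is making the first reduction rigorous in the weak-$*$ topology: one must verify carefully that the momentum set of a bounded weight representation of $G_1$ really is the weak-$*$-closed convex hull of the weight set, and that passing from $\ft_1$ back to all of $\g_1$ via conjugation does not lose information. The subtlety is that $\fu_1(\sH)$ is infinite-dimensional and not every element is literally conjugate to one in $\ft_1$ unless one invokes the spectral theorem for compact skew-hermitian operators; one should argue by density of $\g_{\rm alg}$ and continuity of the support functional $s_{\pi_\lambda}$ (which is finite and $\Ad$-invariant because $\pi_\lambda$ is bounded), exactly as in the proof of Lemma \ref{lem:cg}. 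Once these technical points are in place, the rest is formal.
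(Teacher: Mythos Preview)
Your proposal is correct and follows essentially the same route as the paper: reduce via the support functional and $\Ad(G_1)$-invariance (using density of $\Ad(G_1)\ft_1$ in $\g_1$) to the restriction to $\ft_1$, then identify the $T_1$-momentum set with $\overline{\conv(\cW\lambda)}$ via Theorem~\ref{thm:7.1}. The only cosmetic difference is that the paper invokes Fenchel duality explicitly and goes directly to $\overline{\conv(\cW\lambda)}$, bypassing your intermediate identity $\overline{\conv(\cP_\lambda)} = \overline{\conv(\cW\lambda)}$.
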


\begin{prf} 
  By Fenchel duality, the weak-$*$-closed convex subsets
  $I_\lambda$ and $I_\mu$ are equal if and only if their support
  functionals $s_\lambda$ and $s_\mu$ coincide (\cite[\S 2.1]{Ne20}).
  As the support functionals $s_\lambda$ and $s_\mu$
  are continuous and $\Ad(G_1)$-invariant,
  and $\Ad(G_1)\ft_1$ is dense in $\g_1$, this is equivalent to 
  $s_\lambda\res_{\ft_1} = s_\mu\res_{\ft_1}$.
  These are the support functionals of the momentum sets
  $I_{\pi_\lambda\res_{T_1}}$ and $I_{\pi_\mu\res_{T_1}}$ of the restrictions to $T_1$.
  By Theorem~\ref{thm:7.1} these are the weak-$*$-closed convex
  hulls   $-i\oline{\conv(\cW\lambda)}$ and   $-i\oline{\conv(\cW\mu)}$
  in $\ft_1'$,   respectively.
\end{prf}

As $\rho_\lambda \cong \rho_\mu$ if and only if
$\mu \in \cW\lambda$ (\cite[Thm.~5.12]{Ne04}), the momentum set
$I_\lambda$ is a rather coarse invariant. It is easy to see examples 
of non-equivalent representations with the same momentum
set. A~simple example is obtained for 
\[ J = \Z, \qquad  \lambda = \chi_{\N}\quad \mbox{ and } \quad
  \mu = \chi_{\N_0}\]
(see also Remark~\ref{rem:5.15}).

We recall the following result from  \cite[Thm.~VII.18]{Ne04}.
  For further details we refer to \cite{Ne98, NeSt01}.

\begin{thm} \mlabel{thm:ext-res}
  {\rm(Extension Theorem)}
  Let $\sH = \ell^2(J,\C)$, let  $\lambda : J \to \Z$  be a bounded weight
and $(\rho_\lambda, \cH_\lambda)$ be the corresponding
bounded unitary representation of $G_1 = \U_1(\sH)$. For the derivation
of $\fu_1(\sH)$, defined by
\[ \bd = \diag(i\lambda_j), \quad \mbox{ and } \quad
  \alpha_t(g) = e^{t\bd} g e^{-t\bd}\]
we consider the restricted group
\[ \U_{\rm res}(\sH,\bd) := \{ g \in \U(\sH) \:
  g \bd g^{-1} - \bd \in \fu_2(\sH)\}, \]
endowed with its natural Banach--Lie group structure. 
Then $\rho_\lambda$ extends to a projective representation
of the group $\U_{\rm res}(\sH,\bd)$ on $\cH_\lambda$.
In particular, it restricts to a projective representation
of $\U_2(\sH)$.
\end{thm}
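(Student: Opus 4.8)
\emph{Plan of proof.} The plan is to realize $(\rho_\lambda,\cH_\lambda)$ inside a concrete ``second quantized'' model on which $\U_{\rm res}(\sH,\bd)$ acts projectively, and then to check that $\cH_\lambda$ is an invariant subspace. First I would record the elementary reductions. Since $\lambda$ is bounded it takes finitely many values $m_1 > m_2 > \cdots > m_k$ in $\Z$, which yields an orthogonal decomposition $\sH = \bigoplus_{i=1}^k \sH_i$ into the eigenspaces $\sH_i := \ker(\bd - i m_i \1) = \oline{\Spann}\{ e_j \: \lambda_j = m_i\}$; with respect to this block decomposition one readily checks that
\[ \U_{\rm res}(\sH,\bd) = \{ g \in \U(\sH) \: g_{il} \in B_2(\sH_l,\sH_i) \text{ for all } i \neq l\},\]
i.e.\ it is the group of unitaries whose off-diagonal blocks are Hilbert--Schmidt. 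Because $\bd$ is bounded, for $g = \1 + X \in \U_2(\sH)$ one has $g\bd g^{-1} - \bd = [X,\bd]g^{-1} \in B_2(\sH)$, so $\U_2(\sH) \subseteq \U_{\rm res}(\sH,\bd)$; hence it suffices to construct a continuous projective unitary representation of this block-restricted group on $\cH_\lambda$ extending $\rho_\lambda$.

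Next I would build the model. Associated with the successive ``cuts'' $\sH_1 \oplus \cdots \oplus \sH_l$ ($1 \leq l < k$) of the grading, and with the filling numbers read off from $\lambda$, one forms a finite tensor product $\cF_\bd$ of fermionic Fock spaces (semi-infinite wedge spaces) carrying the usual second quantized action of $\U_1(\sH)$, together with a distinguished cyclic ``vacuum'' vector $v_\lambda \in \cF_\bd$. Comparing $T_1$-weights, where $T_1 = T \cap \U_1(\sH)$, and invoking Theorem~\ref{thm:7.1} together with the classification of extremal weight modules in \cite{Ne98}, one identifies the closed $\U_1(\sH)$-invariant subspace generated by $v_\lambda$ with $(\rho_\lambda,\cH_\lambda)$, so that we may regard $\cH_\lambda \subseteq \cF_\bd$. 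Then I would invoke the restricted Shale--Stinespring construction (cf.\ \cite{PS86} and \cite{Ne04}): the group of unitaries with Hilbert--Schmidt off-diagonal blocks acts on $\cF_\bd$ by a continuous, in fact smooth, \emph{projective} unitary representation extending the genuine action of $\U_1(\sH)$, and on the Lie algebra level the obstruction to linearity is a continuous $2$-cocycle whose restriction to $\fu_2(\sH)$ has the form $\omega_\bb(x,y) = \tr(\bb[x,y])$ for a suitable essentially diagonal operator $\bb$ (cf.\ \cite[Lemma~VII.1]{Ne04}), the familiar ``Schwinger term''.

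The technical heart, and the step I expect to be the main obstacle, is the estimate making the off-diagonal generators act on $\cH_\lambda$: for a weight vector $w$ of $\rho_\lambda$ and for $j\neq l$ with $\lambda_j \neq \lambda_l$, the vector $\dd\rho_\lambda(E_{jl})w$ lies in a single weight space and has norm bounded by a constant depending only on $\|\lambda\|_\infty$, because the weight multiplicities occurring in $\cP_\lambda = \conv(\cW\lambda) \cap (\lambda + \Z[\Delta])$ are controlled by the boundedness of $\lambda$; consequently $\sum_{\lambda_j \neq \lambda_l} x_{jl}\,\dd\rho_\lambda(E_{jl})w$ converges for every $x \in \fu_{\rm res}(\sH,\bd)$. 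This is precisely where boundedness of $\lambda$ is indispensable. An alternative to using the explicit Fock model would be to extend the derived representation $\dd\rho_\lambda$ directly from $\fu_1(\sH)$ to $\fu_{\rm res}(\sH,\bd)$ by means of this estimate, obtaining a skew-symmetric operator Lie algebra representation up to the above central term, and then to apply an integrability theorem for representations of Banach--Lie algebras to pass to the projective group representation; the two routes share the same analytic core.

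Finally I would verify that $\cH_\lambda \subseteq \cF_\bd$ is invariant under all of $\U_{\rm res}(\sH,\bd)$, not merely under $\U_1(\sH)$. Topologically, $\U_{\rm res}(\sH,\bd)$ is generated by its ``Levi part'' $\prod_{i=1}^k \U(\sH_i)$ and by the one-parameter subgroups in the off-diagonal Hilbert--Schmidt directions. The Levi part preserves $\cH_\lambda$ since its action on each factor $\U(\sH_i)$ is a bounded tensor-type representation that agrees with $\rho_\lambda$ on the dense subgroup $\U(\sH_i) \cap \U_1(\sH)$; the off-diagonal subgroups preserve $\cH_\lambda$ by the estimate above together with the cyclicity of $v_\lambda$ in $\cH_\lambda$ (equivalently, the Fock sector containing $v_\lambda$ is already irreducible under $\U_{\rm res}(\sH,\bd)$). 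Restricting the projective representation of $\U_{\rm res}(\sH,\bd)$ on $\cF_\bd$ to the invariant subspace $\cH_\lambda$ then gives the desired projective extension of $\rho_\lambda$, and restricting further to $\U_2(\sH)$ yields the last assertion.
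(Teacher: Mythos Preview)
The paper does not prove this theorem in detail; it recalls it from \cite[Thm.~VII.18]{Ne04} and sketches the construction there. That construction takes a different route from yours: rather than embedding $\cH_\lambda$ into an ambient Fock space, it works intrinsically via the surjective homomorphism
\[ \U_{1,2}(\sH,\bd) \rtimes \U(\sH)^0 \to \U_{\rm res}(\sH,\bd), \quad (g_1,g_2)\mapsto g_1 g_2, \]
where $\U(\sH)^0 = \prod_m \U(\sH_m)$ is the block-diagonal subgroup and $\U_{1,2}(\sH,\bd)$ is the subgroup of $\U_2(\sH)$ whose diagonal blocks lie in $\1 + B_1$. On the factor $\U_{1,2}(\sH,\bd)$, in which $\U_1(\sH)$ is dense, $\rho_\lambda$ extends by continuity to a genuine unitary representation; on the factor $\U(\sH)^0$ one defines the action so that the extremal weight vector $\Omega_\lambda$ is fixed. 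The resulting representation of the semidirect product then descends to $\U_{\rm res}(\sH,\bd)$ only projectively, the obstruction being the character by which the kernel acts.

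Your Fock-space/Shale--Stinespring approach is a legitimate alternative, closer to the Pressley--Segal picture, and it has the virtue of exhibiting the Schwinger cocycle explicitly and making the link to second quantization transparent. The paper's route is more economical in that it never leaves $\cH_\lambda$ and reduces the analytic work to a single density/continuity step; your route requires in addition identifying $\cH_\lambda$ as a sector inside the tensor product of wedge spaces and checking its invariance. One caution on your ``technical heart'' paragraph: the uniform bound on $\|\dd\rho_\lambda(E_{jl})w\|$ is not really a consequence of weight-multiplicity control in $\conv(\cW\lambda)$; in the Fock model it follows instead from the fact that fermionic creation/annihilation operators are genuinely bounded (by~$1$), which is what makes the Hilbert--Schmidt sum over off-diagonal entries converge. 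If you pursue the direct Lie-algebra-extension variant you mention, that is the estimate you actually need.
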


Here the representations $\rho_\lambda$ come from
Theorem~\ref{thm:7.1} and the construction of the 
Lie group structure on $\U_{\rm res}(\sH,d)$ can be found in
\cite{Ne04}. Similar to the discussion of central extensions in
Subsection~\ref{subsec:centext}, it
is based on the observation that we have a surjective
homomorphism
\[ \U_{1,2}(\sH,\bd) \rtimes \U(\sH)^0 \to \U_{\rm res}(\sH,\bd),\quad
  (g_1, g_2) \mapsto g_1 g_2.\]
Here
\[ \U_{1,2}(\sH,\bd)
  = \{  g \in \U_2(\sH) \: (\forall \ell = 1,\ldots,k)\
  \|g_{\ell\ell}-\1\|_1 < \infty \},\]
where $g = (g_{\ell,m})_{1 \leq \ell,m \leq N}$ is the matrix
description of $g$ with respect to the different eigenspaces 
$\sH_m := \ker(\bd - i d_m\1)$, $m = 1,\ldots, N$, of $i\bd$.
Accordingly
\[ \U(\sH)^0 \cong \prod_{m = 1}^N \U(\sH_m).\]

The construction of the
extension of the representation $\rho_\lambda$ of $\U_1(\sH)$ to
$\U_{\rm res}(\sH,\bd)$ in \cite{Ne04} shows that, on the subgroup
$\U_{1,2}(\sH,\bd)$, in which $\U_1(\cH)$ is dense,
it extends by continuity to a unitary representation.
Moreover, the $\lambda$-weight space is one-dimensional, of the form
$\C \Omega_\lambda$, and $\Omega_\lambda$ is fixed by the subgroup
$\U(\sH)^0$.

Theorem~\ref{lem:7.7} below shows the extension of
$\rho_\lambda$ to a projective representation of
$\U_{\rm res}(\sH,\bd)$ is maximal. 

We also note that a Lie algebra cocycle corresponding to the
central extension $\fu_2(\cH)^\sharp$, defined by this projective
representation, can be evaluated with the methods explained in
Example~\ref{ex:4.13}. This leads to
\begin{equation}
  \label{eq:om-lam}
  \omega(x,y) = i\lambda([x,y]) = \omega_\bd(x,y),
\end{equation}
if we consider $\lambda \in \ell^\infty(J,\R)$ as a
continuous linear functional on $\fu_1(\sH)$. 

\subsection{Covariance of bounded representations of
  the trace class ideal}
\mlabel{subsec:5.4} 

Let $\lambda \: J \to \R$ be bounded and integral, i.e., 
$\lambda_j - \lambda_k \in \Z$ for $j,k\in J$, and 
$(\rho_\lambda, \cH_\lambda)$ be the corresponding bounded 
unitary representation of the simply connected covering group 
$\tilde\U_1(\sH)$ (Theorem~\ref{thm:7.1} and \cite{Ne98}).  
In this subsection we determine the covariance group 
\[ \U(\sH)_\lambda := \{ g \in \U(\sH) \: \rho_\lambda \circ \alpha_g \sim 
\rho_\lambda\},\] 
where $\U(\sH)$ acts by the automorphisms $\alpha_g(h) := ghg^{-1}$ 
on $\U_1(\sH)$, and also  by canonical lifts to~$\tilde\U_1(\sH)$. 

\begin{ex} (a) If $\lambda$ is constant $c$, then 
$\rho_\lambda(g) = \det^\lambda(g)$, as a representation of $\tilde\U_1(\sH)$ 
if $c\not\in \Z$. This one-dimensional representation is $\U(\sH)$-invariant, 
so that $\U(\sH)_\lambda = \U(\sH)$ in this case. 

\nin (b) If $\lambda$ is finitely supported, then $\rho_\lambda$ extends to a 
bounded unitary representation of $\U(\sH)$, so that we also obtain in this 
case $\U(\sH)_\lambda = \U(\sH)$ (\cite{Ne98, Ne14b}). 

\nin (c) If there exists a constant $c$ such that $\lambda - c$ is finitely supported, 
then we combine (a) and (b) to see that $\U(\sH)_\lambda = \U(\sH)$ also holds 
in this case. 
\end{ex}

\begin{rem} \mlabel{rem:5.15}
  Let $(e_j)_{j \in J}$ be an orthonormal basis of $\sH$ and 
  $g \in \U(\sH)$ be a permutation of the basis elements. We write
  $S_J \subeq \U(\sH)$ for the subgroup of all unitary operators
  mapping the orthonormal basis onto itself and identify it
with the group of all bijections 
    of the set $J$; we write $S_{(J)} \trile S_J$ for the normal subgroup
    of those permutations moving only finitely many elements.

Since $\rho_\lambda$ is uniquely determined by the weight set 
$\cP_\lambda \subeq \hat T_1$ for the diagonal torus $T_1\subeq \U_1(\sH)$ 
(\cite{Ne98}), we have $g \in \U(\sH)_\lambda$ if and only if 
$g\lambda \in \cW\lambda = \Ext(\conv(\cP_\lambda))$
(note that $\cW \cong S_{(J)}$).
This is always the case if $\supp(\lambda)$ is finite or if
there exists a constant $c \in \R$, for which $\lambda - c$ has finite support.
If this is not the case, then $\lambda$ takes two different values infinitely
often and there exists a $g \in S_J$ with
$g.\lambda\not\in \cW\lambda$; see also Remark~\ref{rem:7.6} below. 
\end{rem}

\begin{rem} \mlabel{rem:7.6}
  A necessary condition for $g \in \U(\sH)_\lambda$ is that
  the momentum set $I_{\rho_\lambda}$ satisfies 
  $\alpha_g^*I_{\rho_\lambda} = I_{\rho_\lambda}$ (cf.\ Proposition~\ref{prop:5.14}),
  resp., that 
  \begin{equation}
    \label{eq:slambda-cov}
    s_{\lambda} \circ \alpha_g =   s_{\lambda}
  \end{equation}
holds for the support functional $s_\lambda := s_{\rho_\lambda} \:  \fu_1(\sH) \to \R$. 
The following example shows that \eqref{eq:slambda-cov} does not
imply $g \in \U(\sH)_\lambda$. 

If $\lambda = \chi_M$ is the characteristic function of a subset 
$M \subeq J$ which is infinite and co-infinite, then any
$g \in S_J$ satisfies $g\lambda = \chi_{g.M}$ 
and 
\[ \cW\chi_M = \{ \chi_E \: |E \setminus M| = |M \setminus E| < \infty \} \] 
imply that $g \in \U(\sH)_\lambda$ is equivalent to 
\[ |gM \setminus M| = |M \setminus gM| < \infty.\]

We now calculate the support functional. 
For $x \in i\ft_1\cong \ell^1(J,\R)$, we have 
\[  s_\lambda(-ix) = \sup \la \cW\chi_M, x \ra 
 = \sup \Big\{ \sum_{j \in E} x_j \: |E \setminus M| = |M \setminus E| < \infty \Big\}.\] 
If $\supp(x)$ is finite and $x = x^+ - x^-$ with 
$x^\pm := (\pm x) \vee 0$ is the canonical 
decomposition, then 
\[  s_\lambda(-ix) = \sum_{j \in J} x^+_j
  \quad \mbox{ for } \quad x \in i \ft_1 \cong \ell^1(J,\R).\]
By continuity of the map $x \mapsto x^+$ and of $s_\lambda$, it follows that 
$s_\lambda(-ix) = \tr(x^+)$ for $x \in i\ft,$
and by conjugation invariance of $s_\lambda$, we further get
\begin{equation}
  \label{eq:slambda}
  s_\lambda(-ix) = \tr(x^+) \quad \mbox{ for } \quad x \in \Herm_1(\sH).
  \end{equation}

This is true for every infinite $M$ which is also co-infinite. 
Since there are many such sets which are not conjugate under the Weyl group 
$\cW \cong S_{(J)}$, it follows that the momentum set $I_{\rho_\lambda}$ 
does not determine the representation up to equivalence.
Concretely, the weak-$*$-closure of $\conv(\cW\chi_M)$
does not distinguish subsets $M_1$ and $M_2$ of the same cardinality
if the complements $M_1^c$ and $M_2^c$ also have the same cardinality, i.e.,
if $\chi_{M_2} \in S_J.\chi_{M_1}$. 
Typical examples of such sets are 
$M_1 = \N \subeq J = \Z$ and $M_2 = \N_0$. 
\end{rem}

\begin{thm}  \mlabel{lem:7.7}
  Suppose that $\lambda$ is bounded and
  define a diagonal operator $\bd \in \fu(\sH)$ by 
$\bd e_j = i \lambda_j e_j$ for~$j \in J$. 
Then
\[ \U(\sH)_\lambda = 
  \U_{\rm res}(\sH, \bd) := \{ g \in \U(\sH) \: \|g\bd g^{-1}-\bd\|_2 < \infty \}.\] 
\end{thm}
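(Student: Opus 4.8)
The plan is to prove the two inclusions $\U_{\rm res}(\sH,\bd)\subseteq\U(\sH)_\lambda$ and $\U(\sH)_\lambda\subseteq\U_{\rm res}(\sH,\bd)$ separately, using throughout that boundedness of $\lambda$ makes $\bd$ a bounded operator with only finitely many distinct eigenvalues $d_1,\dots,d_N$ and eigenspaces $\sH_m$, and that $\U_1(\sH)$ is dense in $\U_2(\sH)$.

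For $\U_{\rm res}(\sH,\bd)\subseteq\U(\sH)_\lambda$ I would invoke the Extension Theorem~\ref{thm:ext-res}, which provides a projective unitary representation $\hat\rho_\lambda$ of the \emph{whole} group $\U_{\rm res}(\sH,\bd)$ restricting to $\rho_\lambda$ on $\U_1(\sH)$. For $g\in\U_{\rm res}(\sH,\bd)$ the automorphism $\alpha_g$ is inner in $\U_{\rm res}(\sH,\bd)$, so $\hat\rho_\lambda\circ\alpha_g$ and $\hat\rho_\lambda$ are projectively equivalent via $\hat\rho_\lambda(g)$; restricting to $\U_1(\sH)$ this makes the genuine representations $\rho_\lambda\circ\alpha_g$ and $\rho_\lambda$ differ by a continuous character $\mu_g\colon\U_1(\sH)\to\T$, i.e.\ $W\rho_\lambda(h)W^{-1}=\mu_g(h)\,\rho_\lambda(\alpha_g(h))$ for a unitary lift $W$ of $\hat\rho_\lambda(g)$. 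Since $\Hom_{\mathrm{cont}}(\U_1(\sH),\T)=\{\det^{k}:k\in\Z\}$ (the abelianization of $\fu_1(\sH)$ is $\R$ via $x\mapsto\tfrac1i\tr x$, and $\pi_1(\U_1(\sH))\cong\Z$), one gets $\mu_g=\det^{k(g)}$, and a short cocycle computation shows $g\mapsto k(g)$ is a continuous homomorphism $\U_{\rm res}(\sH,\bd)\to\Z$. The point to be careful about here is that $\U_{\rm res}(\sH,\bd)$ need \emph{not} be connected, so $k\equiv 0$ cannot simply be read off from connectedness; instead, taking $W=\rho_\lambda(h_0)$ for $g=h_0\in\U_1(\sH)$ forces $k$ to vanish on $\U_1(\sH)$, hence on $\U_{1,2}(\sH,\bd)$ by density, while $k$ vanishes on the connected group $\U(\sH)^0=\prod_{m}\U(\sH_m)$ because $\Z$ is discrete, so $k\equiv 0$ follows from the surjection $\U_{1,2}(\sH,\bd)\rtimes\U(\sH)^0\onto\U_{\rm res}(\sH,\bd)$. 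Then $\mu_g=1$, i.e.\ $\rho_\lambda\circ\alpha_g\sim\rho_\lambda$, so $g\in\U(\sH)_\lambda$.

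For the converse inclusion --- the substantive one, whose necessity is foreshadowed in Remark~\ref{rem:7.6} --- I would pass to Lie algebra cocycles on $\fu_2(\sH)$. Since $\bd$ is bounded, the restriction of $\hat\rho_\lambda$ to $\U_2(\sH)\subseteq\U_{\rm res}(\sH,\bd)$ is a continuous projective representation with Lie algebra cocycle $\omega_\bd(x,y)=\tr(\bd[x,y])$ (cf.\ \eqref{eq:om-lam}). Given $g\in\U(\sH)_\lambda$, the representation $\hat\rho_\lambda|_{\U_2(\sH)}\circ\alpha_g$ is again a continuous projective representation of $\U_2(\sH)$, now with cocycle $(x,y)\mapsto\omega_\bd(\Ad_g x,\Ad_g y)=\omega_{g^{-1}\bd g}(x,y)$, and it equals $\rho_\lambda\circ\alpha_g\sim\rho_\lambda$ on $\U_1(\sH)$. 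As $\U_1(\sH)$ is dense in $\U_2(\sH)$ and both projective representations are continuous, the conjugating element extends from $\U_1(\sH)$ to all of $\U_2(\sH)$, so these two projective representations of $\U_2(\sH)$ are equivalent and their cocycles cohomologous; by linearity of $\bd\mapsto\omega_\bd$ this says $\omega_{g^{-1}\bd g-\bd}$ is a coboundary, whence $g^{-1}\bd g-\bd=S+ci\1$ with $S\in\fu_2(\sH)$ and $c\in\R$ by the description of $H^{2}(\fu_2(\sH),\R)$ recalled in Subsection~\ref{subsec:centext}. Finally, with $A:=\tfrac1i\bd$ one has $g^{-1}Ag=A+c\1+\tfrac1i S$ with $\tfrac1i S$ Hilbert--Schmidt, hence compact, so stability of the essential spectrum under compact perturbations gives $\sigma_{\mathrm{ess}}(A)=\sigma_{\mathrm{ess}}(g^{-1}Ag)=\sigma_{\mathrm{ess}}(A+c\1)=\sigma_{\mathrm{ess}}(A)+c$; as $\sigma_{\mathrm{ess}}(A)=\{d_m:\dim\sH_m=\infty\}$ is a nonempty finite subset of $\R$ (some $\sH_m$ is infinite-dimensional because $\sH$ is infinite-dimensional and $N<\infty$), this forces $c=0$. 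Hence $g^{-1}\bd g-\bd=S\in\fu_2(\sH)$, i.e.\ $\|g\bd g^{-1}-\bd\|_2=\|g^{-1}\bd g-\bd\|_2<\infty$, so $g\in\U_{\rm res}(\sH,\bd)$.

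The hardest part is this second inclusion: transporting the unitary equivalence from the dense subgroup $\U_1(\sH)$ to an identity of cohomology classes of the $\fu_2(\sH)$-cocycles, and then removing the residual scalar $\R i\1$ via the essential spectrum; the disconnectedness of $\U_{\rm res}(\sH,\bd)$ is the main nuisance in the first inclusion.
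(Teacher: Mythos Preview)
Your argument is correct and follows the paper's strategy closely for the substantive inclusion $\U(\sH)_\lambda\subseteq\U_{\rm res}(\sH,\bd)$: both routes pass from the assumed equivalence $\rho_\lambda\circ\alpha_g\sim\rho_\lambda$ to a cohomological identity for the $\fu_2(\sH)$-cocycle $\omega_\bd$, landing on $g^{-1}\bd g-\bd\in\R i\1+\fu_2(\sH)$. The only genuine divergence is in how the residual scalar is killed. The paper argues via the $1$-cocycle $\beta(g)=g\bd g^{-1}-\bd$, observes that its image $\overline\beta$ in the Calkin quotient $\fu(\sH)/\fu_\infty(\sH)$ satisfies $\overline\beta(g^n)=n\mu[i\1]$ (since $[i\1]$ is $\Ad$-fixed), and uses boundedness of $\overline\beta$ (from boundedness of $\bd$) to force $\mu=0$. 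Your essential-spectrum argument is an equally clean alternative: Weyl's theorem plus unitary invariance give $\sigma_{\rm ess}(A)=\sigma_{\rm ess}(A)+c$ for a nonempty bounded set, hence $c=0$. Both methods exploit only that $\bd$ is bounded; yours is slightly more spectral-theoretic, the paper's slightly more cocycle-theoretic, but neither buys anything the other does not.

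For the inclusion $\U_{\rm res}(\sH,\bd)\subseteq\U(\sH)_\lambda$ you are more scrupulous than the paper, which simply cites the Extension Theorem together with the structural remarks following it (the extension being genuine on $\U_{1,2}(\sH,\bd)$ and $\Omega_\lambda$ being $\U(\sH)^0$-fixed). Your explicit handling of the character ambiguity $\det^{k(g)}$ and its vanishing via the surjection $\U_{1,2}(\sH,\bd)\rtimes\U(\sH)^0\onto\U_{\rm res}(\sH,\bd)$ makes transparent a point the paper leaves to the reader; the disconnectedness issue you flag is real and your resolution of it is correct.
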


\begin{prf} We know already from
  Theorem~\ref{thm:ext-res}
  that ``$\supeq$'' holds. So it remains to verify
  ``$\subeq$''.
  Recall from Theorem~\ref{thm:ext-res} 
  that $\rho_\lambda$ extends to a smooth projective 
unitary representation 
$\oline\rho_\lambda \: \U_2(\sH) \to \PU(\cH_\lambda)$ 
(see \cite{JN15} for the concept of a smooth projective unitary representation). 

If $g \in \U(\sH)_\lambda$, then the density of 
$\U_1(\sH)$ in $\U_2(\sH)$ implies that 
the projective unitary representation $\oline\rho_\lambda \circ \alpha_g$ 
is equivalent to $\oline\rho_\lambda$, and this implies in 
particular that the corresponding central extensions 
are equivalent. On the Lie algebra level a cocycle 
of the central extension is given by 
\[ \omega_\lambda(x,y) = i\lambda([x,y]) = \tr(\bd [x,y])
\quad \mbox{ for }  \quad x,y \in \fu_2(\sH) \] 
(see \eqref{eq:om-lam} in Subsection~\ref{subsec:traceclassgroups}). We thus obtain the necessary condition 
\[ g.[\omega_\lambda] = [\omega_\lambda] \quad \mbox{ in } \quad H^2(\fu_2(\sH),\R).\] 
From 
\[ (g.\omega_\lambda)(x,y) 
= \omega_\lambda(\alpha_g^{-1}x, \alpha_g^{-1}y)
= \tr(\bd \alpha_g^{-1}[x,y])
= \tr(\alpha_g(\bd)[x,y])\] 
we derive that the operator 
$g\bd g^{-1} - \bd$ has to represent a trivial cocycle, which means that 
\[ g\bd g^{-1} - \bd \in \R i\1 + \fu_2(\sH)\] 
(\cite[Ex.~2.18]{Ne02c} and Subsection~\ref{subsec:centext}). 
We consider the $1$-cocycle
\[ \beta \: \U(\sH) \to \fu(\sH), \quad
\beta(g) := g \bd g^{-1} - \bd.\] Suppose that 
$\beta(g) - \mu i\1 \in \fu_2(\sH).$ 
Let $\oline\beta \: \U(\sH) \to \fu(\sH)/\fu_\infty(\sH)$ be the corresponding 
cocycle modulo the subspace $\fu_\infty(\sH)$ of compact skew-hermitian
operators. 
Then $\oline\beta(g) = \mu [i\1]$, 
and since $[i\1]$ is fixed, we obtain 
\[ \oline\beta(g^n) = n\mu [i\1] \quad \mbox{ for } \quad n \in \Z.\] 
As $\oline\beta$ is bounded, it follows that $\mu = 0$, hence that 
$\beta(g)$ is Hilbert--Schmidt, 
which is equivalent to 
$g\bd - \bd g = (g\bd g^{-1} - \bd)g \in B_2(\sH)$, i.e., to 
$g \in \U_{\rm res}(\sH,\bd)$.
\end{prf} 

Note that the group $\U_{\rm res}(\sH, \bd)$ does not change in the theorem above, if we replace $\bd$ by $\bd + \ad x$ for some $x \in \fu_2(\sH)$
(see Corollary~\ref{cor:4.6}).

\begin{rem} Assume that $\bd$ is diagonalizable with finitely
  many eigenvalues and eigenspaces
  $\sH_j := \ker(\bd - i d_j \1)$, $j = 1,\ldots, r$.
Accordingly, we write $g \in \U(\sH)$ as a matrix 
$g = (g_{ij})_{1 \leq i,j \leq r}$ with 
$g_{ij} \in B(\sH_j, \sH_i)$. The relation 
$g\bd - \bd g\in B_2(\sH)$ means that, for 
$i\not=j$, the corresponding matrix entry 
$(d_j - d_i)g_{ij}$ is Hilbert--Schmidt, and this is equivalent to 
$g_{ij}$ being Hilbert--Schmidt. In this sense we have 
\[ \U_{\rm res}(\sH,\bd) = \{  g = (g_{ij}) \: (\forall i \not=j)\, g_{ij} \in 
B_2(\sH_j, \sH_i)\}.\] 
\end{rem}

\section{Perspectives} 
\mlabel{sec:6}

Although we formulated several challenging problems in the
main text, we briefly discuss in this short final section
some directions of research that target important tools to
analyze unitary representations of infinite-dimensional Lie groups. 

\subsection{Representations of abelian Hilbert--Lie groups}

\begin{defn} A {\it toroidal Hilbert--Lie group} is a group of the form
$T = \ft/\Gamma$, where $\Gamma$ is a discrete subgroup of 
the real Hilbert space $\ft$, considered as an abelian Hilbert--Lie algebra
(\cite{Ba91}), and $\Gamma$ spans a dense subspace of~$\ft$.
Assigning to $\alpha \in \ft'$ with $\alpha(\Gamma) \subeq \Z$
the character $\chi_\alpha(x + \Gamma) := e^{2\pi i \alpha(x)}$, we then obtain
an isomorphism of groups
\[  \hat T \cong \{ \alpha \in \ft' \: \alpha(\Gamma) \subeq \Z \}
  \into \Hom(\Gamma,\Z).\] 
\end{defn}

Any toroidal Lie group contains the torus groups 
$T_F := (\Spann_\R \Gamma_F)/\Gamma_F$, $\Gamma_F \subeq \Gamma$
finitely generated, and the directed union $T_0$ of these subgroups
is dense in $T$. Assume that $\Gamma$ is countable.
Then $T_0$ carries an intrinsic Lie group structure whose underlying
topology is the direct limit topology (\cite{Gl05}). We
identify the character group $\hat T$ with a subgroup  of
$\hat{T_0} \cong \Hom(\Gamma,\Z)$, consisting of
those characters of   $T_0$ which extend continuously to $T$.

An important example is the group  
$T := \ell^2/\Z^{(\N)}$ with
the dense subgroup $T_0 := \R^{(\N)}/\Z^{(\N)}$
and $\hat{T_0} \cong \Z^\N \supeq \hat T \cong \Z^{(\N)}$.
That bounded representations of $T$ are direct sums of eigenspaces follows
from Theorem~\ref{thm:z-discrete} because $\hat T$ is countable. 

It is an interesting problem to understand the
continuous unitary representations of $T$ in terms of their
restrictions to $T_0$. This problem immediately reduces to cyclic
representations, which are given by positive definite functions. 
As $T_0$ is a nuclear group, positive definite functions are Fourier
transforms of measures $\mu$ on $\hat{T_0}$ (\cite{Ba91}).

\begin{prob}  Characterize those Borel measures $\mu$ on $\hat{T_0}
  \cong \Hom(\Gamma,\Z)$ whose
  Fourier transform $\hat\mu$ extends to a continuous function on $T$. 
  This is equivalent to the function
  $\hat\mu$ to extend from $\Spann \Gamma$ to a continuous
  function on $\ft$.
  As $\hat T$ is a countable subset of $\hat{T_0}$ and $\mu$ is atomic on
  $\hat{T}$, it suffices to consider measures on the complement
  $\hat{T_0} \setminus \hat T$. 
\end{prob}

We expect that this problem can be addressed by using the
Sazonov topology on Hilbert spaces; see \cite{Ya85} and \cite{Xia72}
for tools on measures on infinite-dimensional spaces. 

\subsection{Representations of non-Lorentzian double extensions}

The construction in Section~\ref{subsec:5.4}
produces in particular projective semibounded
representations of groups  $G \rtimes_\alpha \R$, $G = \U_2(\sH)$, 
where $\alpha$ is defined by a bounded diagonal operator $\bd$ on $\sH$ 
with integral eigenvalues and the cocycle $\omega$ of the corresponding
central Lie algebra extension $\g^\sharp = \R \oplus_\omega \g$
is also given by $\bd$.

In general, one may consider central extensions whose cocycle
is given by a bounded operator $\bd_0$, and $\alpha$ is generated
by a, possibly unbounded diagonal operator $\bd$. It would be interesting
to understand the precise relations on the pair $(\bd_0, \bd)$
that characterizes the existence of projective semibounded
representations of $G \rtimes_\alpha \R$.

For extremal weight representations $(\pi_\lambda,\cH_\lambda)$
as in Theorem~\ref{thm:ext-res}, 
this condition only depends on the class of $\bd$ modulo
$\ell^1$ if the weight $\lambda$ is bounded; see \cite{MN16}.
If the spectrum of $\bd$ is bounded from below,
a  suitable diagonal modification $\tilde\bd$ of $\bd$
should have a non-zero minimal eigenspace.

\subsection{Restrictions imposed by spectral conditions}

Suppose that $(U,\cH)$ is a unitary representation
of $G = \U_2(\sH)$ and $\alpha_t(g) = e^{t\bd} g e^{-t\bd}$  is a
one-parameter group of automorphisms for which $U$ extends
to a unitary representation $\hat U$ of $G \rtimes_\alpha \R$ for which
the one-parameter group $V_t = U(e,t)$ has semibounded spectrum.
In view of Corollary~\ref{cor:5.8}, we only expect this to happen
if $i\bd$ is semibounded. 

\begin{prob} Determine the restrictions on the representation
  $(U,\cH)$ of $G = \U_2(\sH)$ imposed by the semiboundedness of~$V$.
\end{prob}

In view of Problem~\ref{prob:4.22},
we do not expect any conclusive
restrictions without  semiboundedness of~$V$.

\newpage

\appendix

\section{Invariant convex subsets of Hilbert--Lie algebras}

\begin{lem} \mlabel{lem:1.3}  {\rm(\cite[Lemma~A.4]{Ne20})} 
Let $(V,\|\cdot\|)$ be a normed space 
and $U \subeq V$ be a proper open convex subset with 
complement $U^c = V \setminus U$. 
Then the distance function 
\[ d_U(x) := \dist(x,U^c) 
= \inf \{ \|x-y\| \: y \in U^c \}\]  
is continuous,  concave and 
$d_U^{-1} \: U \to \R$ 
is a continuous convex function on $U$ with 
\[ \lim_{x_n \to x \in \partial U} d_U^{-1}(x_n) = \infty.\] 
In particular, all the sets 
\[ U_c := \{ x \in U \: d_U(x) \geq c \}, \quad c > 0, \] 
are closed subsets of $V$. 
\end{lem}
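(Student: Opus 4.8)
The plan is to verify the four assertions in order, each by an elementary argument about distance functions and convexity. Throughout I would use that, since $U$ is proper, $U^c \neq \eset$, so $d_U$ is finite-valued, and that, since $U$ is open (hence $U^c$ closed), $d_U(x) > 0$ if and only if $x \in U$.

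For continuity, I would first observe that $d_U$ is $1$-Lipschitz on all of $V$: for $x,y \in V$ and $z \in U^c$ the triangle inequality gives $d_U(x) \le \|x-z\| \le \|x-y\| + \|y-z\|$, and taking the infimum over $z \in U^c$ yields $d_U(x) \le \|x-y\| + d_U(y)$; interchanging $x$ and $y$ gives $|d_U(x)-d_U(y)| \le \|x-y\|$. For concavity on $U$ I would use the characterization $d_U(x) = \sup\{ r>0 : x + B_r(0) \subseteq U\}$ for $x \in U$, valid because $U^c$ is closed. Given $x_0,x_1 \in U$ and $r_i < d_U(x_i)$, so $x_i + B_{r_i}(0) \subseteq U$, put $x_t := (1-t)x_0 + tx_1$ and $r_t := (1-t)r_0 + tr_1$; for $\|w\| < r_t$ the points $z_i := x_i + (r_i/r_t)w$ satisfy $\|z_i - x_i\| < r_i$, hence lie in $U$, and $(1-t)z_0 + tz_1 = x_t + w$, so $x_t + w \in U$ by convexity. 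Thus $d_U(x_t) \ge r_t$, and letting $r_i \uparrow d_U(x_i)$ gives $d_U(x_t) \ge (1-t)d_U(x_0) + t\,d_U(x_1)$.

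Next, $d_U^{-1}$ is continuous on $U$ because $d_U$ is continuous and strictly positive there, and it is convex because $s \mapsto 1/s$ is convex and nonincreasing on $(0,\infty)$ and $d_U$ is concave on $U$ — the composition of a convex nonincreasing function with a concave function is convex. For the boundary blow-up, if $x \in \partial U$ then $x \notin U$ (as $U$ is open), so $d_U(x) = 0$; for any sequence $x_n \to x$ with $x_n \in U$, continuity of $d_U$ gives $d_U(x_n) \to 0$ with $d_U(x_n) > 0$, whence $d_U^{-1}(x_n) \to \infty$. Finally, $d_U(x) \ge c > 0$ forces $x \in U$, so $U_c = \{x \in V : d_U(x) \ge c\}$ is the preimage of $[c,\infty)$ under the continuous map $d_U$, hence closed in $V$.

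I expect no serious obstacle: the only step needing a moment's care is the ball characterization and the interpolation of inscribed balls underlying concavity on $U$; the Lipschitz estimate, the composition argument for convexity of $d_U^{-1}$, the boundary behaviour, and the closedness of $U_c$ are all routine consequences of it.
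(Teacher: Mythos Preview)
Your proof is correct and complete. Note that the paper does not actually prove this lemma: it is stated with a citation to \cite[Lemma~A.4]{Ne20} and no argument is given in the text. Your elementary verification --- the $1$-Lipschitz estimate for continuity, the inscribed-ball interpolation for concavity, the composition argument for convexity of $d_U^{-1}$, and the continuity-based closedness of $U_c$ --- supplies exactly what the paper defers to the reference, and each step is sound.
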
 

\begin{prop} \mlabel{prop:2.12} If $G$ is a connected Hilbert--Lie group, 
then each non-empty open $\Ad(G)$-invariant 
convex subset $U \subeq \g$ intersects the center. 
\end{prop}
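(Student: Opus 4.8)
The plan is to exploit the orthogonal decomposition $\g = \z(\g) \oplus \hat\bigoplus_{j\in J}\g_j$ from Schue's Decomposition Theorem~\ref{thm:1.3} together with the fact (Lemma~\ref{lem:pres-ideals}) that $\Ad(G)$, being connected and acting by isometric automorphisms with continuous orbit maps, preserves $\z(\g)$ and each simple ideal $\g_j$. Let $P \: \g \to \z(\g)$ be the orthogonal projection onto the center; since $\z(\g)$ and each $\g_j$ are $\Ad(G)$-invariant and the decomposition is orthogonal, $P$ commutes with $\Ad(G)$. The claim is that $P(U) \subeq U$, which would immediately give $U \cap \z(\g) \neq \eset$ because $U \neq \eset$. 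To see this it suffices to show that for each $x \in U$ the point $P(x)$ is a limit of convex combinations of $\Ad(G)$-translates of $x$; since $U$ is open, convex and $\Ad(G)$-invariant, any such limit that happens to lie in the open set $U$ (and we will arrange $P(x)$ itself to be interior) belongs to $U$.

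First I would treat the semisimple part, i.e.\ reduce to showing that for $x \in U$ the projection $x_j^{\|} := $ (component of $x$ orthogonal to $\g_j$ inside $\z(\g)\oplus\hat\bigoplus_{k\neq j}\g_k$) can be ``reached''. Concretely: fix a simple ideal $\g_j$ and a maximal abelian $\ft_j \subeq \g_j$. By the root-space description (Examples~\ref{ex:d.1a}--\ref{ex:d.1b}), the Weyl group $\cW_j$ of $\g_j$ acts on $\ft_j$, and in every one of the three types $-\id$ lies in $\cW_j$ (for $A_J$ this is not literally true, but the longest element sends a weight to its ``reverse'', and more to the point: the averaging argument below only needs that the $\cW_j$-orbit sums of roots vanish, equivalently that the barycenter of any $\cW_j$-orbit in $\ft_j$ is $0$ — this holds because $\cW_j$ acts irreducibly on $\ft_{j,\mathrm{alg}}$ with no nonzero fixed vectors). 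Thus for $h_j$ the $\ft_j$-component of $x$, the average of the finitely many $\cW_j$-translates of $x$ (realized by inner automorphisms from $G$) kills $h_j$. Iterating over a finite set $F \subeq J$ and passing to the closure, I obtain points of $\oline U$ all of whose $\g_j$-components, $j \in F$, vanish; but one must be careful that $x$ need not be $\ft$-semisimple. So the cleaner route: use that the orbit $\Ad(G).x$ has, in each $\g_j$, a ``convex hull'' whose intersection with $\ft_j$ is $\cW_j$-invariant (since every element of $\g_j$ is $\Ad(G)$-conjugate to one in $\ft_j$, as used in Lemma~\ref{lem:cg}), hence contains the $\cW_j$-barycenter $0$; therefore $0 \in \oline{\conv(\Ad(G).x)}\cap\g_j$ componentwise, and for any finite $F$ the point $y_F$ with $\g_F$-component $0$ and the same $\z(\g)$- and $\g_{J\setminus F}$-components as $x$ lies in $\oline{\conv(\Ad(G).U)} = \oline U$.

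Next, to pass from $\oline U$ back to $U$: since $U$ is open and convex, $U$ is dense in $\oline U$ and, more usefully, for any $y \in \oline U$ and any $x' \in U$ the half-open segment $(x', y]$... actually the standard fact is $[x',y) \subeq U$ for $x' \in U$, $y \in \oline U$. Taking $x' := x \in U$ and $y := y_F$ just constructed, the whole segment from $x$ to $y_F$ except possibly $y_F$ lies in $U$; its $\z(\g)$-component is constant equal to $P(x)$, and as $F \uparrow J$ the $y_F$ converge (in norm, using that $x \in \g$ so its ideal-components are square-summable) to $y_\infty$ with $\g_j$-component $0$ for all $j$, i.e.\ $y_\infty = P(x) \in \z(\g)$. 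Since $U$ is open and the points $\tfrac12 x + \tfrac12 y_F \in U$ converge to $\tfrac12 x + \tfrac12 P(x)$, and then iterating (or directly: the midpoints converge to $P(x) + \tfrac12(x - P(x))\cdot 0$... ) — more simply, choose $F$ large enough that $\|y_F - y_\infty\| < d_U(z)$ for a fixed $z \in U$ on the segment, using Lemma~\ref{lem:1.3}; then $y_\infty \in U$. Hence $U \cap \z(\g) \neq \eset$.

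The main obstacle I anticipate is the interplay between \emph{openness} and the two limiting operations (averaging over Weyl groups of infinitely many ideals, and the infinite product structure): the Weyl-group averages lie a priori only in the closure $\oline U$, and one must use the quantitative boundary-distance estimate of Lemma~\ref{lem:1.3} — together with the uniform lower bound $\inf_j c_{\g_j} > 0$ that holds for a genuine Hilbert--Lie algebra — to push a tail-truncated average back into the open set $U$ itself. Getting the norm estimates on $\|y_F - y_\infty\|$ right (which is just square-summability of the ideal-components of $x\in\g$) and confirming that $-\id$, or the vanishing of orbit barycenters, is available in each simple ideal (including the finite-dimensional compact ones and the $A_J$-type ones) are the points requiring genuine care.
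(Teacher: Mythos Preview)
The paper's proof is a one-step application of the Bruhat--Tits Fixed Point Theorem: for $c > 0$ small enough, $U_c := \{x \in U : d_U(x) \geq c\}$ is a nonempty closed convex $\Ad(G)$-invariant subset (Lemma~\ref{lem:1.3}); since $\Ad(G)$ acts on it isometrically with bounded orbits (each orbit lies on a sphere in $\g$), Bruhat--Tits yields a fixed point, which must lie in $\z(\g)$. Your constructive route via the ideal decomposition and the projection $P$ is different in spirit, but it has a genuine gap at the crucial step of showing $0 \in \oline{\conv(\Ad(G_j).x_j)}$ for an infinite-dimensional simple ideal~$\g_j$.

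Concretely: the Weyl group $\cW_j$ is infinite (e.g.\ $S_{(J)}$ for type $A_J$ with $|J|=\infty$), so neither ``the average of the finitely many $\cW_j$-translates'' nor ``the $\cW_j$-barycenter'' makes sense as written. And Lemma~\ref{lem:cg} does \emph{not} assert that every $x_j \in \g_j$ is $\Ad(G_j)$-conjugate into $\ft_j$; it only treats the dense subspace $\g_{\rm alg}$. A skew-hermitian Hilbert--Schmidt operator is diagonalized by some $u \in \U(\sH)$, but in general $u \notin \U_2(\sH)$. The gap is repairable --- for instance, approximate $x_j$ by a finite-rank $x' \in \g_{j,\rm alg}$, average over the relevant compact $\U(n) \subeq \U_2(\sH)$ to get $0 = \sum t_i \Ad(g_i).x'$, and note that the same convex combination of the $\Ad(g_i).x_j$ has norm $\leq \|x_j - x'\|$; or, more directly, observe that the unique point of minimal norm in the bounded closed convex $\Ad(G_j)$-invariant set $\oline{\conv(\Ad(G_j).x_j)}$ is $\Ad(G_j)$-fixed, hence lies in $\z(\g_j) = \{0\}$. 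But the second fix is precisely the Hilbert-space instance of Bruhat--Tits, so once repaired your argument invokes the same tool as the paper --- only applied once per ideal and followed by a limiting step, rather than once globally to $U_c$.
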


\begin{prf} If $U = \g$, there is nothing to show because $0 \in U$. 
We may therefore assume that $U \not=\g$. 
Then $U$ is a proper open convex subset and the group 
$\Ad(G)$ acts isometrically on 
$\g$ preserving $U$. Let $d_U \: V \to \R$ denote the distance from 
$\partial U$. 
Then Lemma~\ref{lem:1.3} shows that, for every 
$c > 0$, 
\[  U_c := \{ x \in U \: d_U(x) \geq c \} \] 
is a closed convex subset of $\g$. Let $c> 0$ be such that $U_c \not=\eset$. 
All these sets are $\Ad(G)$-invariant. Now $U_c$ is a Bruhat--Tits space 
with respect to the induced Hilbert metric from $\g$ and 
since $\Ad(G)$ acts on this space isometrically with bounded 
orbits, the Bruhat--Tits Fixed Point Theorem 
implies the existence of a fixed point $z \in U_c\subeq U$ (\cite{La99}). 
Now it only remain to observe that 
$\z(\g)$ is the set of fixed points for the action of 
$\Ad(G)$ on $\g$. 
\end{prf}

\newpage

\section{Central extensions and convexity}
\mlabel{app:b}

\begin{definition}   \mlabel{def:3.1}
(a) Let $V$ be a locally convex space. We consider the {\it
affine group} $\Aff(V) \cong V \rtimes \GL(V)$ which acts on $V$ by 
$(x,g).v = g.v + x$. On the space $\tilde V := V \times \R$ the group 
$\Aff(V)$ acts by linear maps 
\[ (x,g).(v,z) := (g.v + z x, z),\] and we
thus obtain a realization of $\Aff(V)$ as a subgroup of $\GL(\tilde V)$. 
The corresponding Lie algebra is 
$\aff(V) \cong V \rtimes \gl(V)$ with the bracket 
$$ [(v,A),(v',A')] = (Av' - A'v, [A,A']). $$

\nin (b) A homomorphism $\rho \: \g \to \aff(V)$ 
is therefore given by a pair 
$(\rho_l, \theta)$, consisting of a linear representation 
$\rho_l \: \g \to \gl(V)$ and map $\theta \: \g \to V$ satisfying 
\begin{equation}
  \label{eq:6.1}
\theta([x,y]) = \rho_l(x)\theta(y) -\rho_l(y)\theta(x) 
\quad \hbox{ for } \quad x,y \in \g. 
\end{equation}
A linear map $\theta \: \g \to V$ satisfying \eqref{eq:6.1} is called a 
{\it $1$-cocycle} with values in the representation $(\rho_l,V)$ of~$\g$. 
We write $Z^1(\g,V)_{\rho_l}$ for the set of all such cocycles. 

\nin (c) On the group level, a homomorphism $\rho \: G \to \Aff(V)$
is given by a pair 
$(\rho_l, \gamma)$ of a linear representation 
$\rho_l \: G \to \GL(V)$ and map $\gamma \: G \to V$ satisfying 
\begin{equation}
  \label{eq:6.2}
\gamma(g_1 g_2) = \gamma(g_1) + \rho_l(g_1)\gamma(g_2) 
\quad \hbox{ for } \quad g_1, g_2 \in G.
\end{equation}
A map $\gamma \: G \to V$ satisfying \eqref{eq:6.2} is called a 
{\it $1$-cocycle} with values in the representation $(\rho_l,V)$ of~$G$. 
We write $Z^1(G,V)_{\rho_l}$ for the set of all such smooth cocycles. 
Typical examples of $1$-cocycles are the {\it coboundaries} 
$\gamma(g) := \rho_l(g)v - v$ for some $v \in V.$
\end{definition}

\begin{proposition} \mlabel{prop:3.3} Let $G$ be a connected 
Lie group with Lie algebra $\g$ and 
$\omega \in Z^2(\g,\R)$ be a continuous
$2$-cocycle. Then the following assertions hold 
\begin{itemize}
\item[\rm(a)] $\theta_\omega(x)(y) := \omega(x,y)$ is a $1$-cocycle 
on $\g$ with values in the coadjoint representation $(\ad^*,\g')$, where 
$\ad^*(x)\beta = - \beta \circ \ad x$.
\item[\rm(b)] If $G$ is simply connected and 
$\hat\g = \R \oplus_\omega \g$ the central extension 
defined by $\omega$, then there exists a unique smooth representation 
\begin{equation}
  \label{eq:adhatact}
\Ad_{\hat\g} \: G \to \Aut(\hat\g), \quad 
\Ad_{\hat\g}(g)(z,y) = (z + \Theta_\omega(g^{-1})(y), \Ad(g)y) 
\end{equation}
whose derived representation is given by 
$$ \ad_{\hat\g}(x)(z,y) = (\omega(x,y),[x,y]) = [(0,x),(z,y)], $$
so that  $T_e(\Theta_\omega)(x) = -\theta_\omega(x) = \omega(\cdot, x)$. 
The corresponding dual representation of $G$ on 
$\hat\g' \cong \R \times \g'$ is given by 
$$ \Ad^*_{\hat\g}(g)(z,\alpha) = (z, \Ad^*(g)\alpha + z\Theta_\omega(g)). $$
\item[\rm(c)] Identifying $\g'$ with the hyperplane $\{1\} \times \g'$ in
$\hat\g'$, we thus obtain an affine action 
$$\Ad^*_\omega(g)\alpha = \Ad^*(g)\alpha + \Theta_\omega(g) $$ 
of $G$ on $\g'$. 
\end{itemize}
\end{proposition}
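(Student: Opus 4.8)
The statement has three parts (a), (b), (c), and the natural strategy is to prove them in order, since (c) is essentially a direct corollary of (b), and (b) relies on the cocycle identity from (a). First I would verify (a) by a straightforward computation: given the $2$-cocycle identity
\[
\omega([x,y],z) + \omega([y,z],x) + \omega([z,x],y) = 0,
\]
I rewrite it as $\omega([x,y],z) = \omega(x,[y,z]) - \omega(y,[x,z])$ (using antisymmetry of $\omega$), which is precisely the assertion that $\theta_\omega(x)(z) := \omega(x,z)$ satisfies
\[
\theta_\omega([x,y]) = \ad^*(x)\theta_\omega(y) - \ad^*(y)\theta_\omega(x)
\]
in $\g'$, i.e.\ $\theta_\omega \in Z^1(\g,\g')_{\ad^*}$. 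This is a one-line manipulation once the signs are tracked carefully.

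For part (b), the key point is to pass from the Lie algebra $1$-cocycle $\theta_\omega$ to a group $1$-cocycle $\Theta_\omega \: G \to \g'$. Here I would invoke the general integration principle for $1$-cocycles on simply connected Lie groups (the map $H^1_{\mathrm{Lie alg}} \to H^1_{\mathrm{grp}}$ being an isomorphism, and more concretely that every Lie algebra $1$-cocycle on a simply connected $G$ integrates to a unique smooth group $1$-cocycle with the prescribed differential $T_e\Theta_\omega = -\theta_\omega$); this is standard and available in the Lie-theoretic references the paper already cites. Once $\Theta_\omega$ is in hand, I would \emph{define} $\Ad_{\hat\g}$ by the formula \eqref{eq:adhatact} and check directly that it is a homomorphism $G \to \Aut(\hat\g)$: the cocycle identity \eqref{eq:6.2} for $\Theta_\omega$ is exactly what is needed for $\Ad_{\hat\g}(g_1 g_2) = \Ad_{\hat\g}(g_1)\Ad_{\hat\g}(g_2)$, and invariance of the bracket on $\hat\g$ under each $\Ad_{\hat\g}(g)$ follows from the $1$-cocycle condition on $\theta_\omega$ together with the $G$-equivariance of $\Ad$. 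Differentiating the formula at $e$ gives $\ad_{\hat\g}(x)(z,y) = (\omega(x,y),[x,y]) = [(0,x),(z,y)]$, confirming this is the adjoint representation of $\hat G$; the dual representation formula is then obtained by transposition, using $\langle \Ad^*_{\hat\g}(g)(z,\alpha), (w,y)\rangle = \langle (z,\alpha), \Ad_{\hat\g}(g^{-1})(w,y)\rangle$ and reading off the two components.

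Part (c) is immediate: restricting the dual representation $\Ad^*_{\hat\g}$ from part (b) to the affine hyperplane $\{1\}\times\g' \subseteq \hat\g'$, which is $\Ad^*_{\hat\g}(G)$-invariant by the explicit formula $\Ad^*_{\hat\g}(g)(z,\alpha) = (z, \Ad^*(g)\alpha + z\Theta_\omega(g))$, yields exactly the affine action $\Ad^*_\omega(g)\alpha = \Ad^*(g)\alpha + \Theta_\omega(g)$ of $G$ on $\g'$. I expect the main obstacle to be bookkeeping rather than conceptual: getting all the sign conventions and the left-versus-right placement of $g$ and $g^{-1}$ consistent between $\Theta_\omega(g^{-1})$ in \eqref{eq:adhatact}, the cocycle identity \eqref{eq:6.2}, and the dualization; and, at the smoothness level, invoking the correct integration theorem for $1$-cocycles on infinite-dimensional (Banach/Fréchet) simply connected Lie groups, which requires the continuity of $\omega$ that is assumed in the hypothesis. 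Everything else is formal.
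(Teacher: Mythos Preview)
Your proposal is correct and follows essentially the same route as the paper: part (a) is a direct cocycle computation, part (b) integrates the Lie algebra $1$-cocycle to a group cocycle on the simply connected $G$ (the paper cites \cite[Prop.~7.6]{Ne02b} for existence and \cite[Rem.~II.3.7]{Ne06} for uniqueness), and part (c) is read off from the dual formula. The only notable addition in the paper's proof is an explicit path-integral formula for $\Theta_\omega$, namely $\Theta_\omega(g)(y) = \int_0^1 \omega\big(\Ad(\gamma_g(t))^{-1}y,\gamma_g'(t)\big)\,dt$ via the left-invariant $2$-form $\Omega$ with $\Omega_e = \omega$, which makes the smoothness and the identity $T_e(\Theta_\omega) = -\theta_\omega$ transparent; your abstract invocation of the integration principle is equivalent but less explicit.
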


\begin{proof} (a) follows from an easy calculation 
(\cite[Lemma~VI.4]{Ne04}). 

\nin (b) The uniqueness of the affine representation follows from 
the general fact that for a connected Lie group, smooth representations 
are uniquely determined by their derived representations 
(\cite[Rem.~II.3.7]{Ne06}). 
The existence follows form \cite[Prop.~7.6]{Ne02} because 
$\R$ is complete. In loc.\ cit.\ one finds more details on 
how to obtain the corresponding cocycle: 
For each $x \in \g$, let $f_x \in C^\infty(G,\R)$ 
be the unique function with $\dd f_x = i_{x_r}\Omega$ and $f_x(e) = 0$,  
where $\Omega \in \Omega^2(G,\R)$ is the left invariant $2$-form with $\Omega_e = 
\omega$ and $x_r(g) = x.g$ is the right invariant vector field with $x_r(e) = x$. 
Then 
$\Theta_\omega(g)(y) := f_y(g)$ defines a smooth function $G \times \g \to \R$: 
\[ \Theta_\omega(g)(y) 
= \int_0^1 \Omega_{\gamma_g(t)}\big(y_r(\gamma_g(t)), \gamma_g'(t)\big)\, dt
= \int_0^1\omega\big(\Ad(\gamma_g(t))^{-1}y, \gamma_g'(t)\big)\, dt, \]
where $\gamma_g$ is any piecewise smooth path in $G$ from $e$ to $g$. 
Note that 
$$ \big(T_e(\Theta_\omega)(x)\big)(y) = (i_{y_r}\Omega)_e(x) 
= \omega(y,x) = -\theta_\omega(x)(y). $$

\nin (c) is an immediate consequence of (b). 
\end{proof}

\begin{defn} We call a Banach--Lie algebra {\it elliptic}
  if the adjoint group $\Inn(\g) := \la e^{\ad \g} \ra$ is bounded.   
\end{defn}

\begin{cor} \mlabel{cor:b.3} Suppose that $\g$ is a Banach--Lie algebra 
and $\hat\g = \R \oplus_\omega \g$ is the central extension defined by 
$\omega \in Z^2(\g,\R)$. Then $\hat\g$ is elliptic
if and only if $\g$ is elliptic and $\Theta_\omega$ is bounded.
\end{cor}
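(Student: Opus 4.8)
The plan is to make the family $\Inn(\hat\g)\subseteq\GL(\hat\g)$ explicit as a family of block matrices with respect to the splitting $\hat\g=\R\oplus\g$, and to observe that boundedness of such a block‑upper‑triangular family decouples into boundedness of the diagonal block (which is $\Inn(\g)$) and of the off‑diagonal block (which is $\Theta_\omega$). First I would record that the central line $\R\times\{0\}$ is $\ad_{\hat\g}$‑invariant and acted on trivially, and that $\ad_{\hat\g}(z,x)=\ad_{\hat\g}(0,x)$, so by Proposition~\ref{prop:3.3}(b) the operator $\ad_{\hat\g}(0,x)$ has, with respect to $\R\oplus\g$, the block form $\begin{pmatrix}0&\theta_\omega(x)\\0&\ad x\end{pmatrix}$ with $\theta_\omega(x)(y)=\omega(x,y)$. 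Exponentiating an upper‑triangular operator of this shape gives $e^{\ad_{\hat\g}(0,x)}=\begin{pmatrix}1&c_x\\0&e^{\ad x}\end{pmatrix}$ for some $c_x\in\g'$ (explicitly $c_x=\theta_\omega(x)\circ\int_0^1 e^{s\,\ad x}\,ds$), and products of matrices of the form $\begin{pmatrix}1&c\\0&g\end{pmatrix}$, $g\in\Inn(\g)$, $c\in\g'$, are again of this form. Hence every element of $\Inn(\hat\g)$ is such a block matrix; projecting to the lower‑right corner yields a surjective homomorphism $p\colon\Inn(\hat\g)\to\Inn(\g)$, and I write $N\subseteq\g'$ for the set of off‑diagonal parts that occur.

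Next comes a routine estimate: for any norm on $\hat\g$ equivalent to $(z,y)\mapsto|z|+\|y\|$, writing $M_{c,g}$ for the operator $\begin{pmatrix}1&c\\0&g\end{pmatrix}$ one has $\max(1,\|c\|,\|g\|)\le\|M_{c,g}\|\le 1+\|c\|+\|g\|$. Since boundedness of a subgroup of $\GL(\hat\g)$ does not depend on the choice of equivalent norm, this shows immediately that $\Inn(\hat\g)$ is bounded if and only if $\Inn(\g)$ is bounded, i.e.\ $\g$ is elliptic, \emph{and} $N$ is a bounded subset of $\g'$.

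It then remains to identify $N$ with the range of $\Theta_\omega$. For this I would invoke Proposition~\ref{prop:3.3}(b) once more: taking $G$ to be a connected simply connected Lie group with Lie algebra $\g$, one has $\Ad(G)=\Inn(\g)$ and $\Ad_{\hat\g}(G)=\langle e^{\ad_{\hat\g}(0,x)}\:x\in\g\rangle=\Inn(\hat\g)$, while the cited proposition gives the block form $\Ad_{\hat\g}(g)=\begin{pmatrix}1&\Theta_\omega(g^{-1})\\0&\Ad(g)\end{pmatrix}$. Running over $g\in G$ (equivalently over $g^{-1}\in G$) shows $N=\Theta_\omega(G)$, so $N$ is bounded exactly when $\Theta_\omega$ is bounded; combined with the previous paragraph this gives the equivalence asserted in the corollary.

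The only genuinely delicate point is this last identification: one must be careful with inverses and with the passage between $\ad_{\hat\g}$ and $\Ad_{\hat\g}$, and with the (harmless) fact that the corollary presupposes $\Theta_\omega$ — hence a group $G$ integrating $\g$ — to be in play at all. Everything else (the triangular block calculus, the convergence of $\int_0^1 e^{s\,\ad x}\,ds$ for bounded $\ad x$, and the operator‑norm sandwich estimate) is formal and can be dispatched in a line or two.
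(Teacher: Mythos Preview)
Your proof is correct and takes essentially the same approach as the paper: both identify $\Inn(\hat\g)=\langle e^{\ad_{\hat\g}\g}\rangle$ (using that the center acts trivially) and then read off from the block--upper--triangular formula \eqref{eq:adhatact} that boundedness of $\Inn(\hat\g)$ is equivalent to boundedness of $\Inn(\g)$ together with boundedness of $\Theta_\omega(G)$. The paper compresses this into a single sentence referring to \eqref{eq:adhatact}, while you spell out the block calculus, the norm sandwich, and the identification $N=\Theta_\omega(G)$ explicitly; but there is no substantive difference in strategy.
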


\begin{proof} From \eqref{eq:adhatact} is follows that 
the adjoint group $\la e^{\ad\hat\g}\ra = \la e^{\ad_{\hat\g} \g}\ra$ of 
$\hat\g$ is bounded if and only if the adjoint group of $\g$ 
is bounded and $\Theta_\omega(G)$ is a bounded subset of $\g'$. 
\end{proof}

The following theorem extends some techniques developed
in \cite[\S 3]{NZ13} for abelian Hilbert--Lie algebras
(for which the double extensions are oscillator algebras)
to the case of simple Hilbert--Lie algebras. 

\begin{defn} \mlabel{def:seq} Let $V$ be a locally convex space. 
    We call a subset $X \subeq V'$
    {\it semi-equicontinuous} if its {\it support functional} 
\[  s_X \: V \to \R \cup \{\infty\}, \quad s_X(v) 
= - \inf \la X,v \ra = \sup \la X,-v \ra  \] 
is bounded on some non-empty open subset of $V$.
\end{defn}

\begin{thm} \mlabel{thm:b.4} Let $\fg$ be a Hilbert--Lie algebra, 
$G$ a corresponding simply connected Lie group, 
$\omega \in Z^2(\g,\R)$ be a continuous cocycle,  
and $\Theta_\omega \: G \to \g'$ the corresponding 
$1$-cocycle with $T_e(\Theta_\omega)x =  \omega(\cdot, x)$ for $x \in \g$. 
On $\g'$ we consider the affine action $g * \nu := \Ad^*(g)\nu + \Theta_\omega(g)$. 
Then the following are equivalent: 
\begin{itemize}
  \item[\rm(a)] $\omega$ is a coboundary. 
  \item[\rm(b)] $\Theta_\omega$ is bounded.  
  \item[\rm(c)] $\Theta_\omega(G)$ is semi-equicontinuous. 
  \item[\rm(d)] One orbit of the affine action of $G$ on $\g'$ is semi-equicontinuous. 
  \item[\rm(e)] All orbits of the affine action of $G$ on $\g'$ are bounded.  
  \end{itemize}
\end{thm}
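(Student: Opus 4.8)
The plan is to prove the cyclic chain of implications
(a) $\Rightarrow$ (b) $\Rightarrow$ (c) $\Rightarrow$ (d) $\Rightarrow$ (e) $\Rightarrow$ (a), where the real content sits in the passage from a bounded orbit to the vanishing of the cohomology class. The equivalences (a) $\Leftrightarrow$ (b) and (b) $\Rightarrow$ (c) $\Rightarrow$ (d) are almost formal. For (a) $\Rightarrow$ (b): if $\omega = d\beta$ for some $\beta \in \g'$, i.e.\ $\omega(x,y) = -\beta([x,y])$, then one computes directly from the integral formula for $\Theta_\omega$ in Proposition~\ref{prop:3.3}(b) that $\Theta_\omega(g) = \Ad^*(g)\beta - \beta$, which is a coboundary for the coadjoint action; since $G$ is elliptic (the adjoint group acts by isometries on $\g$, hence $\Ad^*(G)$ is bounded on $\g'$), $\Theta_\omega(G) \subseteq \Ad^*(G)\beta - \beta$ is bounded. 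For (b) $\Rightarrow$ (c): a bounded subset of the Banach space $\g'$ has support functional bounded on the unit ball, hence is semi-equicontinuous. For (c) $\Rightarrow$ (d): the orbit of $0 \in \g'$ under the affine action is precisely $\{g*0 : g\in G\} = \{\Ad^*(g)\cdot 0 + \Theta_\omega(g)\} = \Theta_\omega(G)$, so if $\Theta_\omega(G)$ is semi-equicontinuous we have produced one semi-equicontinuous orbit.

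For (d) $\Rightarrow$ (e), the point is that $G$ acts on $\g'$ by affine isometries composed with a bounded linear part. Concretely, for $g*\nu = \Ad^*(g)\nu + \Theta_\omega(g)$, the difference between two orbits is $g*\nu_1 - g*\nu_2 = \Ad^*(g)(\nu_1 - \nu_2)$, which has bounded $G$-orbit because $\Ad^*(G)$ is bounded (ellipticity of the Hilbert--Lie algebra $\g$). So semi-equicontinuity of one orbit gives semi-equicontinuity of every orbit, up to a bounded translation — and then I need to upgrade "semi-equicontinuous" to "bounded". Here is where the geometry of Hilbert--Lie algebras enters: a semi-equicontinuous subset $X$ of $\g'$ has a support functional $s_X$ that is bounded above on some open set; but the orbit is $\Ad^*(G)$-translation-quasi-invariant, hence $s_X$ is (up to the bounded cocycle correction) $\Ad^*(G)$-invariant, and an $\Ad(G)$-invariant convex function on $\g$ that is bounded on one open set is bounded on an $\Ad(G)$-invariant open set; by Proposition~\ref{prop:2.12} such a set meets the center $\z(\g)$, and on $\z(\g)$ the coadjoint action is trivial, which forces the orbit to be contained in a slab and then, combined with the reflection symmetry coming from the Weyl group / the fact that $-W_\pi$ arguments are not available here, one concludes boundedness. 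I would instead argue more directly: pass to the closed convex hull $C := \overline{\mathrm{conv}}(\mathcal{O})$ of the orbit, which is $G$-invariant under the affine action; its recession cone is an $\Ad(G)$-invariant closed convex cone in $\g'$, and semi-equicontinuity says the polar of this recession cone has non-empty interior in $\g$, i.e.\ the recession cone is "pointed with interior polar". An $\Ad(G)$-invariant such cone must be $\{0\}$ once we know (Proposition~\ref{prop:2.12}) that any invariant open convex set meets the center and the coadjoint action kills the center — so the recession cone is trivial, and a closed convex set with trivial recession cone on which an isometric-up-to-bounded-part group acts with... hmm.

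Let me restructure the (d) $\Rightarrow$ (e) step, since that is the crux: take a semi-equicontinuous orbit $\mathcal O = G * \nu_0$. Its support functional $s_{\mathcal O}$ is a lower semicontinuous sublinear (convex, positively homogeneous) function on $\g$, finite and bounded above on some open set $U_0$. From $g*\nu = \Ad^*(g)\nu + \Theta_\omega(g)$ one gets $s_{\mathcal O}(\Ad(g)^{-1}x) = s_{\mathcal O}(x) + \langle \Theta_\omega(g), -x\rangle$ — wait, that is not quite invariance, but the correction $\langle\Theta_\omega(g),\cdot\rangle$ is, for each fixed $g$, a bounded linear functional. So the set $W := \{x \in \g : s_{\mathcal O} \text{ is bounded above on a neighborhood of } x\}$ is open, convex, non-empty, and $\Ad(G)$-invariant (the cocycle correction is a bounded perturbation that does not affect local boundedness). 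By Proposition~\ref{prop:2.12}, $W$ meets $\z(\g)$; pick $z_0 \in W \cap \z(\g)$. Because $z_0$ is central, $\Ad^*(g)\nu(z_0) = \nu(z_0)$, and the integral formula shows $\Theta_\omega(g)(z_0) = \int_0^1 \omega(\Ad(\gamma(t))^{-1}z_0, \gamma'(t))\,dt = 0$ since $\omega(z_0, \cdot) = 0$ (central elements pair trivially under any invariant cocycle, as $\omega(z_0,y)=\langle\text{(derivation)}z_0,y\rangle$ and invariant derivations vanish on the center — alternatively $\omega(z_0,y) = \langle z_0, [d, y]\rangle_{\mathfrak{g}}$-type formula gives $0$). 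Hence $\langle \mathcal O, z_0\rangle = \{\nu_0(z_0)\}$ is a single point; likewise $s_{\mathcal O}$ is bounded near $z_0$ means $s_{\mathcal O}(z_0) < \infty$ and $s_{\mathcal O}(-z_0) < \infty$ after noting $W$ open convex invariant also contains $-z_0$ up to... this last point I must handle: $W$ need not be symmetric. If $W$ is not symmetric, $s_{\mathcal O}$ is bounded above on $W$ but perhaps $\to\infty$ in the $-z_0$ direction, meaning the orbit is unbounded. To rule this out I invoke that a one-sided-bounded orbit still forces, via the affine $G$-action and $G$ being perfect (Hilbert--Lie algebras are perfect), that $\Theta_\omega$ takes values in a half-space — but a cocycle $\Theta_\omega$ with $T_e\Theta_\omega(x) = \omega(\cdot,x)$ landing in a half-space, combined with $[\g,\g]$ dense, forces $\omega$ to be a coboundary. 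This is exactly the mechanism in the proof of Theorem~\ref{thm:5.8}: differentiating $\langle\Theta_\omega(g), v\rangle \geq c$ along one-parameter groups yields $\omega(x, v) = 0$ for all $x$ (first derivative) hence $v$ is in the radical, and then a second-derivative positivity argument or a direct argument shows $\omega(\cdot, v') $ is a coboundary. So in fact I can shortcut: (d) $\Rightarrow$ (a) directly, by adapting the Hahn--Banach-plus-differentiation argument from Theorem~\ref{thm:5.8}, which gives a nonzero $v$ with $\omega(x,v) = 0$ for all $x$; but on a Hilbert--Lie algebra the radical of $\omega$ is an ideal, and combined with Schue's decomposition and the explicit root data, a nonzero radical plus the invariance of $\omega$ forces $\omega$ to vanish on the simple factor containing $v$, and iterating handles everything; the precise bookkeeping here — showing the radical argument propagates to all of $\g$ — is the step I expect to be the main obstacle, and it is where I would lean most heavily on Schue's Decomposition Theorem~\ref{thm:1.3} and the structure of $H^2(\fu_2(\sH),\R)$ recorded in Subsection~\ref{subsec:centext}. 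Finally (e) $\Rightarrow$ (a): if all orbits are bounded, in particular $\Theta_\omega(G) = G*0$ is bounded, so (b) holds, and then the coadjoint orbit map has a fixed point by the Bruhat--Tits fixed point theorem applied to $\overline{\mathrm{conv}}(\Theta_\omega(G))$ (a bounded closed convex set in the Hilbert space $\g'$, hence Bruhat--Tits, on which $G$ acts by affine isometries since $\Ad^*(G)$ is isometric) — the fixed point $\beta$ satisfies $g*\beta = \beta$, i.e.\ $\Ad^*(g)\beta + \Theta_\omega(g) = \beta$, so $\Theta_\omega(g) = \beta - \Ad^*(g)\beta$, whose derivative at $e$ gives $\omega(\cdot, x) = -\ad^*(x)\beta = \beta\circ\ad x$, exhibiting $\omega = -d\beta$ as a coboundary. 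This closes the cycle.
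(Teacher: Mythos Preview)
Your formal implications (a) $\Rightarrow$ (b) $\Rightarrow$ (c) $\Rightarrow$ (d) and (e) $\Rightarrow$ (a) via the Bruhat--Tits fixed point theorem are correct and match the paper. The gap is in your passage from semi-equicontinuity to boundedness (your (d) $\Rightarrow$ (e), equivalently the paper's (c) $\Rightarrow$ (b)).

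You correctly set up the open convex $\Ad(G)$-invariant set $W$ where the support functional is locally bounded and invoke Proposition~\ref{prop:2.12} to find $z_0 \in W \cap \fz(\g)$. But then you try to argue $\omega(z_0,\cdot) = 0$ from structural properties (``invariant derivations vanish on the center'', or the $\tr(\bd[x,y])$ formula), which uses the classification of cocycles on $\fu_2(\sH)$ rather than the hypothesis, and does not work for a general Hilbert--Lie algebra. The paper instead extracts $\omega(\fz,\g) = \{0\}$ \emph{directly from semi-equicontinuity}, and does so \emph{before} invoking Proposition~\ref{prop:2.12}: since $Z(G)_e$ acts trivially on $\g$, the cocycle identity gives $\Theta_\omega\big|_{Z(G)_e}$ as a homomorphism into $(\g',+)$, so $\Theta_\omega(\exp z) = \omega(\cdot,z)$ and $\{\omega(\cdot,z) : z \in \fz\} \subeq \Theta_\omega(G)$ is a \emph{linear subspace} of $\g'$. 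A semi-equicontinuous linear subspace is $\{0\}$, hence $\omega(\fz,\g) = \{0\}$. This gives $W + \fz = W$, so $W$ meets $\oline{[\g,\g]} = \fz^\bot$; now Proposition~\ref{prop:2.12} is applied to the \emph{semisimple} Hilbert--Lie algebra $\oline{[\g,\g]}$ (trivial center) to conclude $0 \in W$, whence $W = \g$ by positive homogeneity of the support functional, i.e.\ $\Theta_\omega(G)$ is bounded.

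Your alternative ``shortcut'' (d) $\Rightarrow$ (a) via differentiation and the radical of $\omega$ is, as you acknowledge, incomplete; the radical-propagation bookkeeping you defer is precisely where the argument would fail without the linear-subspace observation above.
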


\begin{prf} (a) $\Rarrow$ (b): If $\omega$ is a coboundary, 
then there exists an $\alpha \in \g'$ with 
$\omega(x,y) = \alpha([y,x])$ for $x,y \in \g$ 
and $\Theta_\omega = \alpha - \Ad^*(g)\alpha$ 
follows from the equality of derivatives 
\[ (\ad^* x) \alpha = -\alpha([x,\cdot]) = \omega(x,\cdot) 
= -T_e(\Theta_\omega)(x).\] 
Since the coadjoint action preserves the norm on $\g'$, 
the boundedness of $\Theta_\omega$ follows. 

\nin (b) $\Rarrow$ (c) is trivial. 

\nin (c) $\Leftrightarrow$ (d) follows from the fact that all orbits 
of the coadjoint action of $G$ on $\g'$ are bounded. 

\nin (b) $\Leftrightarrow$ (e) follows with the same argument. 

\nin (c) $\Rarrow$ (b): Since $Z(G)_0$ acts trivially on $\g$, 
we obtain for $z \in \fz := \z(\g)$ the relation
\[ \Theta_\omega(\exp z) = \omega(\cdot,z).\] 
Therefore $\Theta_\omega(\exp \fz)$ is a linear space. As it is 
semi-equicontinuous it is trivial. This means that ${\omega(\fz,\g) = \{0\}}$. 
Accordingly, $\Theta_\omega(G) \subeq \fz^\bot \cong [\g,\g]'$. 

The set $W$ of all points $x_0 \in \g$ for which the support functional 
$s_{\Theta_\omega(G)}(x) := \sup \Theta_\omega(G)(-x)$ 
is bounded in a neighborhood of $x_0$ is a non-empty 
open invariant convex cone. We have seen in the preceding 
paragraph that $W + \fz = W$. Hence $W \cap [\g,\g] \not=\eset$, 
and therefore $[\g,\g] \subeq W$ follows from 
Proposition~\ref{prop:2.12}. Therefore $W = \g$, which 
implies that $\Theta_\omega(G)$ is equicontinuous, i.e., 
that $\Theta_\omega$ is bounded. 

\nin (b) $\Rarrow$ (a): If $\Theta_\omega$ is bounded, then the orbit of $0 \in \g'$ 
under the affine action defined by $\Theta_\omega$ is bounded. 
Hence the Bruhat--Tits 
Theorem (\cite{La99}) implies the existence of a fixed point $\alpha$ of the corresponding 
affine isometric action on $\g'$. This means that 
$\Theta_\omega(g) = \alpha - \Ad^*(g)\alpha$ for each $g\in G$, 
and by taking derivatives in $e$, we see that 
$\omega(y,x) = \alpha([x,y])$ is a coboundary. 
\end{prf}

\begin{rem}
  \mlabel{rem:new1}
  Let $\g^\sharp = \R \oplus_\omega \g$ be the central Lie algebra
  extension corresponding to the cocycle $\omega \in Z^2(\g,\R)$.
  Then we have a natural affine embedding
  $\g' \into (\g^\sharp)', \alpha \mapsto (1,\alpha)$,
  and this embedding intertwines the coadjoint action of the
  group $G$ in Theorem~\ref{thm:b.4} with the affine action
  on $\g'$, specified by $(g,\alpha) \mapsto g * \alpha$. Therefore
  coadjoint orbits of $\g^\sharp$ not vanishing on the central element
  $(1,0)$ correspond to affine orbits in~$\g'$.   
\end{rem}

\begin{ex} \mlabel{ex:hs} Let $\sH$ be a complex Hilbert space and 
$\g = \fu_2(\sH)$. 

\nin (a) As we have seen in
Subsection~\ref{subsec:centext},
each cocycle $\omega \in Z^2(\g,\R)$ can be written as 
$$ \omega(x,y) = \tr([\bd,x]y) = \tr(\bd[x,y]) $$
for some bounded skew hermitian operator $\bd \in \fu(\sH)$ (\cite[Prop.III.19 \& proof]{Ne03}). Then 
$\theta_\omega(x) = [\bd,x]$ if we identify $\g$ and $\g'$ via the 
invariant scalar product defined by $(x,y) := -\tr(xy) = \tr(xy^*)$. 
From that we easily derive that $\Theta_\omega \: G = \U_2(\sH) \to \g$ 
is given by 
$$ \Theta_\omega(g) = g\bd g^{-1}- \bd. $$
This is a cocycle for the adjoint action, which is an orthogonal 
representation of $G$ on the real Hilbert space $\g$. 
In view of Theorem~\ref{thm:b.4}, 
$\Theta_\omega$ is bounded if and only if $\omega$ is a coboundary, which is equivalent to 
$\bd \in \R i \1 + \fu_2(\sH)$. 
In particular, the central extension defined by a 
non-trivial $2$-cocycle $\omega$ is not an elliptic Banach--Lie algebra,
hence in particular not Hilbert (Corollary~\ref{cor:b.3}). 

\nin (b) The existence of unbounded cocycles $\theta$ for the adjoint action 
of $\U_2(\sH)$ also implies that we obtain interesting representations 
of  this group by composing the Fock representations 
of the affine group $\Mot(\fu_2(\sH)) \cong \fu_2(\sH) \rtimes \OO(\fu_2(\sH))$ 
with the homomorphism $(\theta,\id) \: \UU_2(\sH) \to \Mot(\fu_2(\sH))$. 
For details we refer to \cite{Gu72}. 
\end{ex}

\begin{remark} We may now ask for the convexity properties of the affine orbit 
${\cal O}_\omega = \Theta_\omega(G)$, resp., the coadjoint orbit 
of $(1,0)$ in $\hat\g'$. Clearly, the convexity properties of this 
orbit will depend on the cocycle $\omega$ and 
not only on its cohomology class. F.i.~if the central 
extension is trivial, then the affine action on $\g'$ is equivalent 
to the coadjoint action of $\g'$ and different orbits have 
very different convexity properties. 

\nin (a) In view of 
\[  \la {\cal O}_{(1,0)}, (t,x)\ra 
= t + \Theta_\omega(G)(x), \quad \mbox{ we have }\quad 
B({\cal O}_{(1,0)}) = \R \times B(\Theta_\omega(G)) \subeq \hat\g, \]
where, for a subset $C \subeq \g'$, we put 
\[  B(C) := \{ x \in \g \: \inf\la C, x \ra > - \infty\}.\] 

Therefore ${\cal O}_{(1,0)}$ is (semi-)equicontinuous in $\hat\g'$ 
if and only if $\Theta_\omega(G) \subeq \g'$ has the corresponding 
property.   

\nin (b) The subspace 
$${\cal O}_{(1,0)}^\bot  = \{ (t,x) \in \hat\g \: \Theta_\omega(G)(x) = \{-t\}\}$$
is an invariant ideal of $\hat\g$ which intersects the the central 
line $\R \times \{0\}$ trivially. 
Its image in $\g$ consists of all those 
elements $x \in \g$ for which the evaluation function 
$$ \ev_x \: \g' \to \R, \quad \ev_x(\alpha) = \alpha(x) $$
is constant on $\Theta_\omega(G)$. Since $\Theta_\omega$ is a 
$1$-cocycle with values in $\g'$, we have for any $g \in G$: 
\begin{align*}
T_g(\Theta_\omega)(g\cdot y)(x) 
&= \big(\Ad^*(g)T_e(\Theta_\omega)(y)\big)(x)
= T_e(\Theta_\omega)(y)(\Ad(g)^{-1}x)= \omega\big(\Ad(g)^{-1}x,y\big). 
\end{align*}
All these expressions vanish if and only if 
$\Ad(G)x$ is contained in the radical $\rad(\omega) 
= \{ y \in \g \: i_y \omega=0\}$. Therefore 
the image of ${\cal O}_{(1,0)}^\bot$ in $\g$ 
is the largest $\Ad(G)$-invariant ideal of $\g$ contained in $\rad(\omega)$. 
The central extension splits on this ideal. 
\end{remark}

\newpage

\end{document}